\def\pb#1{\save[]+<16 pt,0 pt>:a(#1)\ar@{pb{}}[]\restore}
\newcommand{\Sq}{{\rm F}}
\newcommand{\sq}{{\rm f}}
\newcommand{\R}{{\mathcal R}}
\newcommand{\ie}{\textit{i.e.}}
\newcommand{\eg}{\textit{e.g.}}
\newcommand{\cf}{\textit{cf.}}
\newcommand{\viz}{\textit{viz.}}
\newcommand{\red}{\ep_{G,\A}}
\newcommand{\Inst}{\rho_{G,\A}}
\newcommand{\pre}{\,{=\!\!\prec}\,}
\newcommand{\prel}{\,{\prec}\,}
\newcommand{\arr}[1]{{{\stackrel{#1}{\longrightarrow}}}}
\def\mxxth{\mathsurround=0pt}
\def\openup{\afterassignment\xxpenup\dimenxx=}
\def\xxpenup{\advance\lineskip\dimenxx
  \advance\baselineskip\dimenxx \advance\lineskiplimit\dimenxx}
\def\eqalign#1{\,\vcenter{\openup1\jot \mxxth
  \ialign{\strut\hfil$\displaystyle{##}$&$\displaystyle{{}##}$\hfil
     \crcr#1\crcr}}\,}
\newif\ifdtxxp
\def\displxxy{\global\dtxxptrue \openup1\jot \mxxth
  \everycr{\noalign{\ifdtxxp \global\dtxxpfalse
      \vskip-\lineskiplimit \vskip\normallineskiplimit
      \else \penalty\interdisplaylinepenalty \fi}}}
\def\displaylines#1{\displxxy
  \halign{\hbox to\displaywidth{$\hfil\displaystyle##\hfil$}\crcr
      #1\crcr}}
\newskip\mycntring \mycntring=0pt plus 1000pt minus 1000pt
\def\leqalignno#1{\displxxy \tabskip=\mycntring
  \halign to\displaywidth{\hfil$\displaystyle{##}$\tabskip=0pt
      &$\displaystyle{{}##}$\hfil\tabskip=\mycntring
      &\kern-\displaywidth\rlap{$##$}\tabskip=\displaywidth\crcr
      #1\crcr}}
\DeclareSymbolFont{operatorsstix}{LS1}{stix}{m}{n}
\DeclareSymbolFont{arrows1stix}{LS1}{stixsf}{m}{n}
\DeclareSymbolFont{arrows2stix}{LS1}{stixsf}{m}{it}
\DeclareMathSymbol{\upand}{\mathbin}{operatorsstix}{"C4}
\DeclareMathSymbol{\nvrightarrow}             {\mathrel}{arrows1stix}{"F6}
\DeclareMathSymbol{\lefttail}                 {\mathrel}{arrows2stix}{"B2}
\newcommand{\tri}{\trianglelefteq}
\newcommand{\DA}{{D({\A})}}
\newcommand{\DB}{{D({\B})}}
\newcommand{\DAB}{{D(\A, \B)}}
\newcommand{\Rel}{{\rm Rel}}
\newcommand{\pto}{\rightharpoonup}
\renewcommand{\mod}[1]{| #1 |}
\renewcommand{\phi}{\varphi}
\newcommand{\den}[1]{\llbracket #1 \rrbracket}
\newcommand{\last}{{\rm last}}
\newcommand{\ep}{\epsilon}
\newcommand{\esswp}{esps}
\newcommand{\EX}{{\bf  E}}
\def\profto{\!\!\!\xymatrix@C-.75pc{\ar[r]|-{\! +\!} &}\!\!\! }
\newcommand{\bel}{\sqsubseteq}
\newcommand{\Var}{\mathrm{vc}}
\def\hash{{\#}}
\newcommand{\rstd}{\mathbin\upharpoonright}
 \newcommand{\Strat}{{\mathbf{Strat}}}
\newcommand{\all}{\forall}
\newcommand{\A}{{\mathcal A}}
\newcommand{\B}{{\mathcal B}}
\newcommand{\C}{{\mathcal C}}
\newcommand{\D}{{\mathcal D}}
\newcommand{\id}{\textrm{id}}
\newcommand{\pol}{\textit{pol}}
\newcommand{\opmove}{{\boxminus}}
\newcommand{\plmove}{{\boxplus}}
\newcommand{\sncirc}{\oast}
\newcommand{\co}{\mathbin{\textit{co}}}
\newcommand{\vvbar}{{\mathbin{\parallel}}}
\newcommand{\scirc}{{{\odot}}}
\newcommand{\cov}{{{\mathrel-\joinrel\subset}}}
\newcommand{\longcov}[1]{{\stackrel{#1}{\mathrel-\joinrel\relbar\joinrel\subset\,}}}
\newcommand{\imc}{\rightarrowtriangle}
\newcommand{\setdif}{\setminus}
\newcommand{\sig}{\sigma}
\newcommand{\CC}{{\rm C\!\!C}}
\newcommand{\cc}{c\!c}
\newcommand{\conf}[1]{\:\!{\mathcal C}(#1)^o}
\newcommand{\iconf}[1]{\:\!{\mathcal C}(#1)}
\newcommand{\parrow}{\rightharpoonup}
\newcommand{\set}[2]{{\{  #1\  | \  #2 \} }}
\newcommand{\setof}[1]{{\{ #1 \} }}
\newcommand{\eqdef}{\coloneqq}
\newcommand{\iso}{\cong}
\newcommand{\expn}{{ expn}}
\newcommand{\DS}{{\mathbf{SD}}}
\newcommand{\SD}{{\mathbf{SD}}}
\newcommand{\RED}{{\bf Red}}
\newcommand{\GAMESIG}{{\bf Sig%Games
}}
\newcommand{\al}{\alpha}
\newcommand{\be}{\beta}
\newcommand{\ga}{\gamma}
\theoremstyle{plain}% default
\title{Concurrent Games over Relational Structures:\\The Origin of Game Comonads} %and coalgebras}
\author{Yo\`av Montacute}
\affiliation{%
  \institution{University of Cambridge}
 % \streetaddress{P.O. Box 1212}
 % \city{Dublin}
 % \state{Ohio}
  \country{United Kingdom}
 % \postcode{43017-6221}
}
\author{Glynn Winskel}
\affiliation{%
  \institution{University of Strathclyde}
 % \streetaddress{1 Th{\o}rv{\"a}ld Circle}
%  \city{Hekla}
  \country{United Kingdom}}
\email{}
\begin{abstract}
Spoiler-Duplicator games are used in finite model theory to examine the expressive power of logics.
Their strategies have recently been reformulated as coKleisli maps of game comonads over relational structures, providing 
new results in finite model theory via categorical techniques.
We present a novel framework for studying Spoiler-Duplicator games by viewing them as event structures.
%This allows us to provide both concurrent and sequential representations of the games. 
%; the concurrent versions often provide finite representations of games which are otherwise infinite.
We introduce a first systematic method for constructing comonads for all one-sided Spoiler-Duplicator games: 
game comonads are now realised by adjunctions to a category of games, generically constructed from a comonad in a bicategory of game schema (called signature games). 
Maps of the constructed categories of games are strategies and generalise coKleisli maps of game comonads;  in the case of one-sided games they are shown to coincide with suitably generalised homomorphisms.  Finally, we provide characterisations of strategies on two-sided Spoiler-Duplicator games; in a common special case they coincide with spans of event structures.
%Game comonads are now by realised by adjunctions to a category of $\delta$-games; a comonad $\delta$ in a bicategory of signature games gives a generic specification of Spoiler-Duplicator interaction. We characterise the category of $\delta$-games in the one-sided case.  
%Spoiler-Duplicator games are used in finite model theory to examine the expressive power of logical fragments. These games recently received a categorical characterisation in the form of game comonads which capture strategies in the game. This is used to define new results in finite model theory via categorical techniques. In this paper, we provide a novel framework for viewing these games as concurrent structures. This allows us to provide both a concurrent and sequential representation of these games. The concurrent versions often provide finite representations of the games, which are otherwise usually infinite. We use this framework to provide a method for constructing comonads for general Spoiler-Duplicator games. In some cases this coincides with known game comonads by adjunction; e.g.\ the pebbling comonad.
%Finally, for one-sided games we provide a full characterisation of the category that is given rise to by the comonad and in which morphism are associated with winning strategies in its respective games.
\end{abstract}
\begin{document}
\maketitle
\section{Introduction}
 Spoiler-Duplicator games are used in finite model theory to study the expressive power of logical languages. 
These games make use of resources (\eg~number of rounds) which correspond to different restrictions on fragments of first-order logic (\eg~quantifier rank). One may venture beyond Spoiler-Duplicator games to a framework that supports both resources and structure. In the semantic world, concurrent games and strategies based on event structures have been demonstrated to achieve that goal: %concurrent games and strategies 
they can be composed and also support quantitative extensions, \eg~to probabilistic and quantum computation~\cite{Probstrats,hugo-thesis,concquantumstrats,POPL19,POPL20}, as well as resource usage~\cite{aurore}. 
As a result, concurrent games and strategies 
provide a rich arena in which structure meets power---a line of 
research that became prominent with the paper of Abramsky, Dawar and Wang \cite{AbramskyDawarWang} on the pebbling comonad in finite model theory.

That seminal work has been followed up by many more examples of ``game comonads,''  for example~\cite{AbramskyDawarWang,relatingstructure,Lovasz,pebblerelationgame}. 
Though obtaining the comonads has been something of a mystery, largely because algebraic structures do not in themselves express the operational features of games. 
To some extent this has been addressed through \emph{arboreal categories}~\cite{arboreal,lineararboreal} which attempt to axiomatise the Eilenberg-Moore coalgebras of the known examples; as their name suggests arboreal categories and their ``covers'' impose a treelike behaviour on relational structures so making them amenable to operational concerns.
But, as Bertrand Russell joked, postulation has ``the advantages of theft over honest toil''~\cite{Russell}. And,
%Besides, 
the coalgebras of newer comonads such as the all-in-one pebble-relation comonad \cite{pebblerelationgame} are not arboreal and consequently %are pushing 
the original definition is being readdressed in \emph{linear arboreal categories}~\cite{lineararboreal}.

Our contribution is to provide  a sweeping definition of Spoiler-Duplicator games out of which game comonads, both 
old  and new, their coalgebras and characterisations are derived as general theorems.  To do so,
we introduce concurrent games and strategies over relational structures.
Through their foundation in event structures, concurrent games and strategies represent closely the operational nature of games, their interactivity, dependence, independence, and conflict of moves.
This makes the usual, largely informal and implicit description of Spoiler-Duplicator games formal and explicit. 
In particular, the concurrency expressible in event structures plays an essential role in enforcing the independence of moves.  

A game will be represented by an event structure in which each event stands for a move occurrence of Player (Duplicator) or Opponent (Spoiler); a move is associated with a constant or variable. %, in such a way that no two concurrent events are associated with the same constant or variable. 
Positions of the game are represented by configurations of the event structure. 
%The condition on the association of variables ensures that events with the same variable are totally ordered within a configuration; so it makes sense to talk of the  ``latest move'' associated with a variable within a configuration.
With respect to a many-sorted relational structure $\mathcal A$,
a strategy (for Player) assigns values in $\mathcal A$ to Player variables  (those associated with Player moves) in response to \emph{challenges} or assignments of values in $\mathcal A$ to variables by Opponent.  
A winning condition specifies those configurations, with latest assignments, which represent a win for Player.  
A strategy is winning if it results in a winning configuration of the game regardless of Opponent's strategy.  

We exploit that such games form a bicategory %:  a winning strategy {\em from} a game $G$ {\em to}  a game $H$ is a winning strategy in the compound game  $G^\perp \vvbar H$, formed as the dual game of $G$,   in which the roles of Player and Opponent are reversed, in parallel with $H$.  Then the composition of a  winning strategy from $G$ to $H$ and a winning strategy from $H$ to $K$ is obtained by playing the strategies against each other over the common game $H$.   
%
%As a first exploration of the expressivity of this framework,
%We 
to exhibit  %full and 
a deconstruction of traditional Spoiler-Duplicator games;   the  choice of Spoiler-Duplicator game is parameterised by a comonad $\delta$ in the bicategory--- $\delta$ specifies  the allowed pattern of interaction between Spoiler and Duplicator.  Special cases follow from particular choices of $\delta$.  They include 
 Ehrenfeucht-Fra\"iss\'e games, 
pebbling games, such as the $k$-pebble game~\cite{AbramskyDawarWang} and the all-in-one $k$-pebble game~\cite{pebblerelationgame};
and their versions on transition systems, \ie~simulation and trace inclusion. 
%There are faithful embeddings of relational structures with homomorphisms into the category of deterministic strategies over relational structures. 
 %with homomorphisms as well as coKleisli maps  with respect to pebbling comonads,  %(The results are subject to minor restrictions on the algebras.) 
 
With respect to a choice of $\delta$, we obtain a category of strategies on a Spoiler-Duplicator game $\SD_\delta$; and, 
%when the games are one-sided
for suitable one-sided games, a comonad $\Rel_\delta(\_)$ on the category of appropriate relational structures. 
For general $\delta$ we provide characterisations of the strategies $\SD_\delta$. In the one-sided case, the strategies $\SD_\delta$ correspond to $\delta$-homomorphisms---strictly more general than coKleisli maps of the comonad $\Rel_\delta(\_)$; we delineate when $\SD_\delta$ coincides with the coKleisli category. We characterise the Eilenberg-Moore coalgebras of $\Rel_\delta(\_)$ as relational structures which also possess a certain event-structure shape---the coalgebras are not always (linear) arboreal categories. 
The derived comonads on relational structures coincide with those in the literature we know, particularly for the examples mentioned.  %In particular, 
This provides a %first 
systematic method for constructing comonads for one-sided Spoiler-Duplicator games, one in which 
%which are describable in our framework. % and satisfy a prescribed set of conditions.
game comonads are now realised by adjunctions to a category of games; this
reconciles the composition of strategies as coKleisli maps, prevalent in the category-theoretic approach to finite model theory, %(\eg~\cite{AbramskyDawarWang,relatingstructure,Lovasz,pebblerelationgame})
 with the more standard composition of strategies following the paradigm of Conway and Joyal~\cite{conway,joyal}. 

\section{Preliminaries}\label{sec:pre}
We cover the basic definitions of games %based on 
as event structures %. For further details see 
\cite{lics11}.
%\vspace{-0.5em}
\setcounter{subsection}{1}
\subsubsection{Event structures.}
An \emph{event structure}  is a triple $E= (\mod E, \leq, \hash)$, where $\mod E$ is a set of  \emph{events}, $\leq$ is a partial order of \emph{causal dependency} on $E$, and $\hash$ is a binary irreflexive symmetric relation, called the \emph{conflict relation}, such that $ \set{e'}{e'\leq e}$ is finite and
$e\hash e'{\leq} e'' $ implies $e\hash e''$.
Event structures have an accompanying notion of a state or history:
a {\em configuration} is a (possibly infinite) subset $x\subseteq \mod E$ which is \emph{consistent}, \ie~$\forall e, e'\in x$. $(e,e')\notin\#$, and \emph{down-closed}, \ie~$ e'\leq e\in x  \hbox{ implies } e' \in x$. 
 We denote by $\iconf E$ the set of configurations of $E$ and by $\conf E$ the subset of finite configurations.  
 
 Two events $e$, $e'$ are called \emph{concurrent} when they are neither in conflict nor causally related; we denote this by $e\co e'$.  
 In games, the relation $e\imc e'$ of {\em immediate dependency}, meaning that $e$ and $e'$ are distinct with $e\leq e'$ and no events are in between, plays an important role. 
  We write $[X]$ for the down-closure of a subset of events $X$; when $X$ is a singleton $\setof e$, we denote its down-closure by $[e]$. 
  Note that $[e]$ is necessarily a configuration.  
 Two events $e$ and $e'$ are in {\em immediate conflict} whenever $e\hash e'$ and if $e{\geq} e_1\hash e_1'{\leq} e'$ then $e=e_1$ and $e'=e_1'$. 
To avoid ambiguity, we sometimes distinguish the precise event structure with which a relation is associated and write, for instance, $\leq_E$,  $\imc_E$, $\hash_E$ and  $\co_E$.

In diagrams, events are depicted as squares, immediate causal dependencies by arrows and immediate conflicts by squiggly lines. For example, the diagram in Figure \ref{eventstructure}
\begin{comment}
\begin{center}{
\small $\xymatrix{
\boxempty&\boxempty&\\
\boxempty\ar@{|>}[u]\ar@{|>}[u]&\boxempty\ar@{|>}[ul]\ar@{|>}[u]\ar@{~}[r]& \boxempty
}$ }
\end{center}
\end{comment}
represents an event structure with five events.
 The event to the far-right is in immediate conflict with one event---as shown, but in non-immediate conflict with all events besides the one on the lower far-left, with which it is concurrent. 

 \begin{figure}
    \centering
     \includegraphics[scale=0.30]{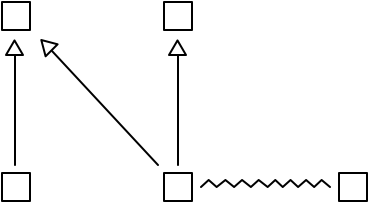}    \caption{An event structure}
     \label{eventstructure}
\end{figure}
 %Note that every configuration containing the top left event cannot contain the event on the bottom right, with which it is in conflict, though not immediate conflict.  
Let $E$ and $E'$ be event structures. 
A {\em  map of event structures} $f:E\rightarrow E'$ is a partial function $f:\mod E\parrow \mod{E'}$ on events such that, for all $x\in\iconf E$, we have $fx\in\iconf{E'}$, and if $e, e' \in x$ and $f(e)=f(e')$, with both defined, then $e=e'$. 
%Note that we can replace possibly infinite configurations $\iconf E$ by finite configurations $\conf E$; this is because every configuration is the union of finite configurations and direct image preserves such unions.
Maps of event structures compose as partial functions.  
Notice that  for a total map $f$, the condition on maps says that $f$ is {\em  locally injective}, in the sense that w.r.t.~every configuration $x$ of the domain, the restriction of $f$ to a function from $x$ is injective; the restriction %of a total map $f$ 
to a function from $x$ to $fx$ is thus bijective.

Although a map $f:E\to E'$ of event structures does not generally preserve causal dependency, it does reflect   
causal dependency  locally: whenever $e, e'\in x\in \iconf E$ and $f(e')\leq f(e)$, with $f(e)$ and $f(e')$  both defined, then $e'\leq e$.  
Consequently,
$f$  preserves the concurrency relation: if $e\co e'$
then $f(e)\co f(e')$, when defined.
A total map of event structures is called {\em rigid} when it preserves causal dependency.

\begin{proposition}[\cite{Winskel07}]
A total map $f:E\to E'$ of event structures is rigid iff  for all   $x\in\iconf E$ and $y\in\iconf{E'}$, if $y\subseteq f x$, then there exists $z\in\iconf{E}$ such that $z\subseteq x$ and $f z = y$.
Moreover, $z$ is necessarily unique. %by the local injectivity of $f$.  
\end{proposition}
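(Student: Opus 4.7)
The plan is to prove both implications directly from the definitions, using the fact that any map of event structures is locally injective on configurations.

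For the forward direction, assume $f$ is rigid and take $x\in\iconf E$, $y\in\iconf{E'}$ with $y\subseteq fx$. I would define
\[
z \;\eqdef\; \{\, e \in x \mid f(e)\in y\,\}
\]
and verify the three required properties. Clearly $z\subseteq x$. Since $f$ restricted to $x$ is a bijection onto $fx$ (by local injectivity of $f$ and totality on $x$), we get $fz=y$ immediately. To see $z\in\iconf E$, consistency is inherited from $x$; for down-closure, if $e'\leq e\in z$, then $e'\in x$ since $x$ is down-closed, and rigidity of $f$ gives $f(e')\leq f(e)\in y$, whence $f(e')\in y$ by down-closure of $y$, so $e'\in z$. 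Uniqueness of $z$ falls out of the same local-injectivity observation: any $z'\subseteq x$ with $fz'=y$ is forced to equal $f^{-1}(y)\cap x$ since $f|_x$ is injective.

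For the backward direction, assume the lifting condition and let $e'\leq e$ in $E$; I want to show $f(e')\leq f(e)$. Take $x=[e]$, which is a configuration, and $y=[f(e)]$. The image $fx\in\iconf{E'}$ contains $f(e)$ and is down-closed, so $y=[f(e)]\subseteq fx$. Applying the hypothesis yields $z\in\iconf E$ with $z\subseteq [e]$ and $fz=[f(e)]$. Since $f(e)\in fz$, there is some $e_0\in z$ with $f(e_0)=f(e)$; as $z\subseteq x$ and $f$ is locally injective on $x$, we conclude $e_0=e$, hence $e\in z$. Down-closure of $z$ then gives $e'\in z$, so $f(e')\in fz=[f(e)]$, \ie~$f(e')\leq f(e)$, as required.

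There is no substantial obstacle; the only point that requires a little care is the inclusion $[f(e)]\subseteq f[e]$ in the backward direction, which is not a consequence of rigidity (we are trying to prove rigidity) but rather of the basic fact that $f[e]$ is already a configuration of $E'$ containing $f(e)$ and hence contains its down-closure. Both directions rest on the same two ingredients, local injectivity of $f$ on configurations and the interplay between down-closure in $E$ and in $E'$, and uniqueness of $z$ is a direct consequence of local injectivity, so it can be stated once and applied to both directions.
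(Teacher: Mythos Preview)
Your proof is correct. Note, however, that the paper does not actually supply its own proof of this proposition: it is stated with a citation to~\cite{Winskel07} and left unproved. Your argument is the standard one and matches what one finds in the cited source; there is nothing to compare against in the present paper.
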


Open maps give a general treatment of bisimulation; they are defined by a path-lifting property~\cite{JNW}. 
Here a characterisation of open maps of event structures will suffice. %Open maps will play a role in Section~\ref{sec:SDconstrn}.

\begin{proposition}[open map~\cite{JNW}]\label{prop:openmap}{
A total map $f:E\to E'$ of event structures is {\em open} iff it is rigid and  for all   $x\in\iconf E$ and $y\in\iconf{E'}$,  $f x \subseteq y$ implies that there exists $ z\in\iconf{E}$ such that $x\subseteq z$ and $f z = y$.}
\end{proposition}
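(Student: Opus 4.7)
The plan is to unpack the JNW path-lifting definition of openness in the setting of event structures, where paths are finite configurations (viewed as event structures with inherited causal order and empty conflict) and path morphisms are inclusions between them. JNW openness of $f : E \to E'$ then asserts that every commuting square with a path inclusion $P \hookrightarrow Q$ on the left, maps $p : P \to E$ on top and $q : Q \to E'$ on the bottom satisfying $f \circ p = q|_P$, admits a diagonal $Q \to E$ extending $p$ and lying over $q$. The task is to match this single lifting property against the two stated conditions.

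For $(\Leftarrow)$, assume $f$ is rigid and satisfies the extension property. Given a JNW square, set $x := p(P) \in \iconf{E}$ and $y := q(Q) \in \iconf{E'}$; commutativity gives $fx \subseteq y$. The extension property produces $z \in \iconf{E}$ with $x \subseteq z$ and $fz = y$. Local injectivity makes $f|_z : z \to y$ a bijection, and $q$ restricts to a bijection $Q \to y$. Composing $Q \to y$ with $(f|_z)^{-1} : y \to z$ yields a map $Q \to E$ extending $p$, and rigidity of $f$ ensures that this composite preserves causal dependency, providing the required diagonal.

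For $(\Rightarrow)$, assume $f$ is JNW-open. The extension property is obtained by applying openness to the square whose left side is the inclusion $fx \hookrightarrow y$ (transported via the bijection $f|_x : x \to fx$ to an abstract path), top is the inclusion $x \hookrightarrow E$, and bottom is the inclusion $y \hookrightarrow E'$; the resulting diagonal $y \to E$ has image $z \in \iconf{E}$ with $x \subseteq z$ and $fz = y$. Rigidity is obtained via the preceding proposition by establishing its downward-lifting condition: given $y \subseteq fx$, construct $z \subseteq x$ with $fz = y$ by pointwise preimage under the bijection $f|_x : x \to fx$, and verify down-closedness of $z$ in $E$ by applying JNW openness to lifting squares built around the down-closure paths of $y$ inside $E'$, using uniqueness of preimages to force predecessors of lifted events to lie in $z$.

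The main obstacle is the rigidity half of $(\Rightarrow)$, where JNW openness must be leveraged to transport the down-closure structure of $y$ in $E'$ back into $x$. The key point is that for each $e' \in y$ the downward path $[e'] \hookrightarrow y$ lifts to a path inside $E$, and local injectivity of $f$ on the ambient configuration $x$ pins down this lift to coincide with pointwise preimages under $f|_x$; combining these preimage structures yields down-closedness of the preimage subset, which by the preceding proposition delivers rigidity.
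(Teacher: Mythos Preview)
The paper does not supply a proof of this proposition; it is quoted from~\cite{JNW} as a known characterisation. So your attempt must stand on its own.

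Your $(\Leftarrow)$ direction is essentially correct. Given rigidity and the extension property, the diagonal $d := (f\vert_z)^{-1}\circ q$ is well-defined, and rigidity---via the preceding proposition---ensures that every sub-configuration of $fz = y$ pulls back to a sub-configuration of $z$, so $d$ carries configurations of $Q$ to configurations of $E$ and is a genuine map of event structures.

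The $(\Rightarrow)$ direction has a real gap, rooted in your opening assumption that path morphisms are inclusions. In the JNW setting for event structures the path subcategory consists of finite pomsets with \emph{all} event-structure maps between them, not just inclusions. This matters already for your extension argument: you take the path morphism to be the inclusion $fx \hookrightarrow y$ (so $P=fx$ with the $E'$-order) while taking the top map to be the inclusion $x \hookrightarrow E$; but the top map must have domain $P$, and $x$ with its $E$-order is a different pomset from $fx$ with its $E'$-order unless $f$ is already rigid on $x$. The square that actually works takes $P = x$ with its $E$-order and $m : x \to y$ given by $f$; this $m$ is a legitimate pomset map (it sends configurations of $x$ to configurations of $E'$ contained in $y$) but is not an inclusion.

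Your rigidity argument is likewise circular. To ``lift down-closure paths of $y$'' you need a map $p : P \to E$ from a pomset $P$ carrying the $E'$-order, and for any candidate $p$ built from $(f\vert_x)^{-1}$ to be a map of event structures one needs $(f\vert_x)^{-1}$ to send $E'$-down-closed sets to $E$-down-closed sets---precisely rigidity. A non-circular route: for $e\in E$, apply openness to the square with $P = [e]_E$ (with its $E$-order), $Q = f[e]_E$ (with its $E'$-order), $m = f\vert_{[e]}$, and $p,q$ the inclusions into $E$ and $E'$. The diagonal $d : Q \to E$ satisfies $d(f(e)) = e$, so $d$ carries the $Q$-configuration $[f(e)]_{E'}$ to a configuration of $E$ containing $e$, hence containing all of $[e]_E$; comparing cardinalities (using that $d$ and $m$ are mutually inverse bijections between $[e]_E$ and $f[e]_E$) forces $[f(e)]_{E'} = f[e]_E$, from which rigidity is immediate.
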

%Note that the configuration $z$ is not necessarily unique. 

%****PROJECTION $E\downarrow V$ ****
Event structures possess a hiding operation. 
Let $E$ be an event structure and  $V$ a subset of its events.  
The \emph{projection} of $E$ 
to $V$, written 
$E{\mathbin\downarrow}V$,  is defined to be the event structure $(V, \leq_V, \hash_V)$ in which the relations of causal dependency and conflict are simply restrictions of those in $E$ [appendix \ref{app:concomp}].

\setcounter{subsection}{2}
\subsubsection{Concurrent games and strategies.}\label{sec:games-strats}
The driving idea of concurrent games is to replace the traditional role of game trees by that of event structures~\cite{lics11}. 
Both games and strategies will be represented by an
  {\em event structure with polarity (esp)}, which  comprises $(A,\pol_A)$, where $A$ is an event
structure and $\pol_A:A\to \{+,-,0\}$ is a polarity function  ascribing a
polarity $+$ to Player, polarity $-$ to opponent, and polarity $0$ to neutral events.
The events of $A$ correspond to (occurrences of) moves. % in the game  and are accordingly named \emph{moves}. 
%It will be technically useful to allow events of neutral polarity; they arise, for example, in a play between a strategy and a counter-strategy.
Events of neutral polarity arise in a play between a strategy and a counter-strategy.
Maps between %event structures with polarity 
esps are those of event structures that preserve polarity. 
A {\em game} is represented by an event structure with polarities restricted to + or $-$, with no neutral events. 
%****race-free**

In an event structure with polarity, for
configurations $x$ and $y$, we write $x \subseteq^-y$ (resp.~$x \subseteq^+y$) to mean inclusion in which all the intervening events $(y\setdif x)$ are Opponent (resp.~Player) moves. 
%Write $x  \subseteq^+ y$ for inclusion in which all the intervening events are neutral or Player moves. 
For a subset of events $X$ we write $X^+$ and $X^-$ for its restriction to Player and Opponent moves, respectively. 
There are two fundamentally important operations on games:
Given a game $A$, the \emph{dual game} $A^\perp$ is the same as $A$ but with the polarities reversed. 
The other operation, the {\em simple parallel composition} $A\vvbar B$, is achieved by simply juxtaposing $A$ and $B$; ensuring that two events are in conflict only if they are in conflict in a component. 
Any configuration $x$ of $A\vvbar B$ decomposes into $x_A\vvbar x_B$, where $x_A$ and $x_B$ are configurations of $A$ and $B$, respectively.
\begin{definition}[strategy]{\rm 
A {\em strategy} in a game $A$ is an esp $S$ together with a total map $\sig: S \to A$ of \esswp, where% s.t.such that 
 \begin{itemize}
 \item %[{(i)}]
  if $\sigma x  \subseteq^- y$, for $x\in\iconf S$ and  
  $y\in\iconf A$, then there exists a unique $
 % \exists ! 
 x'\in\iconf S$ such that $x\subseteq x'$ and $\sig x' = y$;  
 \item %[{(ii)}]
  if 
$s\imc_S s'$ \& ($\pol(s) = +$ or $\pol(s') = -$), then 
$ \sigma(s)\imc_A \sigma(s')$.
 \end{itemize}
 The strategy is {\em deterministic} iff all immediate conflict in $S$ is between Opponent events. We say the strategy is {\em rigid/open} according as the map $\sig$ is rigid/open. }
 \end{definition}

%The conditions %on strategies  which  
 %prevent Player from constraining Opponent’s behaviour %further than is allowed by
 %beyond the constraints of the game. 
The first condition  is called {\em receptivity}.
It ensures that the strategy is open to all moves of Opponent %which are 
permitted by the game. 
The second condition, called {\em innocence} in~\cite{FP}, ensures that the only additional immediate causal dependencies a strategy can enforce beyond those of the game are those in which a Player move causally depends on Opponent moves.  
An important feature of strategies is that they admit composition to form a bicategory \cite{lics11} [appendix \ref{app:concomp}].
A \emph{map of strategies} $f:\sig \Rightarrow \sig'$, where  $\sig:S\to A$ and $\sig' :S' \to A$, is a map $f:S\to S'$ %s.t.~
such that $\sig=\sig'f$; this determines when strategies are isomorphic. 
Note that such a map $f$ of strategies is itself receptive and innocent and so a strategy in $S'$.

Following Conway and Joyal~\cite{conway,joyal}, we define a {\em strategy from a game $A$ to a game $B$} as a strategy in the game $A^\perp \vvbar B$. 
The conditions of receptivity and innocence % in defining a strategy %are 
precisely %those needed to which 
%to 
ensure that  copycat strategies as follow %(Figure \ref{fig:copycat})  
behave as 
%is 
identities w.r.t.~composition \cite{lics11}. 
%***det strat***
%\subsection{Copycat}\label{sec:copycat}
%
Given a game $A$, the \emph{copycat strategy} $\cc_A:\CC_A\to A^\perp\vvbar A$ is an instance of a strategy from $A$ to $A$. 
The event structure $\CC_A$ is based on the idea that Player moves in one component of the game $A^\perp\vvbar A$ always copy %previous 
corresponding moves of
Opponent in the other component. For $c \in A^\perp\vvbar A$ we use $\bar c$ to mean the corresponding copy of $c$, of opposite polarity, in the
alternative component. The event structure $\CC_A$ comprises   $A^\perp\vvbar A$ and its causal dependencies with extra causal dependencies 
$\bar c\leq c$ for all  
events $c$ such that $\pol_{A^\perp\vvbar A}(c) = +$; %with the original causal dependency they 
this generates a partial order. %take 
Two events in $\CC_A$ are in conflict if they now causally depend on events originally in conflict in $A^\perp\vvbar A$. Figure~\ref{fig:copycat} illustrates $\CC_A$ when $A$ is $\opmove\imc\plmove$.
The map $\cc_A$ acts as the identity function  on events. 
The  copycat strategy on a game $A$ is deterministic iff $A$ is {\em race-free}, \ie~there is no immediate conflict between a Player and an Opponent move~\cite{DBLP:journals/fac/Winskel12}.
%if $x\subseteq^- y$ and  $x\subseteq^+ z$ in $\iconf A$ then $y\cup z\in \iconf A$.
\begin{figure}[H]
\centering
\includegraphics[scale=0.30]{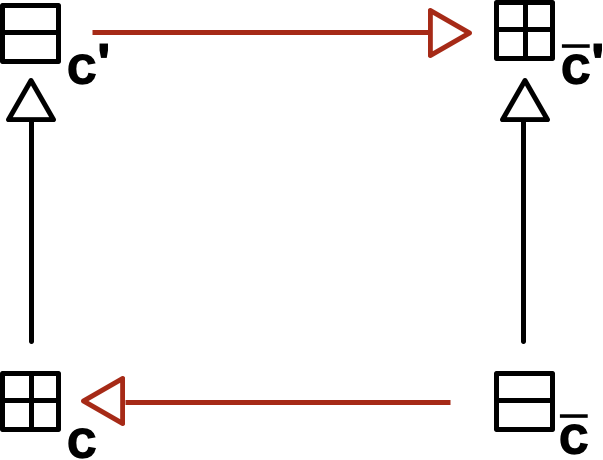}
    \caption{A copycat strategy}
    \label{fig:copycat}
 %   \vspace{-1em}
\end{figure}
%***SUGGEST MAKING HORIZ ARROWS DASHED AND PERHAPS ADDING DOTS TO INDICATE THERE MAY BE OTHER EVENT AROUND***
%
 
\begin{toappendix}
%\section{Appendix for Section \ref{sec:pre}}
\subsection{Composition in concurrent games}\label{app:concomp} 
\subsubsection{Hiding---the defined part of a map.}
Let $E$ be an event structure.  Let $V\subseteq \mod E$ be a subset events we call \emph{visible}.
Define the \emph{projection} %of $E$ 
on $V$, by % to be 
$E{\mathbin\downarrow} V\eqdef
(V, \leq_V, \hash_V)$, where $v \leq_V v' \hbox{ iff } v\leq v' \ \&\ v,v'\in V$ and $v \hash_V v' \hbox{ iff } v\hash_E v'$ and $v,v'\in V$. 
The operation $\downarrow$  {\em hides} all events in $E\backslash V$. 
It is associated with a {\em partial-total factorisation system}. 
Consider a map of event structures $f:E\to E'$ and let 
$$V\eqdef \set{e\in E}{ f(e) \hbox{ is defined}}.$$
Then $f$ clearly factors into the composition 
$$
\xymatrix{
E\ar@^{->}[r]^{f_0}& E{\mathbin\downarrow} V \ar[r]^{f_1}& E'\\}
$$
of $f_0$, which is a partial map of event structures taking $e\in \mod E$ to itself if $e\in V$ and undefined otherwise, and $f_1$, which is a total map of event structures acting like $f$ on $V$. 
Each $x\in \iconf{E{\mathbin\downarrow} V}$ is the image under $f_0$ of a {\em minimum configuration}, \viz~$[x]_E\in\iconf E$. %\yoav{-}
We call $f_0$ a {\em projection} and $f_1$ the {\em defined part} of the %partial 
map 
$f$.    

\subsubsection{Pullbacks.}
The {\em pullback} of total maps of event structures is essential in composing strategies. 
We can build it out of computation paths.  
 A computation path  is described by  a partial order $(p,\leq_p)$  for which the set $\set{e'\in p}{e'\leq_p e}$ is finite for all $e\in p$.  We can identify  such a path with an event structure with no conflict.  
  Between two paths $p=(p, \leq_p)$ and $q=(q, \leq_q)$, we write  $p \hookrightarrow  q$ when $p\subseteq q$ and the  inclusion  is a rigid map of event structures.   
%Conversely, out of such a ``rigid family'' we can construct an event structure.

\begin{proposition}[\cite{ecsym-notes}]\label{prop:rigfam}
A {\em rigid family} $\mathcal R$ comprises a non-empty subset of finite partial orders which is down-closed w.r.t.~rigid inclusion, \ie~$p\hookrightarrow q\in {\mathcal R}$ implies $p\in {\mathcal R}$. It is {\em coherent} when a  pairwise-compatible finite subfamily is compatible.   
A coherent rigid family determines 
an event structure $\Pr({\mathcal R})$ whose order of finite configurations is isomorphic to $({\mathcal R}, \hookrightarrow)$.  The event structure $\Pr({\mathcal R})$
has events those elements of $\mathcal R$ with a top event; its causal dependency is given by rigid inclusion; and its conflict by incompatibility w.r.t.~rigid inclusion. The order isomorphism  $\mathcal R\iso \conf{\Pr({\mathcal R})}$   takes $q\in \mathcal R$ to 
$\set{p\in\Pr({\mathcal R})}{p\hookrightarrow q}$.
%the set of elements with top it includes.  
\end{proposition}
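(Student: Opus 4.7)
The plan is to verify, in turn, that the proposed $\Pr(\mathcal R)$ is a well-defined event structure, that the map $\phi : q \mapsto \{p \in \Pr(\mathcal R) : p \hookrightarrow q\}$ lands in $\conf{\Pr(\mathcal R)}$, and that $\phi$ is a bijection onto the finite configurations reflecting and preserving inclusion (rigid on the left, ordinary on the right). I expect the genuine content to lie in the surjectivity clause, which is exactly where coherence enters.

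First I would check the event-structure axioms on $\Pr(\mathcal R)$. Antisymmetry of rigid inclusion gives antisymmetry of $\leq$, and transitivity of rigid inclusion is immediate; reflexivity holds because any $p$ with a top event includes rigidly into itself. Finiteness of $[p] = \{p' \hookrightarrow p : p' \text{ has a top}\}$ follows because a rigid sub-order of the finite partial order $p$ is determined by its underlying subset, and there are only finitely many. For the conflict $p \hash q \iff p, q$ have no common upper bound in $\mathcal R$ (rigid-inclusion-wise): irreflexivity and symmetry are immediate, and conflict inheritance $p\hash q \leq q'$ follows because any rigid upper bound of $p$ and $q'$ would restrict to a rigid upper bound of $p$ and $q$.

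Next, for each $q \in \mathcal R$, the set $\phi(q)$ is down-closed under $\leq$ by transitivity of rigid inclusion, and any two of its elements are compatible since they rigidly include into the common $q$; so $\phi(q) \in \conf{\Pr(\mathcal R)}$ and is finite because $q$ is finite. Rigid inclusion $q \hookrightarrow q'$ clearly gives $\phi(q) \subseteq \phi(q')$, and conversely, if $\phi(q) \subseteq \phi(q')$, then in particular for every element $e$ of $q$ with down-closure $[e]_q \in \phi(q) \subseteq \phi(q')$, the witness of rigid inclusion $[e]_q \hookrightarrow q'$ shows that $q$ itself embeds rigidly into $q'$; thus $\phi$ is order-reflecting, and injective as a consequence.

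The key remaining step is surjectivity of $\phi$ onto $\conf{\Pr(\mathcal R)}$. Given a finite configuration $x$, the elements of $x$ are pairwise consistent, hence pairwise compatible in the sense of having common upper bounds in $\mathcal R$; by coherence applied to the finite family $x$, there is a single $q \in \mathcal R$ into which each $p \in x$ rigidly includes. One then forms the rigid subfamily $q_x \subseteq q$ generated by the top events of the $p \in x$, closed under $\leq_q$: down-closure under rigid inclusion ensures $q_x \in \mathcal R$, and by construction $\phi(q_x) = x$. This is the step I expect to be the main obstacle, since one must verify that the resulting $q_x$ really lies in $\mathcal R$ (using down-closure under rigid inclusion rather than mere subset) and that every $p \in \Pr(\mathcal R)$ with $p \hookrightarrow q_x$ is already in $x$ — the latter uses that $x$ is down-closed in $\Pr(\mathcal R)$ and that every element with a top $e \in q_x$ arises as the down-closure in some $p \in x$ containing $e$.

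Finally, combining the order-reflection from the previous paragraph with surjectivity and injectivity yields the stated isomorphism $\mathcal R \iso \conf{\Pr(\mathcal R)}$, and the formula $\phi(q) = \{p \in \Pr(\mathcal R) : p \hookrightarrow q\}$ is precisely the definition used throughout.
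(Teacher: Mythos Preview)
The paper does not supply its own proof of this proposition: it is quoted as a background result with a citation to~\cite{ecsym-notes}, so there is nothing in the paper to compare against directly. Your proposal is a correct and self-contained verification of the statement, and it follows what one would expect the cited proof to do: check the event-structure axioms, show the map $\phi$ is well-defined, order-preserving and order-reflecting, and use coherence for surjectivity.

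A couple of points are worth tightening. In the order-reflection step, you should be explicit that $p\hookrightarrow q$ here means $p$ is a \emph{down-closed} subset of $q$ carrying the induced order (this is what ``rigid inclusion'' of paths amounts to); from $[e]_q\hookrightarrow q'$ for every $e\in q$ you then get $[e]_q=[e]_{q'}$, and hence $q=\bigcup_e[e]_{q'}$ is itself down-closed in $q'$ with the induced order, giving $q\hookrightarrow q'$. In the surjectivity step, the key observation you gesture at---that any $p'\in\Pr(\mathcal R)$ with $p'\hookrightarrow q_x$ has top $e$ lying in some $p\in x$, whence $p'=[e]_{q_x}=[e]_p\leq p$ so $p'\in x$ by down-closure of $x$---is exactly right, but deserves to be stated in full since it is where both coherence and the down-closure of $\mathcal R$ are genuinely used.
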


%\subsection{Pullbacks}
We can define pullback via a rigid family of {\em secured bijections}. Let $\sig:S\to B$ and $\tau:T\to B$ be total maps of event structures. There is a composite bijection 
$$
\theta: x\iso \sig x = \tau y \iso y,
$$
between $x\in\conf S$ and 
 $y\in \conf T$ such that $\sig x = \tau y$; because $\sig$ and $\tau$ are total they induce bijections between configurations and their image. The bijection is {\em secured}  when
the transitive relation generated on $\theta$ by  $(s, t) \leq (s', t')$ if $s\leq_S s'$ or $t\leq_T t'$ is a partial  order.

\begin{theorem}[\cite{ecsym-notes}]
Let $\sig:S\to B$ and $\tau:T\to B$ be total maps of event structures.  The family $\mathcal R$ of secured bijections between $x\in\conf S$ and 
 $y\in \conf T$ such that $\sig x = \tau y$ is a rigid family.  
 The functions $\pi_1: \Pr({\mathcal R})\to S$ and $\pi_2:\Pr({\mathcal R})\to T$, taking a secured bijection with top to, respectively, the left and right components of its top, are maps of event structures.
$ \Pr({\mathcal R})$ with $\pi_1$ and $\pi_2$ is the pullback of $\sig$ and $\tau$ in the category of event structures.
 \end{theorem}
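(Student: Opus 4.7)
The plan is to establish three successive claims: that $\mathcal R$ is a coherent rigid family (so that Proposition~\ref{prop:rigfam} produces $\Pr(\mathcal R)$); that $\pi_1, \pi_2$ are maps of event structures; and that the resulting cone satisfies the universal property of pullbacks.

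First, for rigidity of $\mathcal R$, the empty bijection supplies non-emptiness. For closure under rigid inclusion, given $\theta' \hookrightarrow \theta$ with $\theta \colon x \iso y$ and $\sigma x = \tau y$, the two projections of $\theta'$ yield $x' \subseteq x$ and $y' \subseteq y$ which are configurations---down-closure in the generated partial order on $\theta$ forces down-closure in both $S$ and $T$ separately---and they satisfy $\sigma x' = \tau y'$ pointwise because every $(s,t) \in \theta$ already satisfies $\sigma(s)=\tau(t)$. The generated order on $\theta'$ is the restriction of that on $\theta$ and so remains a partial order, i.e.~$\theta'$ is secured. For coherence, two pairwise compatible $\theta_1, \theta_2$ both rigidly embed into some common $\theta$, and their union inside $\theta$ is itself a secured bijection by the same argument; this extends to any finite pairwise compatible subfamily.

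Next, for $\pi_1$ being a map of event structures (and $\pi_2$ symmetrically), I would exploit the iso $\mathcal R \iso \conf{\Pr(\mathcal R)}$. A configuration of $\Pr(\mathcal R)$ corresponds to some $\theta \colon x \iso y \in \mathcal R$, and $\pi_1$ sends it onto $x$---because the tops of rigid sub-bijections of $\theta$ range over all of $\theta$---giving a configuration of $S$. Local injectivity follows from $\theta$ being a bijection: two distinct $p_1, p_2$ in a configuration with $\pi_1(p_i)=s$ would have tops $(s,t_1), (s,t_2) \in \theta$ forcing $t_1=t_2$, after which $p_1=[(s,t)]_\theta=p_2$ by down-closedness.

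The universal property is the substantive step. Commutativity $\sigma\pi_1 = \tau\pi_2$ holds on each event $p$ with top $(s,t)$ because $(s,t) \in \theta$ implies $\sigma(s)=\tau(t)$. For the mediating map, given $f \colon E \to S$ and $g \colon E \to T$ with $\sigma f = \tau g$, I would first observe that for any $z \in \conf E$ the set $\theta_z = \{(f(e), g(e)) \mid e \in z\}$ is a bijection onto $fz$ and $gz$ by local injectivity of $f$ and $g$, lies over $\sigma(fz) = \tau(gz)$, and that the generated order on $\theta_z$ pulls back along $e \mapsto (f(e), g(e))$ to a sub-relation of $\leq_E$ restricted to $z$, by local reflection of causal dependency for $f$ and $g$; hence it is a partial order and $\theta_z \in \mathcal R$. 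I would then define $h(e)$ to be the event of $\Pr(\mathcal R)$ obtained as the down-closure of $(f(e), g(e))$ in $\theta_{[e]}$, which is a secured bijection with top $(f(e), g(e))$. The checks are that $h(z)$ corresponds under the iso to $\theta_z$, so $h$ sends configurations to configurations; that $h$ is locally injective, again by the bijection property; and that $\pi_1 h = f$, $\pi_2 h = g$. Uniqueness is forced because any mediating $h'$ must send $e$ to an event whose top is $(f(e), g(e))$ and whose down-closure is determined by the images of $[e]$ under $f$ and $g$.

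The main obstacle is the mediating map: identifying the correct event of $\Pr(\mathcal R)$ to assign to each $e \in E$, since $h$ need not preserve causal dependency even though it must preserve configurations. The key lever throughout is the \emph{local reflection} of causal dependency by maps of event structures, which translates between the causal structure of $E$ and the generated orders on the secured bijections, making $\theta_z$ and its down-closures behave coherently with respect to rigid inclusion.
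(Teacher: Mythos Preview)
The paper does not give its own proof of this theorem: it is stated in the appendix as background, with a citation to \cite{ecsym-notes}, and no proof follows. So there is no in-paper argument to compare your proposal against.

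That said, your outline is the standard route and is essentially sound. One point deserves more care than you give it: in the coherence step you write that the binary case ``extends to any finite pairwise compatible subfamily,'' but this is not an iteration of the binary argument. For $\theta_1,\dots,\theta_n$ pairwise compatible you must first argue that the underlying $x_i$ in $\conf S$ (resp.\ $y_i$ in $\conf T$) have a common upper bound---this uses that conflict in an event structure is binary, so pairwise-consistent finite unions of configurations are configurations---and that the $\theta_i$ agree on overlaps (which does follow from each pair having a common secured upper bound, since a bijection cannot send one $s$ to two different $t$'s). The remaining point, that the composite bijection on $x=\bigcup_i x_i$ and $y=\bigcup_i y_i$ is secured, is the one with actual content; a cycle in the generated order need not live inside any single $\theta_{ij}$, so pairwise securedness alone is not obviously enough. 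The clean way through is to observe that the generated order on the composite bijection $\theta\colon x\cong y$ embeds, via $(s,t)\mapsto s$ (or $\mapsto t$), into the partial order obtained by pulling back $\leq_S$ and $\leq_T$ along the locally-injective $\sig,\tau$ to the common image $\sig x=\tau y$ and taking the transitive closure there; antisymmetry of that larger relation is what rules out cycles. Your sketch of the mediating map and its uniqueness is correct, and your identification of local reflection of causal dependency as the key lever is exactly right.
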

 
W.r.t.~$\sig:S\to B$ and $\tau:T\to B$,  define $x\wedge y$ to be the configuration of their pullback which corresponds via this isomorphism to a secured bijection between $x\in\iconf S$ and $y\in \iconf T$, necessarily with $\sig x = \tau y$; any configuration of the pullback takes the form  $x\wedge y$  for unique $x$ and $y$.  
\subsubsection{Composition of strategies.}

Two strategies $\sig:S\to A^\perp\vvbar B$ and $\tau:T\to B^\perp \vvbar C$
compose via pullback and hiding, summarised below.
\[\vcenter{\small\xymatrix@R=12pt@C=10pt{
 &\ar[dl]_{\pi_1
 }T\sncirc S \ar@^{-->}[rr]
\ar@{..>}[dd]|{\tau\sncirc\sig}
\pb{270}\ar[dr]^{\pi_2
 }&  & T\scirc S \ar@{-->}[dd]^{\tau\scirc \sig}\\
S\vvbar C\ar[dr]_{\sig\vvbar C}&&\ar[dl]^{A\vvbar \tau}A\vvbar T\\
 &A\vvbar B\vvbar C\ar@^{->}[rr]&
 &A\vvbar C}}.\]
Ignoring polarities, by forming the pullback of $\sig\vvbar C$ and $A\vvbar \tau$ we obtain the synchronisation of
complementary moves of $S$ and $T$ over the common game $B$;  subject to the causal constraints of $S$ and $T$, the effect is to instantiate the Opponent moves of $T$ in $B^\perp$ by the corresponding Player moves of $S$ in $B$, and {\it vice versa}.  Reinstating polarities we obtain the {\em interaction}  of $\sig$ and $\tau$  
\[
\tau\sncirc \sig: T\sncirc S \to A^\perp \vvbar B^0 \vvbar C,
\]
where we assign neutral polarities to all moves in or over $B$. Neutral moves over the common game $B$ remain unhidden. The map $A^\perp\vvbar B^0\vvbar C\pto A^\perp\vvbar C$  is undefined on $B^0$ and otherwise mimics the identity. Pre-composing this map with $\tau\sncirc \sig$ we obtain a partial map $T\sncirc S\pto A^\perp\vvbar C$; it is undefined on precisely the neutral events of $T\sncirc S$.  The defined parts of its partial-total factorisation yields 
 \[\tau\scirc \sig: T\scirc S\to A^\perp\vvbar C.\]
 On reinstating polarities; this is the {\em composition} of $\sig $ and $\tau$. % (with hiding).
 We obtain a bicategory $\Strat$   where the objects are games, arrows are strategies, and 2-cells are maps between strategies \cite{lics11,lics12}. Identities are given by copycat strategies.
 
 It is useful to introduce notation for configurations of the interaction and composition of strategies $\sig$ and $\tau$. For $x\in\iconf S$ and $y\in\iconf T$, 
let $\sig x = x_A\vvbar x_B$ and $\tau y = y_B\vvbar y_C$ where $x_A\in \iconf A$, $x_B, y_B\in\iconf B$, $y_C\in\iconf C$.  
Define
$
y\sncirc x = (x\vvbar y_C) \wedge (x_A\vvbar y)
$.
This is a partial operation only  defined if the $\wedge$-expression is. 
It is defined and glues configurations $x$ and $y$ together at their common overlap over $B$ provided $x_B=y_B$ and a finitary partial order of causal dependency results. 
Any 
 configuration of $T\sncirc S$ has the form 
$
y\sncirc x
$,
for unique $x\in\iconf S, y\in\iconf T$.   
Accordingly, any 
 finite configuration of $T\scirc S$ is given as
\[
y\scirc x =   (y\sncirc x)\cap\set{e\in T\sncirc S}{\pol_{T\sncirc S}(e)\neq 0},
\]
for some $x\in\conf S$ and $y\in\conf T$. The configurations $x$ and $y$ need not be unique.  However, w.r.t.~any configuration $z$ of $T\scirc S$ there are {\em minimum} $x,y$ such that $y\scirc x =z$, \viz~those for which $y\sncirc x = [z]_{T\sncirc S}$. 

\end{toappendix}

% \subsection{Winning conditions}\label{sec:winningcond}
 
A \emph{winning condition} of a game $A$ is 
%are presented as 
a subset $W_A$ of  configurations $\iconf A$.
%. Informally, an outcome in $W$ 
Informally, a strategy (for Player) is winning if it always prescribes moves for Player to end up in a winning configuration regardless of what opponent does. 
 Formally, $\sig:S\to A$ is a {\em winning strategy}  in
$(A, W_A)$, for a concurrent game $A$ with winning condition $W_A\subseteq \iconf A$,
 if  $\sig x$ is in $W_A$ for all   +-maximal configurations $x$ of $S$; a configuration is +-maximal if %no additional Player or neutral moves are 
 the only additional moves enabled at it are those of Opponent.
 That $\sig$ is winning can be shown equivalent to all plays of $\sig$ against every counter-strategy of Opponent result in a win for Player~\cite{lics12,ecsym-notes}.

For the dual of a game with winning condition $(A, W_A)$, we again reverse the roles of Player and Opponent, %to get $A^\perp$ 
and take its  winning condition to be the  
set-complement  of $W_A$, \ie~$(A,W_A)^\perp = (A^\perp,\iconf A\setdif W_A)$. 
In a %simple 
parallel composition  %$(A, W_A) \vvbar (B, W_B)$ 
of two games with winning conditions, %$A$ with winning conditions $W_$ and $B$ with winning conditions $W_B$$A\vvbar B$,  
we deem a configuration %of ${A\vvbar B}$ 
winning if  its component in either game is winning: 
$(A, W_A) \vvbar (B, W_B) \eqdef (A\vvbar B, W)$,
where 
$W= \set{x\in\iconf{A\vvbar B}}{x_A\in W_A \hbox{ or } x_B \in W_B}$.  
%So, Player wins if they win in either component game.  
With these extensions, we take a winning strategy from a game $(A,W_A)$ to a game $(B,W_B)$ %where both games have winning conditions, 
to be a winning strategy in the game $A^\perp\vvbar B$, \ie~a strategy for which a win in $A$ implies a win in $B$.  
Whenever games are {race-free},  %, \ie~there are no immediate conflicts between Player and Opponent moves, 
copycat is a winning strategy; moreover, the composition of winning strategies is a winning strategy~\cite{lics12,ecsym-notes}.

\section{Games over relational structures}\label{sec:siggame}
We are concerned with games in which Player or Opponent moves instantiate variables (in a set $V$) with elements of a relational structure, subject to moves including challenges (in a set $C$) by either player.  
By restricting the size of the set of variables $V$, we can bound the memory intrinsic to a game, in the manner of pebble games% from finite model theory
~\cite{pebblegames}.  

\begin{toappendix}
%\section{Appendix for Section \ref{sec:siggame}}
\subsection{Multi-sorted signature}\label{app:multisort}
A {\em relational many-sorted signature} $\Sigma$ is built from a set, whose elements are called {\em sorts}, which specifies for each string of sorts  $\vec s = s_1 \dots s_k$  a set $\Sigma_{\vec s}$ of {\em relation symbols} of {\em arity} $\vec s$. %$R$ and their {\em arity}  as a sequence of sorts $\vec s = s_1, \dots, s_k$.  
%****NEW *** It is convenient to assume that there are no nullary relation symbols, that $\Sigma_{\epsilon}=\emptyset$ for the empty string $\epsilon$.  ****PROBABLY NOT NEC WITH OPENNESS**
A many-sorted $\Sigma$-structure $\A$ provides sets and relations as interpretations of the sorts and relation symbols. 
Each element $a\in \mod A$ of a many-sorted $\Sigma$-structure $\A$ has a unique sort, $sort(a)$; we write $\mod\A_s$ for those elements of $\A$ of sort $s$; we insist on {\em nonemptiness}, \ie~$\mod\A_s\neq \emptyset$ for all sorts $s$. 
% THIS IS NEEDED SO EXPANSION MAPS $\red$ ARE OPEN****
 A relation symbol $R\in \Sigma_{\vec s}$, of arity  $\vec s= s_1 \dots s_k$, is interpreted in $\A$ by $R_\A \subseteq \mod\A_{s_1} \times \dots \times  \mod\A_{s_k}$. 
The {\em sum} of many-sorted relational structures $\A_1$, with signature $\Sigma_1$, and  
$\A_2$, with signature $\Sigma_2$, denoted $\A_1+\A_2$ with signature $\Sigma_1+\Sigma_2$,  has sorts given by the disjoint union of the sorts of $\A_1$ and $\A_2$, and relations lifted from those of the components. %; informally, it comprises parallel copies of the two relational structures. 
\end{toappendix}

We will work with {\em many-sorted %(relational)
relational structures} because it will be useful to consider different relational structures, say $\A$ and $\B$, in parallel, via their sum $\A+\B$;  the sum has sorts given by the disjoint union of the sorts of $\A$ and $\B$, with relations and arities lifted from those of the components  [appendix \ref{app:multisort}].
 A \emph{game signature $(\Sigma, C, V)$} comprises $\Sigma$, a \emph{many-sorted} relational signature including equality for each sort;  
a set $C$  of {\em constants}; 
and a set $V=\setof{\al, \be,\ga, \dots}$ of  sorted {\em variables},
 disjoint from $C$,
%(assumed disjoint from $C$)
with sorts in $\Sigma$ -- we write $sort(\al)$ for the sort of $\al\in V$. 
We assume each sort in a $\Sigma$-relational structure is associated with a nonempty set.

\begin{definition}[signature game]
{\rm 
A \emph{signature game} with signature $(\Sigma,C,V)$, alternatively a
\emph{$(\Sigma,C,V)$-game} $G$, comprises a race-free esp $G=(\mod G, \leq_G, \hash_G, \pol_G)$  and  
\begin{itemize}

%specifies whether the event is a Player (+) or Opponent ($-$) move, such that there is no immediate conflict % $\xymatrix{\plmove \ar@{~}[r]&\opmove}$ 
%between a Player and an Opponent move (\ie, $G$ is race-free);
\item a \emph{labelling} $\Var:\mod G\to V\cup C$ such that $g\co g'$ implies $\Var(g)\neq \Var(g')$; and $ \Var(g) = \Var(g')$ implies $\pol_G(g) = \pol_G(g')$;  
%$\boxempty^c$ $\boxempty^\al$
\item
a \emph{winning condition} $W_G$, which is an assertion in the free logic over $(\Sigma,C,V)$. %The syntax and semantics of f
(Free logic
is explained in detail in the Addendum, Section~\ref{add:freelogic}.)
\end{itemize}}
%Note that i
Events in $G$ are either {\em $V$-moves} or {\em $C$-moves}.  Note the elements of $C$ and $V$ appearing in  a \emph{$(\Sigma,C,V)$-game} $G$ are associated with a unique polarity. We shall specify the set of moves in $G$ labelled in $V_0\subseteq V$ by $\mod G_{V_0}$. 
%is Given a $(\Sigma,C,V)$-game $G$, w
We will consider $G$ to be %a game 
played over a $\Sigma$-structure $\A$; the pair $(G,\A)$ is called \emph{a game over a $\Sigma$-structure $\A$}.
\end{definition}

%\begin{toappendix}
%The following proposition will be enormously helpful.  It expresses the invariance of the truth of assertions across maps of instantiations.
%
%A winning condition for a $(\Sigma,C,V)$-game is an assertion in free logic \cite{Scott-freelogic}.
%This has the advantage of allowing assertions regarding empty configurations and ease the need for explicitly mentioning the universe on which assertion are evaluated.
%\end{toappendix}

%Having winning conditions as sentences in free logic lets us transfer some of the description of the game to the winning condition which becomes more explicit and tailored to the game.
%This increases the expressive power of  Spoiler-Duplicator games which are normally associated with winning conditions the preservation of partial homomorphisms or isomorphisms. 

%We shall refer to games w.r.t.~an arbitrary game signature as {\em signature games}. 
%\vspace{-1em}
\begin{example}[a multigraph homomorphism game]\label{ex:multigph}
%A multigraph is a graph with multiple edge relations. 
The signature game in Figure~\ref{fig:bigraphs} is the multigraph homomomorphism game for two graphs $\A$ and $\B$ with red and green edge relations; so its signature has two sorts, one for $\A$ and one for $\B$, each with their equality, together with red and green edge relations.
The game comprises four %concurrent 
events assigned variables, $\al_1, \al_2$ with sort that of $\A$ and $\be_1, \be_2$ with sort that of $\B$.  Its winning condition $W$ is written under the illustration of the game.
Looking ahead to Definition~\ref{def:siggamestrat},
a rigid deterministic winning strategy %for Player
in this game over $\A+\B$ is necessarily  open and %1-1 
corresponds to a homomorphism  %from 
$\A$ to $\B$. %(This would also be true if the challenge events and their existence predicates were omitted).

%, and two assigned constants $r$ and $g$, so Opponent can challenge Player to provide matches of nodes w.r.t.~either the red or green edges.  
    \begin{figure}

\includegraphics[scale=0.30]{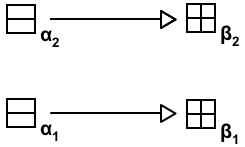}
\vspace{1em}
  \begin{equation*}
\begin{aligned}
  W\equiv \;&  (\al_1 =_\A \al_2 \to \be_1 =_\B \be_2) \wedge\\ 
  &(%\EX(r)\wedge 
  red_\A(\al_1, \al_2) \to red_\B(\be_1, \be_2)) \wedge \\
& (%\EX(g) \wedge 
green_\A(\al_1, \al_2) \to green_\B(\be_1, \be_2))
\end{aligned}
\end{equation*}
        \caption{The two-edge multigraph homomorphism game% for two edge relations
        }
        \label{fig:bigraphs}
    \end{figure}
         % \vspace{-1.5em}
\end{example}
%\noindent
%$\hfill\Box$
%*** PROP 3.3 AND 3.5 SHOULD PROBABLY HAVE A CITATION FOR A SIMILAR PROOF BECAUSE WE DIDN'T INCLUDE ANY*** GW: not needed as trivially from defn 3.3 or explained in pre-amble 3.5
\begin{proposition}
Given a $(\Sigma,C,V)$-game, if for two events $g$, $g'$  in a configuration  $\Var(g) =\Var(g')$, then they are causally dependent, \ie~$g\leq_G g'$ or $g'\leq_G g$.
\end{proposition}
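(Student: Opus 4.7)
The plan is to observe that the statement is essentially an immediate consequence of the first labelling condition in the definition of a signature game, together with the standard trichotomy for pairs of events in an event structure. I would argue by elimination: given two events $g, g'$ lying in a common configuration $x \in \iconf G$ with $\Var(g) = \Var(g')$, I want to rule out the only alternative to causal dependency.

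First, I would handle the trivial case $g = g'$, in which $g \leq_G g'$ holds reflexively. So assume $g \neq g'$. Recall that for any two distinct events in an event structure exactly one of three things holds: $g \leq_G g'$, $g' \leq_G g$, $g \hash_G g'$, or $g \co_G g'$. Because $x$ is a configuration, it is consistent, so we cannot have $g \hash_G g'$ (otherwise $\{g,g'\} \subseteq x$ would contradict consistency). This leaves either causal dependency or concurrency.

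Now I would invoke the hypothesis on the labelling $\Var : \mod G \to V\cup C$: by definition of a $(\Sigma,C,V)$-game, $g \co_G g'$ implies $\Var(g) \neq \Var(g')$. Taking the contrapositive, $\Var(g) = \Var(g')$ rules out $g \co_G g'$. Combined with the previous step, the only remaining possibility is $g \leq_G g'$ or $g' \leq_G g$, as required.

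There is no real obstacle here; the argument is a one-step application of the labelling constraint, and the only thing to be careful about is writing down the trichotomy of relations on distinct events in an event structure and explicitly noting that conflict is ruled out by membership in a common (hence consistent) configuration. The polarity clause of the labelling definition plays no role in this particular proposition.
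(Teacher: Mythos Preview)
Your argument is correct and is exactly the intended one: the paper states this proposition without proof because it is immediate from the labelling condition ``$g \co g'$ implies $\Var(g) \neq \Var(g')$'' together with consistency of configurations, which is precisely what you spell out. One cosmetic slip: you write ``exactly one of three things holds'' and then list four alternatives; for distinct events the correct exhaustive, mutually exclusive cases are $g <_G g'$, $g' <_G g$, $g \hash_G g'$, and $g \co_G g'$.
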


Consequently, the set of events in a configuration of $G$ which is labelled with the same variable or constant is totally ordered w.r.t.~causal dependency.  This fact is reflected in an esp $S$ via a total map $\sig:S\to G$.  For $x\in \iconf S$, it is sensible to 
define
\begin{align*}
    \last_S(x) \eqdef &\big\{s\in x\mid
\all s'\in x.\\& \big(s \leq_S s' \text{ and } \Var(\sig(s))=\Var(\sig(s'))\big) \implies s= s'\big\}.
\end{align*}
We shall write $\last_G$ when $\sigma$ is the identity on $G$.
%There may not be such last moves in an infinite configuration. 
%In a $(\Sigma,C,V)$-game $G$ with configuration $x$ define
%\[\last_G(y) \eqdef \set{g\in y}{%\Var(e)\in V \ \&\ 
%\all g'\in y.\  [g \leq_G g' \ \text{ and } \Var(g)=\Var(g')] \implies g= g'}.\]
%When $E\subseteq \mod G$, we write $E_V$ for its subset of  $V$-moves,  and $E_C$ for its  $C$-moves.  So $\last(y)_V$ is the set of last $V$-moves in $x$.  

%Note that moves in $\last_G(y)$ need not be $\leq$-maximal in $y$.

%\subsection{Instantiations}\label{sec:instantiations}
Let $(G,\A)$ be a $(\Sigma,C,V)$-game over a $\Sigma$-structure $\A$.
A strategy in $G$ will assign values in $\A$ to $V$-moves, providing a dynamically changing environment for the variables---their values being the latest assigned. The dynamic environment, which  can both extend  and update the set of labelled events, is captured by the notion of an {\em instantiation}.

\begin{definition}[instantiation]{\rm
%Let $G$ be a $(\Sigma,C,V)$-game and $\A$ a $\Sigma$-structure.  
An {\em instantiation} of $G$ in $\A$ consists of $\sig:S\to G$, where $S$ is an esp and $\sigma$ is a total esp map; and a function $\rho:\mod S_V\to \mod A$,  
where $\mod S_V \eqdef \set{s\in \mod S}{\Var(\sig(s))\in V}$ and $sort(\rho(s)) = sort(\Var(\sig(s)))$.}
\end{definition}

A map from  instantiation $(\sig, \rho)$, where $\sig:S\to G$ and $\rho:\mod S_V\to \mod\A$,  to instantiation $(\sig',\rho')$, where $\sig':S'\to G$ and  $\rho':\mod {S'}_V\to \mod\A$, is a total esp map $f:S\to S'$ such that the following diagrams commute:
% $$\sig = \sig'\circ f \ \hbox{ and } \ 
\[
\vcenter{\xymatrix{
S\ar[dr]_\sig \ar%@{-->}
[r]^{f}& S'\ar[d]^{\sig'}\\
&G}} \hspace{5em}
\vcenter{\xymatrix{
\mod S_V\ar[dr]_\rho \ar%@{-->}
[r]^{f}& \mod{S'}_V\ar[d]^{\rho'}\\
&\mod \A}}
\]
 %\all  s\in \mod S_V.\ \rho(s) =  \rho'(f(s))\,.$$
%$$\xymatrix{
%S\ar[dr]_\sig \ar[r]^f& S'\ar[d]^{\sig'}\\
%&G}
%$$
%commute

We will shortly construct a universal instantiation for a  $(\Sigma, C, V)$-game over a relational structure. %---see Definition~\ref{def:expn}.
Since total maps of event structures reflect causal dependency locally, the notion of last moves associated with variables or constants 
remains the same, whether according to the game or  the instantiation, and indeed across maps of instantiations.
\begin{proposition}\label{prop:last}
%Let $G$ be a $(\Sigma,C,V)$-game.
Let $S$ be an esp and let $\sig: S\to G$ be a total esp map.
Then, 
$\sig \last_S(x) = \last_G(\sig x)$.  
Moreover, if $f$ is a map of  instantiations from $(\sig:S\to G,\rho)$ to $(\sig':S'\to G,\rho')$, then  $f\last_S(x) = \last_{S'}(f x)$.  
\end{proposition}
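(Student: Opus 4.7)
The plan is to derive both equalities from two standard facts about total maps of event structures: totality plus local injectivity makes $\sigma$ restrict to a bijection $x\iso \sigma x$ for every $x\in\iconf S$, and causal dependency is \emph{reflected} locally in configurations. Combined with the previous proposition (same-variable/constant events in a configuration of $G$ are causally comparable), these will let us transport maximality back and forth between $S$ and $G$.

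For the first equality, I would fix $x\in\iconf S$ and argue each inclusion separately. For $\sigma\last_S(x)\subseteq \last_G(\sigma x)$, given $s\in\last_S(x)$ and $g'\in\sigma x$ with $\sigma(s)\leq_G g'$ and $\Var(\sigma(s))=\Var(g')$, pick the unique $s'\in x$ with $\sigma(s')=g'$ (by the bijection); local reflection of causality gives $s\leq_S s'$, and then the definition of $\last_S$ forces $s=s'$, hence $\sigma(s)=g'$. For the reverse inclusion, take $g\in\last_G(\sigma x)$ with unique preimage $s\in x$, and any $s'\in x$ with $s\leq_S s'$ and $\Var(\sigma(s))=\Var(\sigma(s'))$. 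Here the previous proposition says $\sigma(s)$ and $\sigma(s')$ are $\leq_G$-comparable; splitting cases, either $\sigma(s)\leq_G\sigma(s')$ (and maximality of $\sigma(s)$ in $\sigma x$ together with local injectivity gives $s=s'$), or $\sigma(s')\leq_G\sigma(s)$ (and local reflection gives $s'\leq_S s$, whence $s=s'$).

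For the second equality, I would apply the first statement twice: $\sigma \last_S(x) = \last_G(\sigma x) = \last_G(\sigma' f x) = \sigma'\last_{S'}(f x)$, using $\sigma=\sigma'f$. Substituting yields $\sigma'(f\last_S(x)) = \sigma'(\last_{S'}(f x))$. Both sides are subsets of the configuration $fx$, on which $\sigma'$ is injective by local injectivity, so the equality $f\last_S(x)=\last_{S'}(f x)$ follows by cancellation.

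The main conceptual step — and the place where care is needed — is the case analysis for the reverse inclusion of the first equality, because total esp maps generally do not preserve causal dependency; everything hinges on exploiting the previous proposition (same-variable events in a configuration are comparable) together with local reflection to cover both directions of the comparison. The remainder is a bookkeeping argument using the bijection $x\iso\sigma x$ and the factorisation $\sigma=\sigma'f$.
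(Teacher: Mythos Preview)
The paper states this proposition without proof, so there is no argument to compare against; your proof is correct and supplies exactly the details the paper leaves implicit. The key ingredients you identify---local injectivity giving a bijection $x\cong\sigma x$, local reflection of causal dependency, and the preceding proposition that same-label events in a configuration of $G$ are $\leq_G$-comparable---are precisely what is needed, and your case split in the reverse inclusion correctly handles the fact that $\sigma$ need not preserve $\leq$. The derivation of the second equality by applying the first to both $\sigma$ and $\sigma'$ and then cancelling the locally injective $\sigma'$ on the configuration $fx$ is clean and valid.
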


%TRYING TO SAY MORE ON FREE LOGIC
%We will express t
%The winning condition for a $(\Sigma,C,V)$-game $G$ is an assertion in free logic~\cite{Scott-freelogic}  over the signature $(\Sigma,C,V)$.
An instantiation $(\sig, \rho)$, with $\sig:S\to G$ and $\rho:\mod S_V\to \mod A$, forms a model of the free logic.  
It specifies through $x\models_{\sig,\rho} \phi$ those configurations $x$ of $S$ which satisfy $\phi$. 
A novelty  is that a variable's value at $x$ is defined  by its latest move in $\last_S(x)$, if there is such. 
The naturalness of free logic stems from there being undefined terms. A variable may label an event in one configuration and not do so in another; without free logic, allowing undefined terms,  we would have to cope with undefinedness in a more ad-hoc way. Moreover, traditional predicate logic assumes a non-empty universe, which would make dealing with the empty configurations an issue.

%The free logic comprises the usual first-order assertions of the signature with an additional existence predicate $\EX$. 
%For a term $t\in C\cup V$, the predicate $\EX(t)$ specifies both those configurations $x$ of $S$ at which $t$ is defined and its %existent\yoav{existence?} 
%value there.  

%For example, the value of a variable $\al$ is its latest instantiation in $x$.  
%To treat all-in-one type games (Section~\ref{sec:SDconstrn}), we also introduce {\em equality constraints} $\al_1\pre \al_2$ which ensure that an  instantiation of $\al_2$ agrees with the earlier instantiation of $\al_1$. 
Our semantics of the free logic ensures: % the following property.
\begin{propositionrep}\label{prop:truthInvariance}
Let $f$ be a map of instantiations of $G$ in $\mathcal A$, from $(\sig,\rho)$ to $(\sig',\rho')$.  Let $\phi$ be an assertion in free logic.  Then, 
%not containing a term $(\al_1\pre \al_2)$
$x\models_{\sig,\rho} \phi$ iff  $f x \models_{\sig',\rho'} \phi$, for every configuration $x$.%, \quad \ie~\den\phi_{(\sig,\rho)} = f^{-1}\den\phi_{(\sig', \rho')}
%f\den\phi (\sig,\rho) \subseteq \den\phi (\sig',\rho')

%Assuming $f$ is also innocent (Section~\ref{sec:games-strats}),  for any assertion $\phi$ in the free logic,
%$$
%\den\phi (\sig',\rho') = f \den\phi (\sig',\rho')\,.
%$$
%If $f$ is rigid, the above holds for {\em all} assertions $\phi$. 
\end{propositionrep}
\begin{proof} 
By structural induction on $\phi$. Assume $\sig:S\to G$ and  $\sig':S'\to G$ with $f$ a map of instantiations from $(\sig,\rho)$ to $(\sig',\rho')$. The proof rests on $f\last_S(x) =\last_{S'}(f x)$, for any configuration of $S$, by Proposition~\ref{prop:last}.
\end{proof}
%A winning condition for a $(\Sigma,C,V)$-game is expressed by an assertion in free logic \cite{Scott-freelogic}.
%Free logic has the advantage of allowing assertions regarding empty configurations and obviates the need for explicitly mentioning the universe on which assertion are evaluated. An instantiation $(\sig, \rho)$, with $\sig:S\to G$, forms a model of the free logic specifying through $x\models_{\sig,\rho} \phi$ those configurations of $S$ at which assertion $\phi$ is true. 

%Having winning conditions as sentences in free logic lets us transfer some of the description of the game to the winning condition. %which becomes more explicit and tailored to the game.
%This increases the expressive power of the winning conditions of  Spoiler-Duplicator games which are normally the preservation of partial homomorphisms or isomorphisms. 
%***ABOVE SENTENCES WERE V UNCLEAR****

\section{Strategies %in games 
over %relational 
structures}\label{sec:strovrel}

A signature game over a $\Sigma$-structure $(G, \mathcal A)$ 
expands to a (traditional) concurrent game $\expn(G,\mathcal A)$ (Section~\ref{sec:games-strats}), from which we can derive strategies in games over relational structures. 
%\subsection{Expansion of a game over an algebra}
The idea is that each move associated with a variable $\al\in V$,  denoted $\boxempty_\al$, is expanded to all its  conflicting instances
$\xymatrix{\boxempty_\al^{a_1} \ar@{~}[r] &\boxempty_\al^{a_2} \ar@{~}[r]& \dots}$, where $a_i$ are elements of $\A$ with matching sort.
Precisely, paying more careful attention to causal dependencies---that such expansions can depend on earlier expansions, we arrive at the following definition. 
\begin{definition}[expansion]\label{def:expn}
 {\rm The {\em expansion} of a ($\Sigma$, C, V)-game over a $\Sigma$-structure $(G, \mathcal A)$ is the %\eswp~
esp $\expn(G,\mathcal{A})$, with
\begin{itemize}
\item
events defined as pairs $(g,\gamma)$, where
$\gamma:[g]_V\to  \mod{\A}$  assigns an element of $\A$  of the correct sort to each $V$-move on which $g\in\mod G$ causally depends;
 \item causal dependency $\leq$ defined as
$(g',\gamma')\le(g,\gamma)$ iff $g'\le_G g$ and $\gamma'=\gamma\upharpoonright[g']_V$; 
\item \sloppy conflict relation $\hash$ defined as
$(g,\gamma)\hash (g',\gamma')$  iff  $g\,\hash_G\, g'$  or ${\exists g''\leq_G g, g'.\ \gamma(g'') \neq \gamma'(g'')}$; 
\item polarity map $\pol(g,\gamma) =\pol_G(g)$ inherited from $G$.
%$X\in Con_{expn(E)}$ iff $\Pi_1X\in Con_E$ and $\forall(e,\gamma),(e',\gamma')\in X,\ e=e'\Rightarrow\gamma=\gamma'$.\\
%The event structure $expn(E,\mathcal{A},var)$ will sometimes be referred as $expn(E)$ or $expn_\mathcal{A}(E)$. 
%It inherits 
\end{itemize}}
\end{definition}

There is an obvious generalisation to an  expansion w.r.t.~$V_0\subseteq V$%, a subset of the variables $V$
; the role of variables $V$ above is simply replaced by the subset $V_0$. In Section~\ref{sec:charn}, we shall use a {\em partial expansion} w.r.t.~Opponent variables.

Let $\red: |\expn(G,\mathcal{A})| \to \mod G$ be the map such that $\red:(g,\gamma)\mapsto g$; it is an %rigid 
open map of event structures since each sort of $\A$ is nonempty. 
Let  $\Inst: |\expn(G,\mathcal{A})|_V \to \mod\A$ be the map such that $\Inst(g,\gamma)=\gamma(g)$ for $g\in\mod G$ with $\Var(g)\in V$.
%Together $(\red, \Inst)$ form an {\em instantiation} of $G$ in $\A$.
%
The {\em  winning condition} of the expansion is  $$W\eqdef \set{x\in\iconf{\expn(G,\mathcal A)}}{x \models_{\red,\Inst} W_G}.$$
The expansion provides a universal instantiation of  $(G, \mathcal A)$.

\begin{lemmarep}[universal instantiation]\label{lem:universalinst}
The pair $(\red, \Inst)$ forms an {\em instantiation} of $G$ in $\A$.
For each instantiation $(\sig:S\to G, \rho)$ of $G$ in $\A$, there is a unique map $\sig_0$ of instantiations from $(\sig, \rho)$  to $(\red, \Inst)$ such that the following diagrams commute:
\[\vcenter{\xymatrix{
S\ar[dr]_\sig \ar@{-->}[r]^{\sig_0\ \ \ }& \expn(G, \A)\ar[d]^{\red}\\
&G}}\hspace{5em}
\vcenter{\xymatrix{
\mod S_V\ar[dr]_\rho \ar@{-->}[r]^{\sig_0\ \ \ }& \mod{ \expn(G, \A)}_V\ar[d]^{\rho_{G,\A}}\\
&\mod \A}}
\]
%****s and $ \all  s\in \mod S_V.\ \rho(s) =   \Inst(\sig_0(s))$. 

\noindent
 %Moreover, t
 Moreover, $\red$ is open; $\sig$ is rigid iff $\sig_0$ is rigid; and $\sig$ is open iff $\sig_0$ is open.
\end{lemmarep}
\begin{proof}
For $s\in\mod S$, define $\sig_0(s) = (g,\ga)$ where $g=\sig(s)$ and for $g'\leq_G g$ with $\Var(g')\in V$,
$\ga(g') = \rho(s')$ for the unique $s'\leq_S s$  such that $\sig(s') = g'$.  Then, $\sig_0$ can be checked to be a map of event structures $\sig_0:S\to \expn(G,\A)$ and that it is the unique map of instantiations from $(\sig, \rho)$ to $(\red, \Inst)$. 
The map $\red$ is open because of the nonemptiness of $\A$.  
Because $\red$ is open, it is easy to check that $\sig_0$ is rigid/open according as $\sig$ is rigid/open, 
and {\it vice versa}.
\end{proof}

%\paragraph{Notation.} 
From the  universality of Lemma \ref{lem:universalinst} and Proposition~\ref{prop:truthInvariance}, the truth of an assertion w.r.t.~an instantiation reduces to its truth w.r.t.~the universal instantiation $(\red, \Inst)$ of the expansion.
We shall write
$x\models \phi$ for $x\models_{\red, \Inst}\phi$.
Given 
$x\in \iconf G$  and a sort-respecting $\rho:\mod x_V\to \mod\A$, write $x[\rho]\models \phi$ when $x\models_{j,\rho} \phi$ w.r.t.~the instantiation $(j,\rho)$ based on the 
inclusion map $j:x \hookrightarrow G$; we shall identify $x[\rho]$ with the configuration of $\expn(G,\A)$ obtained via universality as an image of $x$.

%\begin{definition}
%, for truth w.r.t.~the universal instantiation.
%.
%\end{definition}
%; 
%the truth of an assertion %of the free logic 
%at a configuration reduces to its truth at the image of the configuration in the expansion.  
\begin{toappendix}
%\section{Appendix for Section \ref{sec:strovrel}}
\subsection{Explicit characterisation of strategies on games over structures}\label{sec:inststrat}

\end{toappendix}
%Let $x\in\iconf G$ and $\rho:  x_V \to \mod{\A}$. 
%We write $x[\rho]\models \phi$ to denote $x[\rho]\models_{\red,\Inst} \phi$.
 % (Equivalently, $x[\rho]\models \phi$ means $x\in\den\phi (i,\rho)$ where $i:x\hookrightarrow G$ is the inclusion map regarding $x$ as inheriting the structure of a conflict-free event  structure from $G$.)
%\subsection{Strategies in games over algebras}

We can now give a central definition of this paper:

\begin{definition}[strategy in a game over a $\Sigma$-structure]\label{def:siggamestrat}
{\rm 
Let $(G, \mathcal A)$ be a game over a $\Sigma$-structure.
A {\em strategy in} $(G, \mathcal A)$ is an instantiation $(\sig, \rho)$ of $G$ in $\A$ for which $\sig_0$, the unique map of instantiations from Lemma~\ref{lem:universalinst}, is a concurrent strategy in $\expn(G, \mathcal A)$. 

Say $(\sig, \rho)$ is {\em rigid/open} according as $\sig_0$ is rigid/open. 
%(Consequently, $(\red, \Inst)$ is a strategy, though generally not winning or deterministic, in  a game over algebra $(G, \A)$.) 
Say $(\sig, \rho)$ is {\em winning} if $\sig_0$ is winning in $\expn(G,\A)$ and %that it is 
{\em deterministic} if $\sig_0$ is deterministic. 
}\end{definition}

An explicit characterisation of strategies in $(G, \mathcal A)$ %in terms of instantiations 
    can be given [appendix \ref{sec:inststrat}]. As expected, a  {\em strategy} in a game over $\A$ assigns  values in $\mathcal A$ to Player moves of the game $G$ in answer to assignments of Opponent.  It satisfies the same condition of innocence as concurrent strategies while the original condition of receptivity is generalised so that Opponent is free to make arbitrary assignments to accessible variables of negative polarity~\cite{sacha}.

Strategies in $(G, \mathcal A)$, a game over a $\Sigma$-structure, are related by maps of instantiations.
Lemma~\ref{lem:universalinst} provides an isomorphism between the ensuing category of strategies in $(G, \mathcal A)$  
and the category of strategies in $\expn(G,\A)$.

\setcounter{subsection}{1}
\subsubsection{The bicategories of signature games.}
Signature games are forms of game schema, useful for describing possible patterns of play without committing to particular instantiations.  Let $G$ be a $(\Sigma, C, V)$-game. Its \emph{dual} $G^\perp$  is the $(\Sigma, C, V)$-game 
obtained by reversing polarities and negating the winning condition. If $H$ is a $(\Sigma', C', V')$-game, the \emph{parallel composition} $G\vvbar H$ is the obvious $(\Sigma+\Sigma', C+C', V+V')$-game comprising  the parallel juxtaposition of event structures with winning condition the disjunction of those of $G$ and $H$ (strictly, after a renaming of constants and variables associated with the sums $C+C'$ and  $V+V'$).  
Consequently $G^\perp\vvbar H$ has winning condition the implication $W_G\to W_H$.  The bicategory of signature games $\GAMESIG$  
has as arrows from $G$ to $H$ concurrent strategies $\delta: D\to G^\perp\vvbar H$ ---strictly speaking in the esp of  $G^\perp\vvbar H$ ---with 2-cells and  composition inherited from that of concurrent strategies.  Notice
the esp $D$ inherits the form of a signature game from $G^\perp\vvbar H$ and itself describes a  pattern of strategical play, now from $G$ to $H$. The bicategory $\GAMESIG$  %of signature games 
describes schematic patterns of play between signature games.

 Let $(G, \mathcal A)$, % and 
 $(H, \mathcal B)$ be games over relational structures. 
 \sloppy 
 A \emph{winning strategy from} $(G, \mathcal A)$ \emph{to} $(H, \mathcal B)$, written 
 $(\sig,\rho):(G, {\mathcal A}) \profto (H, \mathcal B)$,
is a winning strategy in the game $(G^\perp\vvbar H, {\mathcal A} + {\mathcal B})$. 
Lemma~\ref{lem:universalinst} provides an isomorphism between strategies over structures and  concurrent strategies, so providing
a bicategory $\RED$ of signature games and strategies instantiated over relational structures [appendix \ref{app:RED}]. 
The identity, copycat strategy of $(G,\A)$, is the instantiation $(\cc_{G,A},\ga_{G,A})$ which corresponds to the copycat strategy of $\expn(G,\A)$. 
 Because winning strategies compose, a winning strategy $(\sig,\rho)$ in $\RED$ from $(G, {\A})$ to $(H, \B)$ is a form of \emph{reduction}. It reduces  the problem of finding a winning strategy in $(H, \B)$  to finding a winning strategy in $(G, {\A})$: a winning strategy in $(G, {\A})$ is a winning strategy from the empty game and structure $(\emptyset, \emptyset)$  to $(G, {\A})$;  its composition with $\sig$  is a winning strategy in $(H, \B)$.
\setcounter{subsection}{2}
\subsubsection{Examples of games.}
As remarked above, w.r.t.~a strategy in $\GAMESIG$ 
such as $\cc_G$, the copycat strategy on a signature game $G$, we obtain a signature game, in this case based on the esp $\CC_G$, describing a pattern of play of strategies from $G$ to $G$.   
All the following examples illustrate such copycat patterns of play. 
\begin{example}[multigraph game revisited]\label{ex:multigph-revisited} {\rm 
Example~\ref{ex:multigph} can be recast as a copycat signature game $\CC_G$ w.r.t.~a signature game $G$ with winning condition $W$.  This decomposes the original multigraph game to a game {\em from} $G$ {\em to} $G$,  so a form of function space.  The resulting game has winning condition $W_A\to W_B$, where in $W_A$ the variables of $W$ have been replaced by $\al_1, \al_2$ and in $W_B$ by $\be_1, \be_2$.
Here we are adopting a convention maintained throughout our examples. If $G$ has variables $V$ then $G^\perp\vvbar G$ and $\CC_G$ have variables the disjoint union $V+V$: we shall write $\al, \al_1, \al_2, \cdots$ for the variables in the left component, and $\be, \be_1, \be_2, \cdots$
for variables in that on the right. To ensure that the winning condition $(W_A\to W_B)$ implies the required homomorphism property, that each edge relation is preserved individually, we adjoin {\em challenge events} associated with each edge relation---a small adjustment needed for compositionality.

In detail, 
take $G$ to comprise challenge events, three constant-labelled Opponent moves  $e$, $r$ and $g$, pairwise in conflict with each other, in parallel with two variable-labelled concurrent Player moves $\plmove_{\beta_1}$ and $\plmove_{\beta_2}$.  Its winning condition is 
$$
\eqalign{
W_B\equiv\  
 &(\EX(e) \to \be_1 = \be_2) 
\ \wedge\cr
&(\EX(r) \to red(\be_1,\be_2))
\ \wedge\ 
(\EX(g) \to green(\be_1,\be_2)).
}
$$
The game $\CC_G$ is illustrated in Figure \ref{fig:multigraphgamerevisited}. 
An open winning deterministic strategy of $\CC_G$ over $\A+\B$, for multigraphs $\A$ and $\B$, has the form $(\sig,\rho)$ with $\sig:S\to\CC_G$.  
For any +-maximal configuration $x$ of $S$, we have   $x\models_{\sig, \rho}W_A\to W_B$.  
If $x$ contains a challenge $\opmove_e$, $\opmove_r$ or $\opmove_g$, by +-maximality it must also contain the corresponding $\plmove_e$, $\plmove_r$ or $\plmove_g$.  Hence if also $x\models_{\sig,\rho} W_A$, \ie~it establishes the equality or edge between $\al_1$ and $\al_2$  corresponding to the challenge, it will establish the corresponding equality or edge between $\be_1$ and $\be_2$.  
More completely, 
open winning deterministic strategies of $\CC_G$ over $\A+\B$
correspond to homomorphisms from $\A$ to $\B$. The key observation here is that because $(\sig,\rho)$ is deterministic and open, each assignment $\opmove_{\al_i}^a$ in any +-maximal configuration of $S$ always determines the same assignment $\plmove_{\be_i}^b$, and moreover, because equality is preserved, this assignment is the same whether $i=1$ or $i=2$; thus $(\sig,\rho)$ determines a function from $\mod\A$ to $\mod\B$ which, as argued above also preserves red and green edges.  
\begin{figure}
    \centering
\includegraphics[scale=0.30]{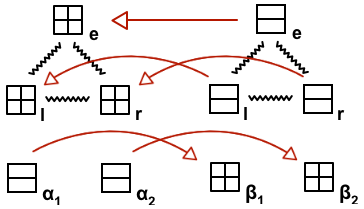}
    \caption{The multigraph game as a $\CC_G$ game}
\label{fig:multigraphgamerevisited}
\end{figure}
}\end{example}

\begin{example}[$k$-pebble game]\label{kpebble}
The (one-sided) $k$-pebble game \cite{kolaitisvardi} on two relational structures $\A$ and $\B$ is %well known for their 
prominent in the field of finite model theory and %role 
%in 
the study of constraint satisfaction problems.
It can be explained as a copycat signature game $\CC_G$ w.r.t.~a signature game $G$.
Thus $\CC_G$ describes a pattern of strategy between  a copy $G^\perp$ associated with the structure $\A$ ---accordingly its variables are $\alpha_j$, for $j\in\{1,\dots,k\}$ ---and a copy $G$ associated with $\B$ ---its variables are  $\beta_j$, for $j\in\{1,\dots,k\}$ (Figure \ref{fig:kpebble}).  
The right copy $G$ describes sequences of potential assignments by Duplicator (Player) to variables $\beta_j$ as well as \emph{challenge events} of Spoiler (Opponent), labelled with a constant $c_i{(\vec \beta)}$, associated with realising a relation $R_i(\vec\be)$. Challenge events are all in conflict with each other and %are 
concurrent with all other events; again, the concurrency supported by 
event structures is essential. The winning condition of $G$ is 
$$W_B\equiv\bigwedge_{i, \vec\beta}  \EX(c_i{(\vec\beta)}) \to R_i(\vec \beta).$$
The conjunction above is over all pairs $i,\vec\beta$ such that  $R_i(\vec\beta)$ is well-sorted. 
Its existence predicate  $\EX(c_i{(\vec\beta)})$ is true only at configurations in which an event labelled $c_i{(\vec\beta)}$ has occurred.  So $W_B$ is true precisely at a configuration %of $\expn(G,\B)$ 
where any occurrence of a challenge $c_i{(\vec\beta)}$ is accompanied by latest instantiations which make $R_i(\vec\beta)$ true in $\B$.
The dual, left copy of $G^\perp$ has variables
 $\alpha_i$, awaiting an assignment in $\A$ by Spoiler (Opponent), and challenge events labelled $c_i{(\vec \al)}$ which are now moves of Duplicator (Player). Its winning condition is $\neg W_A$ where $W_A\equiv\bigwedge_{i,\vec\al}  \EX(c_i{(\vec\al)}) \to R_i(\vec \al)$ ---a conjunction over pairs $i,\vec\alpha$ with $R_i(\vec\alpha)$ well-sorted. 
The signature game $\CC_G$ comprises the  parallel composition $G^\perp\vvbar G$ with the additional 
  red arrows, further constraining the play so Duplicator awaits the corresponding move of Spoiler.  Its winning condition is $W_A\to W_B$. Its open deterministic winning strategies coincide with those of the traditional $k$-pebbling game.  
  The arbitrary and independent nature of Spoiler's challenges, events labelled $c_i{(\vec\al)}$, ensures  that in any +-maximal configuration,  if $R_i(\vec\al)$ in $\A$ then $R_i(\vec\be)$ in $\B$.
  Its rigid deterministic winning strategies are similar but need not force a Player assignment to $\vec\be$ in response to an assignment of Opponent to $\vec\al$ for which $R_i(\vec\al)$  fails.
\end{example}

\begin{figure}
   \centering
\includegraphics[scale=0.30]{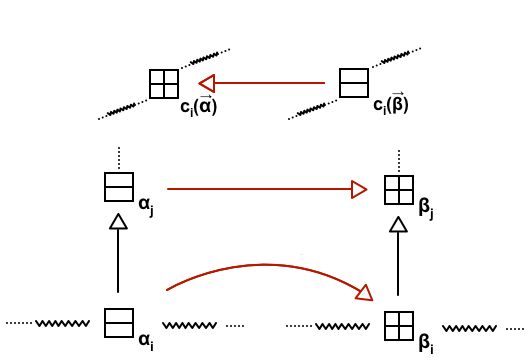}
    \caption{The $k$-pebble game}%(right).}
    \label{fig:kpebble}
\end{figure}
\begin{example}[simulation game]\label{simulation}
The simulation game expresses the one-sided version of bisimulation \cite{park1981}. Its open %rigid 
winning deterministic strategies correspond to simulations. 
 We consider the simulation game (Figure \ref{fig:simulation}) on two  transition systems $(\mathcal A,a)$ and $(\mathcal B,b)$ with start states $a$ and $b$ respectively.
The game has two variables for each player.  A player makes assignments (in $\A$ for Spoiler and $\B$ for Duplicator) to infinite alternating sequences of their variables intertwined with challenges from the other player; the challenges $c_i$ specify transitions % relations 
$R_i$.
Distinct sequences are in conflict with each other.
The red arrows represent the copycat strategy.
%The intertwined challenges ensure that only the actions in the direction of the transition system need to be preserved in a winning strategy. 
The two initial events  labelled by constants $\mathsf{st}$ are challenges associated with the $Start$ predicate used to identify the start states, which we assume have no transitions into them.
The simulation game can be constructed as a copycat signature game $\CC_G$, this time w.r.t.~a signature game $G$ with winning condition
$W_1$ below; the order constraints ensure that only the transitions in the direction of the transition system need be preserved: % in a winning strategy. 
\begin{align*}
 (\EX(st) \to Start(\beta_1)) \ &\wedge
\bigwedge_{i} c_i\prel \be_2 \to R_i(\beta_1,\beta_2)
\\ &\wedge
\bigwedge_{i} c_i  \prel \be_1 \to R_i(\beta_2,\beta_1).  
\end{align*}
\end{example}
\begin{figure}
   \centering
\includegraphics[scale=0.30]{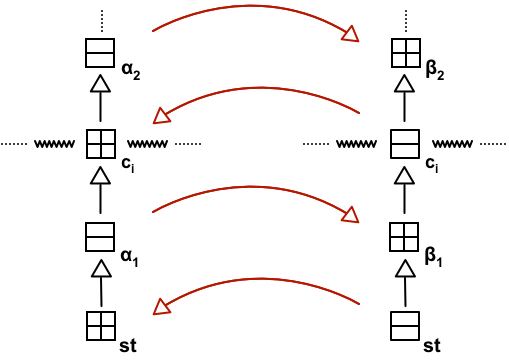}
    \caption{The simulation game}%(right).}
    \label{fig:simulation}
\end{figure}
\begin{example}[Ehrenfeucht-Fra\"iss\'e game]{\rm 
As an example of a two-sided game, we present an Ehrenfeucht-Fra\"iss\'e game for checking the isomorphism of two $\Sigma$-structures. 
In such a game the aim of Player is to establish a partial bijection in which the relations of the structures are both preserved and reflected.  For this reason we extend the ``challenge'' events we have seen earlier in the k-pebble game with ``negative challenges'' associated with relations failing to hold; in addition to constants  $c_i(\vec\be)$, we have constants $nc_i(\vec\be)$ in a winning condition
$$
W_B\equiv \bigwedge_{%0\leq i\leq n
i, \vec\be} [(\EX(c_i(\vec\be)) \to R_i(\vec\be))\wedge
(\EX(nc_i(\vec\be)) \to \neg R_i(\vec\be)]
$$
---the conjunction above is over all pairs $i,\vec\beta$ s.t.~%$0\leq i\leq n$ and 
$R_i(\vec\beta)$ is well-sorted. 
%****YOAV, WE NEED TO ALSO CONJUNCT OVER ALL $\vec\be$ EARLIER TOO!!! HOW CAN THIS BE MADE NEATER???****
%

The Ehrenfeucht-Fra\"iss\'e game
is defined as a copycat signature game $\CC_{G}$, where the $(\Sigma, C, V)$-game $G = G_0\vvbar Ch$ comprises a subgame associated with challenge events $Ch$ in parallel with a subgame $G_0$.  
The subgame $Ch$ has events comprising all the constants $c_i(\vec\be)$ and $nc_i(\vec\be)$ in pairwise conflict with each other---they are all  Opponent moves. It is convenient to provide a recursive definition of $G_0$, borrowing from early ideas on defining event structures recursively in~\cite{icalp82}, where also the (nondeterministic) sum and prefix operation are introduced. From there, recall the relation $\tri$ between event structures: $E'\tri E$ iff $\mod{E'} \subseteq \mod E$ and the inclusion $E'\hookrightarrow E$ is a rigid map; then $\tri$ forms a (large) cpo w.r.t.~which operations such as sum, prefix and dual are continuous.  Using these techniques, we construct $G_0$ as the $\tri$-least solution to
$$
G_0= \opmove_l.\Sigma_{\be\in V}\plmove_\be.G_0 +
\opmove_r.\Sigma_{\be\in V}\opmove_\be.G_0^\perp.
$$
In $G_0$, initially Spoiler chooses to ``leave''---the move $\opmove_l$ ---or ``remain''---the move $\opmove_r$ ---at the current structure. If they remain, they assign a value in the current structure before the game resumes as the dual $G_0^\perp$; whereas if they leave, Duplicator assigns a value in the current structure and the game resumes as  $G_0$.

Through this definition of $G$, we achieve the following behaviour in $\CC_{G}$ which, recalling $\cc_G:\CC_G\to G^\perp\vvbar G$, relates assignments in $\Sigma$-structure $\A$ to variables $\al$ over $G^\perp$ on the left and assignments in $\Sigma$-structure $\B$ to corresponding variables $\be$  over $G$ on the right.  
(For simplicity, in Figure~\ref{fig:efgame} we only draw that part of $\CC_G$ involving $G_0$.)
Initially Spoiler chooses to leave $\opmove_l$ or remain $\opmove_r$ at the current structure $\B$. 
If Spoiler chooses to remain, they follow this by assigning a value in the structure $\B$  to a variable, a move $\opmove_\be$ over $G$; through the causal constraints of copycat, these moves are answered by Duplicator moves, the first of which acknowledges the choice of side while the second answers Spoiler's choice in $\B$ with a choice of element in $\A$, a move $\plmove_\al$. This is followed by a choice of Spoiler to leave or remain at $\A$ on the left. If alternatively Spoiler chooses to leave and play on the left, they will make an assignment in $\A$ and the next move on the right will be an assignment by Duplicator in $\B$. This accomplished, Spoiler resumes by again choosing a side on which to play.  
Open winning strategies of $\CC_G$ maintain partial isomorphisms between $\A$ and $\B$ whatever the moves of Spoiler. 
\begin{figure}
    \centering
    \includegraphics[scale=0.30]{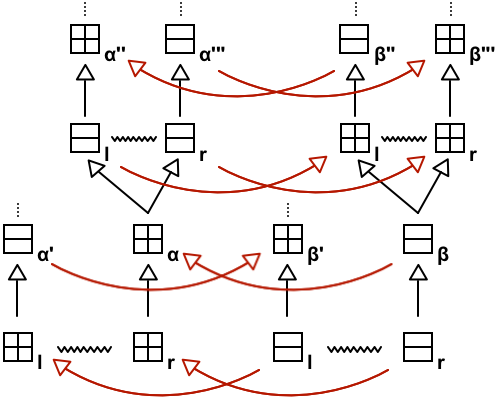}
    \caption{The Ehrenfeucht-Fra\"iss\'e game}
    \label{fig:efgame}
\end{figure}}
\end{example}

 \begin{toappendix}
Below, we use 
 $x\longcov s$ %, where $x$ is a configuration and $s$ is an event, 
 to mean that an event $s$ is enabled at a configuration $x$, \ie~that $s\notin x$ and $x\cup\setof s$ is a configuration.
\begin{theorem}[explicit characterisation]\label{thm:stratcharn}
An  instantiation $(\sig:S\to G, \rho)$ is a strategy in $(G, \A)$  iff
  \begin{itemize}
\item 
%which are  {\em receptive}:
for $x\in\iconf S$, $g\in\mod G$, with  $\sig x \longcov g$ and $\pol(g) =-$\,,
\begin{itemize}
\item
if $\Var(g)\in C$ then there is a unique $s\in \mod S$ such that $x\longcov s$ and $\sig(s) = g$;
\item
%For $x\in\iconf S$, 
%if $\sig x \longcov g$ and $\pol(g) =-$ and 
if $\Var(g)=\al\in V$ then for all $a\in \mod A$ with $sort(a) =sort(\al)$ there is a unique $s\in \mod S$ such that $x\longcov s$ and $\sig(s) = g$ and $\rho(s) =a$; 
\end{itemize}
%if $\sig x \sub \item %{\em innocent}:
\item
 if 
$s\imc_S s'$ and [$\pol(s) = +$ or $\pol(s') = -$], then 
$ \sigma(s)\imc_G \sigma(s')$.
 \end{itemize}
 \end{theorem}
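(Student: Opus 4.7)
The plan is to unfold the definition of a strategy in $(G,\A)$ --- that $\sig_0: S\to \expn(G,\A)$ is a concurrent strategy --- and translate the two defining clauses (receptivity and innocence) into the two bullets of the statement. Throughout, I exploit the universality of $(\red,\Inst)$ from Lemma~\ref{lem:universalinst}: $\sig_0$ is the unique map of instantiations with $\red\circ\sig_0=\sig$, and explicitly $\sig_0(s) = (\sig(s),\gamma_s)$ where $\gamma_s(g') = \rho(s')$ for the unique $s'\le_S s$ with $\sig(s')=g'$.

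First I would establish a small structural lemma: $\red$ preserves and reflects immediate causal dependency, i.e.~$(g,\gamma)\imc_{\expn(G,\A)}(g',\gamma')$ iff $g\imc_G g'$ and $\gamma=\gamma'\upharpoonright[g]_V$. Preservation is immediate from the definition of $\le$ in $\expn(G,\A)$; reflection follows because any intermediate $g''$ in $G$ with $g<g''<g'$ lifts to an intermediate expansion event $(g'',\gamma'\upharpoonright[g'']_V)$. Applying this to $\sig_0(s), \sig_0(s')$ (which lie in a common configuration whenever $s\imc_S s'$) and using that total maps such as $\sig_0$ preserve $\le$ between causally related events of a configuration, the innocence of $\sig_0$ in $\expn(G,\A)$ becomes equivalent to the condition $\sig(s)\imc_G\sig(s')$ of the second bullet.

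Next I would handle receptivity. It suffices to reduce to single-step extensions: $\sig_0 x\longcov e$ with $\pol(e)=-$ has a unique lifting $x\longcov s$ with $\sig_0(s)=e$. Writing $e=(g,\gamma)$, the event $e$ being enabled at $\sig_0 x$ forces $\sig x \longcov g$ and $\pol(g)=-$; moreover, conflict in the expansion ($(g,\gamma)\hash(g,\gamma')$ whenever $\gamma,\gamma'$ disagree on some common ancestor in $[g]_V\setdif\setof g$) together with down-closure forces $\gamma\upharpoonright[g]_V\setdif\setof g$ to agree with the values recorded in $\sig_0 x$. The only remaining freedom is: if $\Var(g)\in C$, $\gamma$ is fully determined and so $e$ is unique above $g$, giving the first sub-bullet; if $\Var(g)=\al\in V$, $\gamma(g)$ ranges freely over $\mod{\A}_{sort(\al)}$, and a lift $s$ with $\sig_0(s)=(g,\gamma)$ is equivalently a lift with $\sig(s)=g$ and $\rho(s)=\gamma(g)$, giving the second sub-bullet. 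Uniqueness of $s$ matches the local-injectivity condition on $\sig_0$ as a map of event structures. Conversely, assuming the two sub-bullets hold, iterating single-step extensions along $\sig_0 x\subseteq^- y$ yields the full receptivity of $\sig_0$.

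The main obstacle is this receptivity analysis: one must verify carefully that the cases $\Var(g)\in C$ versus $\Var(g)\in V$ exhaust the Opponent enablings in $\expn(G,\A)$ above $\sig_0 x$, and that conflict and causal dependency in the expansion conspire to eliminate all other freedom. Once this combinatorial bookkeeping is done, the two equivalences, one per clause, combine to give the characterisation.
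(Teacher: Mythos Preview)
Your proposal is correct and follows the same route as the paper: both arguments reduce the question to $\sig_0$ being a concurrent strategy in $\expn(G,\A)$ and then transport receptivity and innocence across the open map $\red$. The paper's proof is a one-line deferral to properties of $\red$ established elsewhere (the cited internship report~\cite{sacha}), whereas you spell out precisely those properties --- that $\red$ preserves and reflects immediate dependency, and that the Opponent-enabled events of the expansion above $\sig_0 x$ are parametrised exactly by the $C$/$V$ case split on $g$ --- so your write-up is effectively an expansion of what the paper leaves implicit.
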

 \begin{proof}
This follows directly from a strategy in a $(G,A)$ being the composition of a strategy in the expansion $\expn(G,\A)$ and $\red$, and properties of the latter~\cite{sacha}.
\end{proof}
\subsection{$\RED$, the bicategory of games over structures}\label{app:RED}

We explain how  composition in $\RED$  is inherited from composition in $\Strat$. 
It is easy to see that the expansion preserves the dual and parallel composition of games over relational structures.   Below, $(\emptyset, \emptyset)$ describes the empty game over the empty structure, with winning condition $\it false$, the unit w.r.t.~the parallel composition of games over structures.  

\begin{lemma}\label{lem:expnpres} Let $(G,\A)$ and $(H,\B)$ be games over relational structures.  Then, 
$$
\expn(\emptyset, \emptyset) = \emptyset\,, \ 
\expn((G,\mathcal{A})^\perp) = \expn(G,\mathcal{A})^\perp 
\hbox{ and }
\expn((G,\mathcal{A})\vvbar (H,\B)) \iso \expn(G,\mathcal{A})\vvbar \expn(H,\mathcal{B})\,.
$$
\end{lemma}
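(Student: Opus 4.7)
The plan is to dispatch the three assertions separately: the first two follow by direct inspection of Definition~\ref{def:expn}, while the third requires an explicit bijection of events which I would set up and then verify preserves the event-structure-with-polarity structure.

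For $\expn(\emptyset, \emptyset) = \emptyset$, I would simply observe that events of the expansion are pairs $(g, \gamma)$ indexed by events $g$ of the underlying signature game; the empty game has no events, hence no expansion events. For the dual, the key observation is that the underlying set, causal dependency, and conflict of $\expn(G, \A)$ depend only on the non-polarity data of $G$ together with $\A$, and the polarity of $(g, \gamma)$ is inherited pointwise via $\pol(g, \gamma) = \pol_G(g)$. Dualising $G$ flips exactly $\pol_G$ and leaves events, $\leq_G$, $\hash_G$, and the labelling $\Var$ unchanged, so dualising commutes with expansion, yielding the required equality on the nose.

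For the parallel composition, I would define a map
\[
\varphi: \expn(G \vvbar H, \A+\B) \to \expn(G,\A) \vvbar \expn(H,\B)
\]
by case analysis on the component of the underlying event. An event on the left is a pair $(g, \gamma)$ with $g \in \mod{G \vvbar H}$ and $\gamma: [g]_V \to \mod{\A+\B}$ sort-respecting. Because causal dependency in $G \vvbar H$ decomposes along components, the down-closure $[g]_V$ lies entirely in one component; moreover, since the sort sets contributed by $\A$ and $\B$ are disjoint in the sum, sort-respect forces $\gamma$ to take values in $\mod\A$ or in $\mod\B$ according to that component. Send $(g, \gamma)$ to the corresponding event of $\expn(G,\A)$ or $\expn(H,\B)$ placed in the appropriate side of $\vvbar$; the inverse simply reinjects each component's expansion events back into the whole.

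It remains to verify that $\varphi$ preserves causal dependency, conflict, polarity, and the labelling. Causal dependency, polarity, and labelling transfer immediately, since they are defined componentwise on both sides. The main (and only) obstacle is the conflict relation: I must check that events $(g, \gamma)$ and $(g', \gamma')$ with $g \in G$ and $g' \in H$ are never in conflict in $\expn(G \vvbar H, \A+\B)$. Neither disjunct in the conflict clause can hold across components: $g \hash_{G\vvbar H} g'$ fails by definition of $\vvbar$, and no witness $g'' \leq g, g'$ to an assignment disagreement exists since the down-closures of $g$ and $g'$ are disjoint in $G \vvbar H$. Hence the conflict relation of $\expn(G \vvbar H, \A+\B)$ decomposes exactly along the components, matching the conflict of $\expn(G,\A) \vvbar \expn(H,\B)$ and establishing the isomorphism.
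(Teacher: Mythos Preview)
The paper provides no proof of this lemma; it is stated as ``easy to see'' and left to the reader. Your direct verification from Definition~\ref{def:expn} is exactly the intended argument, and the details you supply---in particular the check that conflict in $\expn(G\vvbar H,\A+\B)$ cannot cross components because down-closures in $G\vvbar H$ are confined to a single side---are correct.

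One small point you may wish to add for completeness: the expansion carries not only an esp but also a winning condition, and the lemma is used in the paper to transport \emph{winning} strategies between $\RED$ and $\Strat$. So strictly you should also verify that winning conditions match under dual and parallel composition. This is immediate from the free-logic semantics: $x\models_{\red,\Inst}\neg W_G$ iff $x\notin\den{W_G}$, giving the complement required by $(\_)^\perp$; and, using that the configuration $x$ of the parallel expansion decomposes as $x_G\vvbar x_H$ with each projection an instantiation of the respective component, $x\models W_G\vee W_H$ iff $x_G\models W_G$ or $x_H\models W_H$, matching the winning condition of $\expn(G,\A)\vvbar\expn(H,\B)$. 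This is routine, but worth a sentence since the paper's subsequent use of the lemma relies on it.
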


From Lemma~\ref{lem:expnpres} and Lemma~\ref{lem:universalinst},  which provides an isomorphism between strategies in games over relational structures and traditional concurrent strategies, we  
have  
$$
\RED((G,\mathcal{A}), (H,\B))
\iso
\Strat(\expn(G,\mathcal{A}), \expn(H,\B))\,.
$$
Write $(\sig,\rho): (G,\mathcal{A})\profto (H,\B)$ when $(\sig,\rho)\in \RED((G,\mathcal{A}), (H,\B))$.

We derive composition in $\RED$ from the composition of %concurrent strategies 
$\Strat$.  Let  $(\sig,\rho_S): (G,\mathcal{A})\profto (H,\B)$ and 
$(\tau,\rho_T): (H,\mathcal{B})\profto (K,\C)$.  By  Lemmas~\ref{lem:universalinst} and~\ref{lem:expnpres},   there are associated traditional strategies  $\sig_0: \expn(G,\mathcal{A})\profto \expn(H,\B)$ and 
$\tau_0: \expn(H,\mathcal{B})\profto \expn(K,\C)$.  Their composition $\tau_0\scirc \sig_0: \expn(G,\mathcal{A})\profto  \expn(K,\C)$ is a strategy in the game
$$\expn(G,\mathcal{A})^\perp\vvbar  \expn(K,\C)\iso \expn((G,\mathcal{A})^\perp\vvbar (K,\C))\,,$$ %which 
by Lemma~\ref{lem:expnpres}% is isomorphic to $\expn((G,\mathcal{A})^\perp\vvbar (K,\C))$
; hence, by Lemma~\ref{lem:universalinst}, $\tau_0\scirc \sig_0$ determines a strategy $$(\tau,\rho_T)\scirc(\sig,\rho_S): (G,\mathcal{A})\profto  (K,\C)\,.$$

The composition $(\tau,\rho_T)\scirc(\sig,\rho_S)$ can be expressed directly,  %via pullback and the factorisation of partial maps, 
essentially mimicking the composition of traditional concurrent strategies, though with one key difference.  Instead of just forming the pullback of $\sig\vvbar K$ and $G\vvbar \tau$ we must restrict its configurations to those $y\sncirc x$ for which $\rho_S$ and $\rho_T$ agree over the common part $x_H=y_{H^\perp}$. Having done this, the construction of composition proceeds as for the composition of concurrent strategies, finally defining the resulting instantiation to agree 
%in $\Strat$:
%$$(\tau,\rho_T)\scirc (\sig,\rho_S) = (\tau\scirc \sig, \rho)$$ where we notice that the earlier definition of $\tau\scirc \sig$ can also be applied here and where 
with $\rho_S$   over the game $G^\perp$ and $\rho_T$ over the game $K$.
 \end{toappendix}

\section{Spoiler-Duplicator games %deconstructed
}
 
%In this section, w
We provide a general method for generating Spoiler-Duplicator games between relational structures; it goes far beyond the examples of the last section, all of which were based on copycat signature games.  
A category of Spoiler-Duplicator games $\SD_\delta$ is determined by a deterministic,  
idempotent comonad $\delta$ in the bicategory of signature games~\cite{Street}, %$\GAMESIG$, 
so a deterministic strategy $\delta\in \GAMESIG(G,G)$ forming an idempotent comonad. The comonad $\delta$ describes a pattern of play: its counit ensures that this pattern respects copycat; its idempotence that the pattern is preserved under composition.
To characterise the strategies in $\SD_\delta$ we shall introduce the partial expansion of a signature game.

\setcounter{subsection}{1}
\subsubsection{Spoiler-Duplicator games deconstructed.}\label{sec:SDconstrn}
A  deterministic,  
idempotent comonad $\delta$ in $\GAMESIG$ provides us 
with a deterministic strategy $\delta:D\to G^\perp\vvbar G$, on a game $G$ with signature $(\Sigma, C, V)$, with counit $c:\delta\Rightarrow \cc_G$ and invertible comultiplication $d:\delta \Rightarrow \delta\scirc \delta$ satisfying the usual comonad laws.

%[appendix \ref{}]

%Given a game $G$ with signature $(\Sigma, C, V)$, let $\delta$ be a deterministic idempotent comonad
%$\delta:D \to G^\perp\vvbar G$ in the bicategory $\GAMESIG$.\yoav{How about now?} % ***CARE, THIS STRAYS TOO FAR FROM THE ORIGINAL AND IS WRONG AS WRITTEN. WE NEED delta A DETERMINISTIC IDEMPOTENT MONAD IN THE BICAT $\GAMESIG$ NOT ANY OLD MAP $\delta:D \to G^\perp\vvbar G$***
 The  objects of $\SD_\delta$ are $\Sigma$-sorted 
relational structures $\A$, $\B$, etc. Its maps $\SD_\delta(\A,\B)$, written $(\sig,\rho): \A\profto_\delta\B$, are deterministic winning strategies $(\sig,\rho): (G, \A)\profto (G,\B)$, composing as in $\RED$, which factor {\em openly} %rigidly****NEED MORE E.G. OPENLY TO HAVE IDS 
through $\delta$.\footnote{Openness, rather than just rigidity, %plays an 
is essential to %role in  
the existence of identities in $\SD_\delta$.}
To explain this in detail,  %a strategy $(\sig,\rho)$ is a winning strategy in $G^\perp \vvbar G$.
note that $\sig$ is a map
$
\sig: S\to %_{G^\perp\vvbar G,\A+\B}
G^\perp\vvbar G$. 
By $\sig$ factoring %rigidly 
openly through $\delta$, we mean that for some   %rigid 
open map $\sig_1$ the following diagram commutes:
\[
\vcenter{\xymatrix{\
S\ar[dr]_{\sig} \ar@{-->}[r]^{\sig_1\ \ \ }& D\ar[d]^{\delta}\\
& G^\perp\vvbar G}}
\]
 %We remark that if 
As $\delta$ is deterministic, it is mono~\cite{DBLP:journals/fac/Winskel12,ecsym-notes}, ensuring a bijective correspondence between strategies $(\sig,\rho): \A\profto_\delta \B$ and %rigid 
open
deterministic strategies $(\sig_1,\rho)$ in $(D,\A+\B)$; the correspondence respects the composition of strategies.
%*****THIS WON'T IN GENERAL HAVE IDS SO WONT BE A CATEGORY!!!***YOU CAN'T JUST DELETE THE CONSTRN OF IDS****
However, identities aren't in general copycat strategies and have to be constructed as pullbacks of $\delta$ along $\cc_{G,B}$, where 
$(\cc_{G,B},\ga_{G,B})$
is the copycat strategy  of $(G,\B)$\,:
\[
\vcenter{
\xymatrix{
I_B\pb{-45}\ar[r]^{\pi_2}\ar[d]_{\pi_1}& D\ar[d]^\delta\\
\CC_{G,B} \ar[r]^{\cc_{G,B}}& G^\perp\vvbar G }}
\]
Define $\iota_B\eqdef\delta\circ\pi_2=\cc_{G,B}\circ\pi_1$ and
$\rho_B\eqdef \ga_{G,B}\circ\pi_1$. 
Provided $(\iota_B, \rho_B)$ is winning, it will  constitute an identity in the category $\SD_\delta$ [see appendix \ref{composition}].
%In general $\SD_\delta$ need not be a {\em subcategory} of $\RED$ as their identities may differ.
%As $\sig$ is deterministic so is $\sig_1$. 
%As the strategy $\sig_1$ is both deterministic and open: for all $x\in\iconf S$ with $\sig_1 x\subseteq^+ y$ in $\iconf D$ there is a {\em unique} $x'\in \iconf S$ with $x\subseteq x'$ such that $\sig_1 x' =y$. 
%As open maps are also rigid: for all $x\in\iconf S$ with $y \subseteq^- \sig_1 x$ in $\iconf D$ there is a {\em unique} $x'\in \iconf S$ with $x'\subseteq x$ such that $\sig_1 x' =y$. Combining these two facts, we obtain:  for all $x\in\iconf S$ with $\sig_1 x\bel_D y$  there is a {\em unique} $x'\in \iconf S$ with $x\bel_S x'$ such that $\sig_1 x' =y$. 
\begin{toappendix}
\subsection{Composition in $\SD_\delta$}\label{composition}
    Composition in $\SD_\delta$ is the composition of strategies over games over relational structures; that the composition factors openly through $\delta$ relies on horizontal composition preserving open 2-cells and the  idempotence of $\delta$.  %*****expand a little using idempotence of $\delta$****
Identity strategies $(\iota_B,\rho_B):\B\profto_\delta \B$ must also factor openly through $\delta$.  They 
%of $(G,\A)$ in $\SD(G,\delta)$ 
are constructed as pullbacks of $\delta$ along $\cc_{G,B}$, where 
$(\cc_{G,B},\ga_{G,B}): (G,\B)\profto (G,\B)$
is the copycat strategy  of $(G,\B)$\,:%---see Section~\ref{sec:gameconstrns}:
\[
\vcenter{
\xymatrix{
I_B\pb{-45}\ar[r]^{\pi_2}\ar[d]_{\pi_1}& D\ar[d]^\delta\\
\CC_{G,B} \ar[r]^{\cc_{G,B}}& G^\perp\vvbar G \,.}}
\]
%Tentatively, d
Define $\iota_B\eqdef\delta\circ\pi_2=\cc_{G,B}\circ\pi_1$ and %:I_B \to G^\perp\vvbar G$
$\rho_B\eqdef \ga_{G,B}\circ\pi_1$.  As the pullback of the strategy $\delta$% (or $c$)
, the map $\pi_1$ defines a strategy in $\CC_{G,B}$; thus $\iota_B$, its composition with the strategy $\cc_{G,B}$, is a strategy.  That $I_B$ is deterministic---so the strategy $\iota_B$  is deterministic---follows from both %event structures 
$D$ and $\CC_{G,B}$ being deterministic.
To see that $(\iota_B,\rho_B)$ forms a strategy in $\SD_\delta$, fill in the diagram above to  the commuting diagram 
\[
\vcenter{\xymatrix{
I_B\pb{-45}\ar[r]^{\pi_2}\ar[d]_{\pi_1}& D\ar@(r,u)[dr]^\delta\ar[d]^c\\
\CC_{G,B} \ar@(d,d)[rr]^{\cc_{G,B}}
\ar[r]_{\ep_0}& \CC_G \ar[r]_{\cc_G}& G^\perp\vvbar G \,.
}}
\]
%Tentatively, define $\iota_B\eqdef\delta\circ\pi_2$ and %:I_B \to G^\perp\vvbar G$
%$\rho_B= \ga_{G,B}\circ\pi_1$.  
The map $\iota_B$ factors openly through $\delta$ as %the map 
$\pi_2$ is open, being the pullback of the %rigid 
open map $\ep_0$.  %The event structure 

Provided $(\iota_B, \rho_B)$ is winning, the pair will together constitute an identity in the category $\SD_\delta$.  In general 
$\SD_\delta$ need not be a {\em subcategory} of $\RED$ as their identities may differ.

\begin{definition}[Scott order]\label{def:scottorder}  Let $A$ be a game and let  $x,y\in\iconf A$. 
We define the \emph{Scott order} $\bel_A$ as  $y\bel_A x\iff[\exists z\in \iconf A.\  y\supseteq^- z \text{ and } z\subseteq^+ x]$.
\end{definition}
It is not hard to show that $z$ is unique and equal to $x\cap y$. 
The Scott order is also characterised by $ y\bel_A x \iff  [y^-\supseteq x^- \text{ and }  y^+\subseteq x^+]$,
 % y \supseteq^- \, x\cap y\,  \subseteq^+ x.
 which makes clear why it is a partial order.
 % (The order $\supseteq^-$ is  converse to $\subseteq^-$.) 
 The Scott order is 
so named because it reduces to Scott's order on functions in special cases and plays a central role in relating games to Scott domains and ``generalised domain theory''~\cite{hyland-gendomthy}.
\begin{lemma}\label{lem:deltalem} Let $\delta:D\to G^\perp\vvbar G$ be the deterministic idempotent comonad $\delta:G\profto G$ in $\GAMESIG$ with comultiplication $d: D\iso D\scirc D
$.  Suppose $x, y, z\in \iconf D$.  If $d z= y\scirc x$, then   
  $$
z\uparrow_D x \ \& \  z\bel_D x \ \& \  
z\uparrow_D y \ \& \  z \bel_D y.$$  
\end{lemma}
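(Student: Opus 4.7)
The plan is to extract the Scott order from the counit $c : \delta \Rightarrow \cc_G$, using the classical bijection between $\iconf{\CC_G}$ and Scott-ordered pairs in $\iconf G$.

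I first unpack the composition. Writing $\delta x = x_1 \vvbar x_2$ and $\delta y = y_1 \vvbar y_2$ in $\iconf{G^\perp \vvbar G}$, the definedness of $y \scirc x$ forces the middle gluing $x_2 = y_1$, and the comonad axiom $(\delta \scirc \delta) \circ d = \delta$ yields $\delta z = x_1 \vvbar y_2$. Hence $z$ and $x$ share the same left $G^\perp$-projection, while $z$ and $y$ share the same right $G$-projection.

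For Scott comparability, I invoke the counit. The 2-cell $c : \delta \Rightarrow \cc_G$ corresponds to a rigid map $c : D \to \CC_G$ with $\cc_G \circ c = \delta$ (rigidity from determinism of $\delta$). Thus $cy \in \iconf{\CC_G}$; under the classical bijection $\iconf{\CC_G} \iso \set{(u, v)}{v \bel_G u}$, this gives $y_2 \bel_G y_1 = x_2$. Since $z$ and $x$ agree on the left and $\delta$ preserves polarities, the $G$-level inequality $y_2 \bel_G x_2$ lifts to $D$: $z \bel_D x$. The dual comonad law $(c \scirc \id_\delta) \circ d = \id_\delta$ similarly gives $z \bel_D y$.

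For the compatibility $z \uparrow_D x$, I construct a common upper bound in $\iconf D$. The fact that $(y_1, y_2) = (x_2, y_2)$ comes from a single configuration $cy$ of $\CC_G$ implies $x_2 \cup y_2 \in \iconf G$; so $x_1 \vvbar (x_2 \cup y_2) \in \iconf{G^\perp \vvbar G}$, and by receptivity of $\delta$ this image is realised by a $D$-configuration extending both $x$ and $z$. The analogous construction using $y$ instead of $x$ gives $z \uparrow_D y$.

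The main obstacle is the compatibility argument. Verifying that $x_2 \cup y_2 \in \iconf G$ uses the detailed event-structure of $\CC_G$, and realising its $\delta$-preimage as a $D$-configuration that genuinely contains both $x$ and $z$ relies on determinism of $\delta$ together with the rigidity of $c$. A further subtlety is that the witnesses $(x, y)$ in the hypothesis need not be the minimum realising $dz$, which the proof must accommodate.
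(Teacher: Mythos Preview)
Your approach has a genuine gap in the lifting of Scott order from images to $D$. You correctly obtain $\delta z = x_1 \vvbar y_2$ and $\delta x = x_1 \vvbar x_2$ with $y_2 \bel_G x_2$, hence $\delta z \bel_{G^\perp\vvbar G} \delta x$; but you then assert this ``lifts'' to $z \bel_D x$. The map $\delta$ is only \emph{locally} injective: distinct events of $D$ lying in different configurations may have the same image in $G^\perp \vvbar G$. So $(\delta z)^+ \subseteq (\delta x)^+$ does not give $z^+ \subseteq x^+$ in $|D|$. The paper proves compatibility $z \uparrow_D x$ \emph{first}; then $z \cup x$ is a configuration on which $\delta$ is injective, so $\delta(z\cap x)=(\delta z)\cap(\delta x)$, and determinism of $\delta$ (which makes it reflect inclusions between configurations) transports the image-level Scott order back to $D$. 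Your ordering cannot be rescued either: Scott order does not imply compatibility in general (take a single negative event in conflict with a single positive one).

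Your compatibility argument has a parallel gap. Receptivity does yield an extension $x' \supseteq x$ with $\delta x' = x_1 \vvbar (x_2 \cup y_2)$, since $y_2 \setminus x_2$ consists of negative events. But you then need $z \subseteq x'$, and all you have is $\delta z \subseteq \delta x'$; without already knowing that $z$ and $x'$ sit inside a common configuration, this does not follow. The paper instead argues directly that no two negative events of $x \cup z$ are in conflict: receptivity reflects such conflict to $G^\perp \vvbar G$, and a case analysis on the component (invoking the counit for the $G$-side) shows no conflict can occur there. Determinism of $D$ then gives $z \uparrow_D x$.
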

\begin{proof} 
%Let $x, y\in \iconf D$ with $y\scirc x$ defined, so $y\scirc x\in \iconf{D\scirc D}$. 
The comultiplication $d$ makes the diagram 
\[\xymatrix{
D\ar[r]^d \ar[d]_\delta&D\scirc D\ar[dl]^{\delta\scirc\delta}\\
G^\perp\vvbar G
}
\]
commute and 
is an isomorphism. Suppose $x, y, z\in \iconf D$ with $d z= y\scirc x$.  

We first show $z\uparrow_D x$.  As $\delta$ is deterministic it suffices to show no pair of  $-$ve events in $x\cup z$ are in conflict. 
It also suffices to consider a pair $e_1 \in x$ and $e_2 \in z$ with $\pol_D(e_1)=\pol_D(e_2)= -$. As $\delta$ is a strategy so receptive,
 if $e_1\hash_D e_2$ then $\delta(e_1) \hash_{G^\perp\vvbar G} \delta(e_2)$.  We show that $\delta(e_1) \hash_{G^\perp\vvbar G} \delta(e_2)$ is impossible.  Assume otherwise, that $\delta(e_1) \hash_{G^\perp\vvbar G} \delta(e_2)$.  Then $\delta(e_1)$ and $\delta(e_2)$ have to be in the same parallel component $G^\perp$ or $G$.  Consider the two cases.
 
\noindent
1. {\it Case where $\delta(e_1)\in \setof 1 \times G^\perp$ and  $\delta(e_2)\in \setof 1 \times G^\perp$.}  From the construction of $y\scirc x$ we see that $\delta(e_1), \delta(e_2) \in \delta x$, a configuration of $G^\perp\vvbar G$, which contradicts their being in conflict.  

\noindent
2. {\it Case where $\delta(e_1)\in \setof 2 \times G$ and  $\delta(e_2)\in \setof 2 \times G$.} Then from the construction of $y\sncirc x$ we see that $e_1\in x$ synchronises with a +ve event $\bar e_1\in y$. As the counit  factorises $\delta$ through $\cc_G$ there is $e_1'\in y$ with 
$\delta(e_1) = \delta(e_1') \in \delta z$.  But then $\delta(e_1), \delta(e_2)\in\delta z$, a configuration, contradicting their being in conflict. 

In all cases, $\delta(e_1)$ and $\delta(e_2)$ are not in conflict---a contradiction, from which we deduce that neither are $e_1, e_2$.  It follows that $x\cup z$ is a configuration and that $z\uparrow_D x$. 

Now, to show $z\bel_D x$ it suffices to show $\delta z\bel_{G^\perp\vvbar G} \delta x$.  Because then 
$$
\delta z \supseteq^- ((\delta z) \cap (\delta x)) \subseteq^+ \delta x,
$$
making  $\delta (z \cap  x)= (\delta z) \cap (\delta x)$, as $z\uparrow_D x$, so 
 $$
\delta z \supseteq^- \delta (z \cap  x) \subseteq^+ \delta x\,; 
$$
whence, as $\delta$ is deterministic, it reflects inclusions between configurations, to yield
$$
 z \supseteq^-  (z \cap  x) \subseteq^+ x, \quad \hbox{ \ie\  } \ z\bel_D x.
$$
To show $\delta z\bel_{G^\perp\vvbar G} \delta x$ we require
$$
\delta z^+ \subseteq  \delta x^+\  \ \&\  \ \delta z^- \supseteq  \delta x^-.
$$
Both these inclusions follow by considering over which component of $G^\perp\vvbar G$ events lie.  For example, to show $\delta z^+ \subseteq  \delta x^+$ the trickier case is where $e\in z^+$ lies over $G$, \ie~$\delta(e) \in \setof 2 \times G$.  Then, from the construction of $y\sncirc x$, we have $\delta(e)\in \delta y$.  But $\delta$ factors through $\cc_G$, from which there must be a matching $-$ve event $\bar e\in y$  which synchronises with a +ve event $e'\in x$ with $\delta(e) = \delta(e')$, as required. 
%****
%W.l.o.g.~we assume that $d$ is the identity function on events, so $\delta = \delta\scirc \delta$. 
%Now, to show $z\bel_D x$ it suffices to show
%$$
% (y\scirc x)^+\subseteq x^+ \ \&\ x^-\subseteq  (y\scirc x)^-.  
%$$
%Suppose $e\in z$ and $\pol_D(e)=+$.     If $e$ is over $G^\perp$ then $d(e)$ is over $G^\perp$ making $e\in x$.  If $e$ is over $G$ then $d(e)$ is over $G$ ensuring $e\in y$.  It follows that its complement $\bar e\in y$, over $G^\perp$, with $\pol_D(\bar e) =-$.  But then it synchronises with +ve $e\in x$ over $G$.***IMPROVE****REASON IN $CC_G$ OR $G^\perp\vvbar G$???***
Similarly, $z\uparrow_D y$ and $z\bel_D y$.  
\end{proof} 
\begin{theorem} Suppose that the strategy $(\iota_A,\rho_A):(G,\A)\profto (G,\A)$  is  winning 
for all $\Sigma$-structures $\A$. Then
$\SD_\delta$ is a category with identities $(\iota_A,\rho_A):\A\profto_\delta \A$.  

Moreover, the category $\SD_\delta$ is a subcategory of $\RED$ in the case where $\delta$ is copycat $\cc_G:\CC_G\to G^\perp\vvbar G$;  then identities are %$(\iota_A,\rho_A)$ is 
copycat strategies $(\cc_{G,A},\ga_{G,A}): \A\profto_\delta \A$.
\end{theorem}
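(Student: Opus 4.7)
The plan is to verify the usual categorical data in four stages: (a) closure of $\SD_\delta$ under composition, (b) that the pullback pair $(\iota_A,\rho_A)$ is a legitimate strategy lying in $\SD_\delta$, (c) associativity, and (d) the left and right identity laws; the final claim about the copycat special case will then follow by observing what the pullback specialises to.

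For (a), I would note that composition is inherited from $\RED$, so I only need to check that the composite of two strategies factoring openly through $\delta$ also factors openly through $\delta$. Given $(\sigma,\rho_S)\colon \A\profto_\delta\B$ and $(\tau,\rho_T)\colon \B\profto_\delta\C$ factored openly via $\sigma_1$ and $\tau_1$, horizontal composition in $\Strat$ preserves open 2-cells, so $\tau_1\scirc\sigma_1$ is an open map into $D\scirc D$; pre-composing with the inverse $d^{-1}\colon D\scirc D\iso D$ of the comultiplication yields an open factorisation through $\delta$, using that $\delta\scirc\delta$ is (up to iso) $\delta$ by idempotence. Step (b) is already handled in the appendix construction: $\pi_2$ is the pullback of the counit component $\ep_0$ which is open, so $\pi_2$, and hence $\iota_B=\delta\circ\pi_2$, factors openly through $\delta$; the winning assumption is imposed by hypothesis. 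Associativity in (c) is inherited from associativity of composition in $\RED$.

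The main obstacle is (d), the identity laws. I would reduce $(\iota_B,\rho_B)\scirc(\sigma,\rho)=(\sigma,\rho)$ (and symmetrically on the left) to an equality of strategies in $(D,\A+\B)$ via the bijection between open strategies factoring through $\delta$ and open strategies into $D$; this bijection respects composition because $\delta$ is mono by determinism. The composition with $\iota_B$ on the $\B$ side re-executes $\delta$ against a copycat, and using the comultiplication isomorphism $d\colon D\iso D\scirc D$ together with Lemma~\ref{lem:deltalem}, any configuration $z$ of the composite with $dz=y\scirc x$ satisfies $z\uparrow_D x$, $z\bel_D x$ and likewise for $y$; matched with $y$ being of copycat form, the Scott-order conditions $z^+\subseteq x^+$ and $z^-\supseteq x^-$ force $z=x$, so the composite strategy's configurations coincide exactly with those of $\sigma_1$. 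The instantiation $\rho$ is preserved because $\rho_B$ was defined from $\ga_{G,B}$ along $\pi_1$ and agrees with $\rho$ on the common part by the pullback property; the right identity law is symmetric.

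Finally, for the subcategory claim when $\delta=\cc_G\colon\CC_G\to G^\perp\vvbar G$, the counit $c\colon\delta\Rightarrow\cc_G$ is the identity, so the pullback square defining $I_B$ degenerates and $(\iota_B,\rho_B)=(\cc_{G,B},\ga_{G,B})$, which is winning for race-free games. Since composition in $\SD_\delta$ is then literally composition in $\RED$ and identities coincide with $\RED$-identities, $\SD_\delta$ is a subcategory of $\RED$ in this case.
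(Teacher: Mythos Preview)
Your handling of closure under composition, the construction of $(\iota_B,\rho_B)$, associativity, and the copycat special case is in line with the paper. The genuine gap is in step (d), the identity laws.

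You assert that from $z\bel_D x$ and $z\bel_D y$ (Lemma~\ref{lem:deltalem}), together with ``$y$ being of copycat form'', the Scott-order conditions $z^+\subseteq x^+$ and $z^-\supseteq x^-$ \emph{force} $z=x$. This inference does not go through. The Scott order only gives one direction of each containment; nothing in your argument supplies $x^+\subseteq z^+$ or $x^-\subseteq z^-$. The configuration $y$ of $D$ arising from $I_B$ via $\pi_2$ is just an arbitrary configuration of $D$ compatible with a $\CC_{G,B}$-structure; it carries no ``copycat form'' constraint strong enough to collapse $z$ and $x$. Indeed, in the paper's proof $z$ and $x$ are \emph{not} equal in general: one must pass from $z_1\in\iconf S$ to a strictly larger $x_1$ with $z_1\bel_S x_1$ using that $\sig_1$ is deterministic and open, and then separately manufacture $y_1\in\iconf{I_B}$ by hand via an instantiation so that $\theta(y_1\scirc x_1)=z_1$.

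More structurally, you are trying to argue at the level of configurations of $D$, but what is needed is an isomorphism $\theta\colon I_B\scirc S\to S$ of event structures over $D$. The paper builds $\theta$ as the composite $I_B\scirc S\xrightarrow{\pi_1\scirc S}\CC_{G,B}\scirc S\iso S$ and then checks rigidity, injectivity on configurations (from determinism), and surjectivity on configurations. Surjectivity is where Lemma~\ref{lem:deltalem} is actually used, and it is used to \emph{lift} along $\sig_1$, not to prove an equality. Your sketch skips this lifting and the explicit construction of $y_1$, which is the substantive content of the argument.
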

\begin{proof} 
Strategies $(\sig,\rho_S): \A\profto_\delta\B$ and $(\tau,\rho_T): \B\profto_\delta\C$ are associated with open strategies 
$\sig_1: S \to D$ and $\tau_1:T\to D$ and their composition with $\tau_1\scirc \sig_1: T\scirc S \to D\scirc D\iso D$, which is also open as horizontal composition of strategies $\scirc$ preserves open 2-cells.  

That $(\iota_\mathcal{A},\rho_\mathcal{A}):\A \profto_\delta \A$ and $(\iota_{\mathcal{B}},\rho_{\mathcal{B}}):\B \profto_\delta \B$ are identities in composition with $(\sig,\rho_S): \A\profto_\delta\B$ relies on
isomorphisms
$$
 I_{\mathcal{B}} \scirc S \iso S \ \hbox { and } \  S \scirc I_\mathcal{A} \iso S,
$$
where $\sig:S\to G^\perp\vvbar G$ and $\iota_{\mathcal{B}}: I_{\mathcal{B}}\to G^\perp\vvbar G$.  
We build the isomorphism 
$$
\theta: I_{\mathcal{B}} \scirc S \iso S  
$$
as the composite map
$$
\xymatrix{
I_B \scirc S \ar[r]^{\pi_1\scirc S\ \ \ \ }&  \CC_{G,B} \scirc S %\ar@{}[r]|
\iso S,
}
$$
which relies on $\cc_{G,B}$ being identity in $\RED$. (The map names are those used in the definition of $\iota_B$ above.) 
To show that $\theta$ is an isomorphism it suffices to show that it  
is rigid and both surjective and injective on configurations (Lemma~3.3 of\,\cite{ESS}). 
Injectivity on configurations is a consequence of   $\pi_1\scirc S$ being a deterministic strategy in 
$\CC_{G,B} \scirc S$---it is the composition of deterministic strategies.  %****EARLIER SAY DET STRATS INJECTIVE ON CONFIGS***
To establish rigidity and surjectivity on configurations, consider the commuting diagram
\[
\vcenter{\xymatrix{
I_B \scirc S\ar[d]_{%\iota_{\mathcal{B}}
\pi_2\scirc\sig_1}  \ar[r]^{\pi_1\scirc S\ \ \ \ }&  \CC_{G,B} \scirc S \ar@{}[r]|{\iso} \ar[d]^{\cc_{G,B}\scirc \sig_1}&S \ar[d]^{\sig_1}\\
D \scirc D \ar[r]^{c \scirc D\ \ \ \ }&  \CC_{G} \scirc D \ar@{}[r]|{\iso} &D\ar@(d,d)[ll]_d&
}}\hspace{-2em},
\]
where $\theta$ is the upper horizontal composite map.  
The lower horizontal composite map is inverse to $d$ ---this follows from the comonad law for the counit $c$ ---and the map $c \scirc D$ % is 
an isomorphism.  

The vertical maps are open---being the %horizontal 
composition of open 2-cells,  so rigid.   Consequently $\sig_1\circ \theta$ is rigid which, through $\sig_1$ reflecting causal dependency locally, implies the rigidity of $\theta$.  

To see that $\theta$ is surjective on configurations is more involved.  Suppose that $z_1\in\iconf S$.  Let $z\eqdef \sig_1 z_1 \in\iconf D$.  As $d$ is an isomorphism, there are $x, y\in \iconf D$ such that $d z= y\scirc x$.  By Lemma~\ref{lem:deltalem}, 
$$
z\bel_D x \ \hbox{ and } \   z \bel_D y.
$$
Because $\sig_1$ is deterministic and open there is a unique $x_1\in\iconf S$ with $z_1\bel_S x_1$ and $\sig_1 x_1 =x$.
We now construct $y_1\in\iconf{I_B}$ so that $\theta (y_1\scirc x_1) = z_1$.   
It is defined via an instantiation.

Let $y_0= c y$, a configuration of $\CC_G$; as $\cc_G:\CC_G \to G^\perp\vvbar G$ is given as the identity function on events, and $\delta = \cc_G%\circ 
c$
we also have $y_0 \in G^\perp\vvbar G$.   
Define $\rho:y_0 \to \mod{\mathcal B}$ by cases, according as $e\in y_0$ is in the left or right component of $G^\perp\vvbar G$. 
If 
$e \in \setof 1 \times G^\perp$, then 
$\rho(e) \eqdef  \rho_S(e_1)$ when $e_1\in x_1 \ \&\  \sig(e_1)  = \bar e$. 
If
$e \in \setof 2 \times G$, then  
$\rho(e) = \rho_S(e_1)$ when $e_1\in z_1 \ \&\  \sig(e_1)  = e$. 
From this definition it can be checked  that $\rho(e) =\rho(\bar e)$ when $e, \bar e \in y_0$.  For this reason, $y_0$ and $\rho$ define a configuration $y'$ of $\CC_{G,\mathcal B}$.  The required $y_1$ is defined by $y_1 = y' \wedge y$.  By construction, $\theta (y_1\scirc x_1) = z_1$, as required for $\theta$ to be surjective on configurations. 

 We deduce that $\theta$ is an isomorphism  $I_{\mathcal{B}} \scirc S \iso S$.  Similarly,  $S \scirc I_\mathcal{A} \iso S$. This makes $(\iota_B, \rho_B)$ the identity w.r.t.~composition in $\SD_\delta$. 
\end{proof}
\end{toappendix}

By choosing a suitable $\delta$  we can design Spoiler-Duplicator games with different %rules 
patterns of play and constraints on the way resources, \eg~number of variables, are used.  
For the $k$-pebble game (Example \ref{kpebble}) and the simulation game (Example \ref{simulation}), a simple choice of comonad $\delta$ being a copycat strategy $\cc_G$ 
suffices. %For ``all-in-one'' games, such as the trace-inclusion game, we need a comonad other than copycat.
Besides $\delta$ acting as copycat other behaviours of games can be found in the literature:
the all-in-one $k$-pebble game \cite{pebblerelationgame} is a version of the $k$-pebble game where Spoiler plays their entire sequence at once; then 
Duplicator answers with a similar sequence, preserving partial homomorphism in each prefix of the sequence.
Similarly, a game for trace inclusion \cite{glabbeek} is obtained from that for simulation by insisting Spoiler present an entire trace before Duplicator provides a matching trace.
%As with the concurrent version for the $k$-pebble game, we introduce a concurrent version of the all-in-one game.
%It can be seen as a `linear' version of the $k$-pebble game  via the contrast between arboreal categories \cite{arboreal} and linear arboreal categories \cite{lineararboreal}.
In such games $\delta$  expresses that Duplicator awaits Spoiler's complete play before making their moves in a matching complete play.

 \begin{example}[Trace-inclusion game]\label{traceinclusion} 
 %***SEE above sim game for amendments to the diagram.  IN ADDITION TO CHANGES OF sim game, elide all $C_i'$ moves*****
 The trace-inclusion game in Figure~\ref{traceinclusiongame} is another example of an ``all-in-one'' game. It is an adaptation of the earlier simulation game, Example~\ref{simulation}, in two ways: so that Duplicator awaits all of Spoiler's choices of transitions %assignments 
 before making theirs; so that Duplicator answers correctly to every subsequence of Spoiler's choices, not only their latest pebble placements.  The former we ensure by using ``stopper'' events $\$$ to mark the explicit termination of Spoiler's assignments (primed); in any configuration the %first $\be_1'$ 
 start move
 on the right causally depends on a stopper $\$$ on the left.
For the latter, we make copies (unprimed) of the sequences of assignments. %, %labelled with primes
% which can terminate via stoppers $\$'$ at any subsequence. 
  \begin{figure}
        \centering
        \includegraphics[scale=0.30]{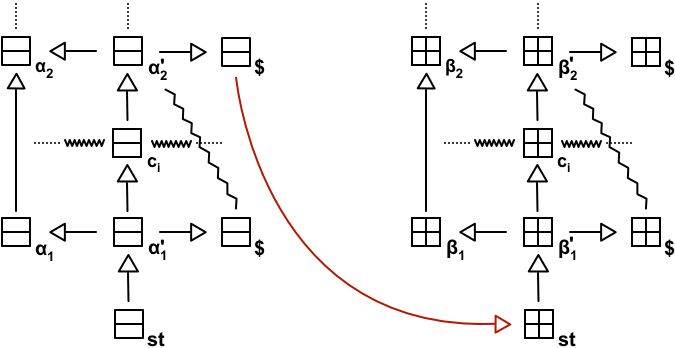}
        \caption{The trace-inclusion game}
        \label{traceinclusiongame}
    \end{figure}
  %  \vspace{-1em}
In Figure \ref{traceinclusiongame} we have illustrated the comonad $\delta:D \to G^\perp\vvbar G$
in $\GAMESIG(G,G)$ for the trace inclusion game. The esp $D$ comprises the signature game $\CC_G$  with extra causal dependencies; for simplicity, we show only one causal dependency from a stopper and omit
copycat arrows. The winning condition of $G$ extends the winning condition $W_1$ of the simulation game:
\begin{align*}
   %W_3 \equiv 
   W_1 
&\wedge
%\bigwedge_{0\leq i\leq n}  %\EX(C_i{(\beta_1,\beta_2)})  \to R_i(\beta_1,\beta_2)
%((C_i\prel \be'_2) \to R'_i(\beta'_1,\beta'_2))
%\\&\wedge
%\bigwedge_{0\leq i\leq n} % \EX(C'_i{(\beta_2,\beta_1)})  \to R'_i(\beta'_2,\beta'_1)
%((C_i  \prel \be'_1) \to R_i(\beta'_2,\beta'_1)) \wedge
\EX(\$) %\wedge \EX(\$’)
%\\ &
\wedge
(\EX(\beta_1) \to \beta_1' \pre \beta_1)
%(\beta_1 \prel \beta_1’ \to \beta_1 = \beta_1’ )
 \wedge
(\EX(\beta_2) \to \beta_2' \pre \beta_2).
%(\beta_2 \prel \beta'_2 \to \beta_2 = \beta'_2).
\end{align*}
%Its first two order constraints mimic those of the original sequences, now on their copies.  
The %latter 
two equality constraints ensure that the assignments to original sequences and their copies agree.  There is a winning strategy $(\A,a)\profto_\delta (\B,b)$ between two transition systems 
iff the traces of $(\A,a)$ are included in the traces of $(\B,b)$.  
A similar $\delta$ can be constructed for the  
pebble-relation comonad \cite{pebblerelationgame}---with $\SD_\delta$ isomorphic to its coKleisi category. 
   \end{example}

\setcounter{subsection}{2}
\subsubsection{Partial expansion.}\label{sec:partlexpn}

%We can easily generalise t
The  
expansion of Section~\ref{sec:strovrel} generalises
to an  expansion  w.r.t.~a subset of variables.  A key to understanding the strategies of $\SD_\delta$ is the {\em partial expansion} of $D$  w.r.t.~its Opponent moves [appendix \ref{app:partialexpn}].

\begin{toappendix}
\section{Partial expansion}\label{app:partialexpn}
\begin{definition}\label{def:V0expn}{\rm Let $G$ be a $(\Sigma, V, C)$-game.  Let $A$ be a $\Sigma$-algebra. 
Let $V_0\subseteq V$.  
The {\em $V_0$-expansion} of  $(G, \mathcal A)$  is the %\eswp~
event structure $\expn^{V_0}(G,\mathcal{A})$ with 
\begin{itemize}
\item
{\em events} $(g,\gamma)$ where
$\gamma:[g]_{V_0}\to  \mod\mathcal{A}$  assigns an element of $\mathcal A$  of the correct sort to each ${V_0}$-move on which $g$ causally depends;
 \item
 {\em causal dependency} 
$(g',\gamma')\le(g,\gamma)$ iff $g'\le_G g\ \&\ \gamma'=\gamma\upharpoonright[g']_{V_0}$; 
\item
{\em conflict} 
$(g,\gamma)\hash (g',\gamma')$ \  iff  $g\,\hash_G\, g' \hbox{ or }\exists g''\leq_G g, g'.\ \gamma(g'') \neq \gamma'(g'')$; and
\item
{\em polarity} $\pol(g,\gamma) =\pol_G(g)$, inherited from $G$.
\end{itemize}
The function $\red^{V_0}: |\expn_{V_0}(G,\mathcal{A})| \to \mod G$ acts so $\red^{V_0}:(g,\gamma)\mapsto g$; it is an 
open map of event structures (because each sort of $\A$ is nonempty). 
The function  $\Inst^{V_0}: |\expn^{V_0}(G,\mathcal{A})|_{V_0} \to \mod\A$ acts so $\Inst^{V_0}(g,\gamma)=\gamma(g)$ for $g\in\mod G$ with $\Var(g)\in {V_0}$.
}\end{definition}

The pair $(\red^{V_0}, \Inst^{V_0})$ forms a universal  {\em ${V_0}$-instantiation} of $G$ in $\A$:

\begin{proposition}\label{prop:partlexpnuniversal}
Let $\sig: S\to G$ be a total map of event structures with polarity.  Let $\rho: \mod S_{V_0}\to \mod\A$ be a sort-respecting function from
$\mod S_{V_0}\eqdef \set{s\in \mod S}{\Var(\sig(s)) \in V_0 }. $
Then there is a unique map of instantiations $\sig_0:S\to \expn^{V_0}(G,\mathcal{A})_{V_0}$ from $(\sig, \rho)$ to $(\red^{V_0}, \Inst^{V_0})$.  
\end{proposition}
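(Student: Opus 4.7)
The plan is to adapt the construction from the proof of Lemma~\ref{lem:universalinst}, now restricting attention to the variables in $V_0$ in place of all of $V$. First I would define the candidate map $\sig_0: S\to \expn^{V_0}(G,\A)$ on events by setting $\sig_0(s)\eqdef (g,\ga)$, where $g = \sig(s)$ and, for each $g'\le_G g$ with $\Var(g')\in V_0$, $\ga(g')\eqdef \rho(s')$ for the unique $s'\le_S s$ with $\sig(s') = g'$. The existence and uniqueness of $s'$ both rest on $\sig$ being a total map of event structures: within the finite configuration $[s]$, $\sig$ is locally injective and reflects causal dependency locally, so over each element of $[g]$ there sits exactly one event of $[s]$.

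Next I would verify that $\sig_0$ is a map of event structures with polarity. Polarity is preserved by construction, since $\pol(g,\ga) = \pol_G(g) = \pol_S(s)$. For $x\in\iconf S$, the fact that $\sig_0 x$ is a configuration of $\expn^{V_0}(G,\A)$ reduces to checking down-closure and consistency in the sense of Definition~\ref{def:V0expn}: down-closure follows from $\sig x\in\iconf G$ together with the lookup recipe for $\ga$; consistency follows because the $\ga$-components attached to events of $\sig_0 x$ are all restrictions of the single function $s\mapsto\rho(s)$ on $x \cap \mod S_{V_0}$, so they automatically agree on shared $V_0$-prefixes. Local injectivity of $\sig_0$ on $x$ is immediate from local injectivity of $\sig$ on $x$, since distinct events with the same $\sig_0$-image share a $\sig$-image.

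The triangle $\red^{V_0}\circ \sig_0 = \sig$ then holds by construction, while the triangle over $\mod\A$ holds because, for $s\in \mod S_{V_0}$, taking $s' = s$ in the definition of $\ga$ gives $\Inst^{V_0}(\sig_0(s)) = \ga(\sig(s)) = \rho(s)$. Hence $\sig_0$ is a map of instantiations from $(\sig,\rho)$ to $(\red^{V_0},\Inst^{V_0})$.

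For uniqueness, suppose $\sig_0': S\to \expn^{V_0}(G,\A)$ is any map of instantiations from $(\sig,\rho)$ to $(\red^{V_0},\Inst^{V_0})$. Writing $\sig_0'(s) = (g, \ga'_s)$, the triangle with $\red^{V_0}$ forces $g = \sig(s)$. For each $g'\le_G g$ with $\Var(g')\in V_0$, since $\sig_0'$ is total and reflects causal dependency locally, there is a unique $s'\le_S s$ with $\sig_0'(s') = (g', \ga'_s\upharpoonright[g']_{V_0})$ and hence $\sig(s') = g'$; the triangle with $\Inst^{V_0}$ then forces $\ga'_s(g') = \rho(s')$. This matches the definition of $\sig_0$, so $\sig_0' = \sig_0$. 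The only point requiring care throughout is the local lookup of $s'$ above each $g'$, but this is controlled entirely by how total event-structure maps behave on finite configurations, so no genuine obstacle arises.
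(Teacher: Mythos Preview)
Your proposal is correct and follows exactly the approach the paper intends: the paper simply remarks that the result ``follows from the earlier Lemma~\ref{lem:universalinst},'' and your argument is precisely that lemma's proof carried out with $V_0$ in place of $V$. You supply more detail than the paper does, but the construction of $\sig_0(s)=(g,\ga)$ via the unique local preimage $s'\le_S s$ of each $g'\in[g]_{V_0}$, and the uniqueness argument via the two commuting triangles, are identical in spirit.
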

%\begin{proof}
The above follows from the earlier Lemma~\ref{lem:universalinst} characterising expansion as giving the universal instantiation. 

\begin{definition}
The {\em partial expansion} $\expn^{-}(G,\mathcal{A})$ of $(G, \A)$ is $\expn^{V_0}(G,\mathcal{A})$ where $V_0\subseteq V$ is the subset of Opponent variables. 
\end{definition}

\begin{lemma}\label{lem:DApb}
The partial expansion $\DA$ is deterministic.  Moreover,  
if $y\cov^+ y'$ and $y\cov^+ y''$ in $\conf{\DA}$ and $\pi_1 y' = \pi_1 y''$, then, $y'=y''$.
\end{lemma}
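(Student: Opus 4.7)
The plan is to prove both claims by unpacking the definition of the partial expansion $\DA = \expn^{-}(D,\A)$ and exploiting that $D$ is already deterministic, being the underlying esp of the deterministic strategy $\delta$; here $\pi_1\colon\DA\to D$ denotes the reduction map sending $(d,\gamma)\mapsto d$, which (as in Lemma~\ref{lem:universalinst}) is open.

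For determinism of $\DA$, I would show every immediate conflict in $\DA$ is between Opponent events. By Definition~\ref{def:V0expn}, a conflict $(d,\gamma)\hash(d',\gamma')$ in $\DA$ arises either because (a) $d\,\hash_D\,d'$, or because (b) some $d''\leq_D d,d'$ with $\Var(d'')$ an Opponent variable has $\gamma(d'')\neq\gamma'(d'')$. In case~(a), if the conflict is immediate in $\DA$ then $d\,\hash_D\,d'$ must itself be immediate in $D$ (any lower conflict in $D$ would lift to a lower conflict in $\DA$); determinism of $D$ then forces $d,d'$ to be Opponent events, and polarity lifts along $\pi_1$. In case~(b), immediacy pushes the witness down to the first point of disagreement, yielding $d=d'=d''$ with $\gamma(d'')\neq\gamma'(d'')$; as $d''$ carries an Opponent-variable label it is an Opponent event.

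For the uniqueness claim, suppose $y\cov^+ y'$ and $y\cov^+ y''$ with $\pi_1 y'=\pi_1 y''$. Writing $y'=y\cup\{(d_1,\gamma_1)\}$ and $y''=y\cup\{(d_2,\gamma_2)\}$ (both added events of positive polarity), the hypothesis on $\pi_1$ forces $d_1=d_2=:d$, reducing the task to showing that the two instantiation functions $\gamma_1,\gamma_2\colon[d]_{V^-}\to\mod\A$ agree on Opponent-variable predecessors (write $V^-$ for the Opponent variables). For each strict predecessor $d'<_D d$ with $\Var(d')\in V^-$, down-closure places both $(d',\gamma_1\upharpoonright[d']_{V^-})$ and $(d',\gamma_2\upharpoonright[d']_{V^-})$ in $y'\cap y''=y$; local injectivity of $\pi_1$ on the configuration $y$ then forces these two events to coincide, so $\gamma_1$ and $\gamma_2$ agree on $[d']_{V^-}$. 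Since $d$ is Player, $d\notin[d]_{V^-}$, so ranging $d'$ over all such strict predecessors exhausts $[d]_{V^-}$, giving $\gamma_1=\gamma_2$ and hence $y'=y''$.

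The main subtlety I expect is in case~(b) of the determinism step, where one must verify that an immediate conflict arising from an instantiation mismatch cannot be witnessed by a strictly lower pair, so the witness sits precisely at the first disagreement of the $\gamma$'s and is therefore Opponent-polarized. Everything else is a short diagram chase using down-closure of configurations and local injectivity of the esp map $\pi_1$.
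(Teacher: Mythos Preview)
Your proof is correct, but it follows a different route from the paper's. The paper argues via the pullback characterisation of $\DA$ (over $\expn(G^\perp,\A)\vvbar G$ and $D$) using secured families, reducing both claims to the analogous fact for the full expansion: if $z\cov^+ z'$ and $z\cov^+ z''$ in $\conf{\expn(G^\perp,\A)}$ with $\red\,z'=\red\,z''$, then $z'=z''$. You instead work directly with the concrete Definition~\ref{def:V0expn}, analysing the two sources of conflict in $\expn^{-}(D,\A)$ and using down-closure plus local injectivity of $\pi_1$ for the uniqueness claim. Your approach is more elementary and self-contained---it avoids invoking the secured-bijection machinery---while the paper's is more modular, reusing a property already established for ordinary expansions. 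A minor remark on your case~(b): you do not actually need the \emph{first} point of disagreement; any witness $d''$ already yields a conflict between $(d'',\gamma\rstd[d'']_{V^-})$ and $(d'',\gamma'\rstd[d'']_{V^-})$ lying below the immediate pair, so immediacy forces $d=d'=d''$ directly.
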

\begin{proof}
From the construction of the pullback via secured families, and the fact that: if
$z\cov^+ z'$ and $z\cov^+ z''$ in $\conf{\expn(G^\perp, \A)}$ and $\red\, z' = \red\, z''$, then $z'=z''$.
\end{proof}

\end{toappendix}

Recall that $\delta$ is a map of esps $\delta:D\to G^\perp\vvbar G$ where $G$ has signature $(\Sigma, C,V)$ and winning condition $W$.
The signature of $G^\perp\vvbar G$, inherited by $D$, is $(\Sigma + \Sigma, V+V, C+C)$. 
In particular, the variables are either $V_1\eqdef \setof 1\times V$ of the left component, $G^\perp$, or $V_2 \eqdef  \setof 2\times V$ of the right component, $G$. 
Accordingly the moves of $D$ associated with variables are either  moves in the set $\mod D_{V_1}$ or  moves in $\mod D_{V_2}$.  
We form
the {\em partial expansion} of $(D, \A+\B)$ w.r.t.~$\Sigma$-structures $\A$ and  $\B$;
it expands  Opponent moves in $\mod D_{V_1}$ to %their 
instances in $\A$ 
and Opponent moves in $\mod D_{V_2}$ to %their 
instances in $\B$. 
 
The partial expansion of $(D, \A+\B)$, written $D(\A,\B)$,  is characterised as a pullback:
$$
\xymatrix{
D(\A, \B)\pb{-45}\ar[r]^{\quad\pi_1} \ar[d]_{\pi_2} &D\ar[d]^{\delta}\\
%\expn^-(G^\perp,\A)\vvbar \expn^-(G,\B)
G^\perp(\A)\vvbar G(\B)
\ar[r]_{\quad\epsilon %\  \qquad\qquad %\epsilon^-_{G^\perp,\A}\vvbar \epsilon^-_{G,\B}%^-_{G^\perp\vvbar G,\A+\B}
}&G^\perp\vvbar G
}
$$
Above, $G^\perp(\A)$ is the partial expansion of $(G^\perp, \A)$, while 
$G(\B)$  the partial expansion of $(G,\B)$, and $\epsilon$ the map projecting expansions to the original moves in $G^\perp\vvbar G$. 
Write $\rho(\A)$ for the sort-respecting function
$
\rho(\A): \mod\DAB_{V^-_1} \to \mod\A
$
sending $s\in\mod\DAB_{V^-_1}$ to $\rho_{G^\perp, \A}(e)$ when
$\pi_2(s) = (1, e)$; similarly,  $\rho(\B)$ is the sort-respecting function
$
\rho(\B): \mod\DAB_{V^-_2} \to \mod\B
$
sending $s\in\mod\DAB_{V^-_2}$ to $\rho_{G, \B}(e)$ when
$\pi_2(s) = (2, e)$. 
Write $\sig(\A,\B)$ for the composite map $\delta\circ\pi_1=\epsilon\circ\pi_2$.  

%Then  $\rho_{D, \A+\B} = \rho(\A)\cup\rho(\B)$ as it acts as $\rho(\A)$ on $V_1^-$-moves  and as $\rho(\B)$  on $V_2^-$-moves. *****

 \begin{toappendix}

 \begin{lemma}
The following diagram is a pullback:
$$
\xymatrix{
D(\A, \B)\pb{-45}\ar[r]^{\pi_1} \ar[d]_{\pi_2} &D\ar[d]^{\delta}\\
%\expn^-(G^\perp,\A)\vvbar \expn^-(G,\B)
\expn^-(G^\perp\vvbar G,\A+\B)
\ar[r]_{\ \quad \qquad\qquad %\epsilon^-_{G^\perp,\A}\vvbar \epsilon^-_{G,\B}
\epsilon^-_{G^\perp\vvbar G,\A+\B}
}&G^\perp\vvbar G
}
$$
where $\pi_1 \eqdef \epsilon^-_{D,\A+ \B}$ and $\pi_2$ is the unique map of instantiations from 
$(\delta\pi_1, \rho^-_{D,\A+\B})$ to $(\epsilon^-_{G^\perp\vvbar G,\A+\B}, \rho^-_{G^\perp\vvbar G,\A+\B})$.
\end{lemma}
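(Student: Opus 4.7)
The plan is to show this is a pullback in the category of event structures with polarity, by unpacking $D(\A,\B)$ as the Opponent-partial expansion $\expn^-(D,\A+\B)$ and then chasing universality twice.

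First I would confirm commutativity of the square. Take $\pi_1 \eqdef \epsilon^-_{D,\A+\B}$, the canonical projection of the partial expansion back to $D$. The composite $\delta \circ \pi_1$, together with the canonical instantiation $\rho^-_{D,\A+\B}$ lifted along $\delta$, forms an Opponent-instantiation of $G^\perp\vvbar G$ in $\A+\B$ (the lift respects the $V_1/V_2$ decomposition and is sort-correct because $\delta$ is a map of signature games). Proposition~\ref{prop:partlexpnuniversal} applied to $\expn^-(G^\perp\vvbar G,\A+\B)$ then produces the unique map of instantiations $\pi_2$ stated in the lemma, and commutativity $\epsilon^-_{G^\perp\vvbar G,\A+\B}\circ\pi_2 = \delta\circ\pi_1$ is part of this universal factorisation.

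Next, for the pullback property, suppose $E$ is an esp equipped with maps $f_1:E\to D$ and $f_2:E\to \expn^-(G^\perp\vvbar G,\A+\B)$ with $\delta f_1 = \epsilon^-_{G^\perp\vvbar G,\A+\B}\, f_2$. I would endow $E$ with the Opponent-instantiation over $D$ defined, for each $e\in E$ whose image $f_1(e)$ is labelled by an Opponent variable, by $\rho(e) \eqdef \rho^-_{G^\perp\vvbar G,\A+\B}(f_2(e))$. The commuting equation ensures $f_2(e)$ lies over the same Opponent-variable event of $G^\perp\vvbar G$ as $\delta f_1(e)$, so $\rho(e)$ is sort-correct and belongs to the right summand of $\A+\B$ according to whether the event lives over $G^\perp$ or $G$. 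By universality of $\expn^-(D,\A+\B)$, the pair $(f_1,\rho)$ determines a unique esp map $h:E\to D(\A,\B)$ satisfying $\pi_1 h = f_1$ with induced instantiation $\rho$.

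It remains to show $\pi_2 h = f_2$, and uniqueness of $h$. Both $\pi_2 h$ and $f_2$ are esp maps $E\to \expn^-(G^\perp\vvbar G,\A+\B)$; post-composing each with $\epsilon^-_{G^\perp\vvbar G,\A+\B}$ yields $\delta f_1$ in both cases, and an immediate check on an Opponent-variable event $e$ shows each induces the same instantiation value, namely $\rho(e)$ by construction. A second appeal to the universality of $\expn^-(G^\perp\vvbar G,\A+\B)$ then forces $\pi_2 h = f_2$. Uniqueness of $h$ with respect to both projections follows from the uniqueness clause of Proposition~\ref{prop:partlexpnuniversal} applied to $\expn^-(D,\A+\B)$, since any competitor would have to induce the same underlying map $f_1$ and the same instantiation $\rho$. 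The one nontrivial point to monitor throughout is the bookkeeping of the $V_1$/$V_2$ split: Opponent variables of $D$ come from either $G^\perp$ (valued in $\A$) or $G$ (valued in $\B$), and one must check that $\rho$ assigns into the correct summand of $\A+\B$; this is forced by $\delta$ being a map of signature games, so no real obstacle arises.
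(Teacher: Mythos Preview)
Your proof is correct and follows essentially the same approach as the paper: equip the apex of a competing cone with the Opponent-instantiation $\rho = \rho^-_{G^\perp\vvbar G,\A+\B}\circ f_2$, invoke the universality of $\expn^-(D,\A+\B)$ (Proposition~\ref{prop:partlexpnuniversal}) to obtain the mediating map $h$ with $\pi_1 h = f_1$, then invoke the universality of $\expn^-(G^\perp\vvbar G,\A+\B)$ to force $\pi_2 h = f_2$, with uniqueness following from the uniqueness clause of the first universality. Your added remarks on commutativity of the square and the $V_1/V_2$ bookkeeping are sound and make explicit what the paper leaves implicit.
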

\begin{proof}
To show it is a pullback consider  the following diagram (initially without $h$)
\begin{figure}[H]
    \centering
    \includegraphics[scale=0.25]{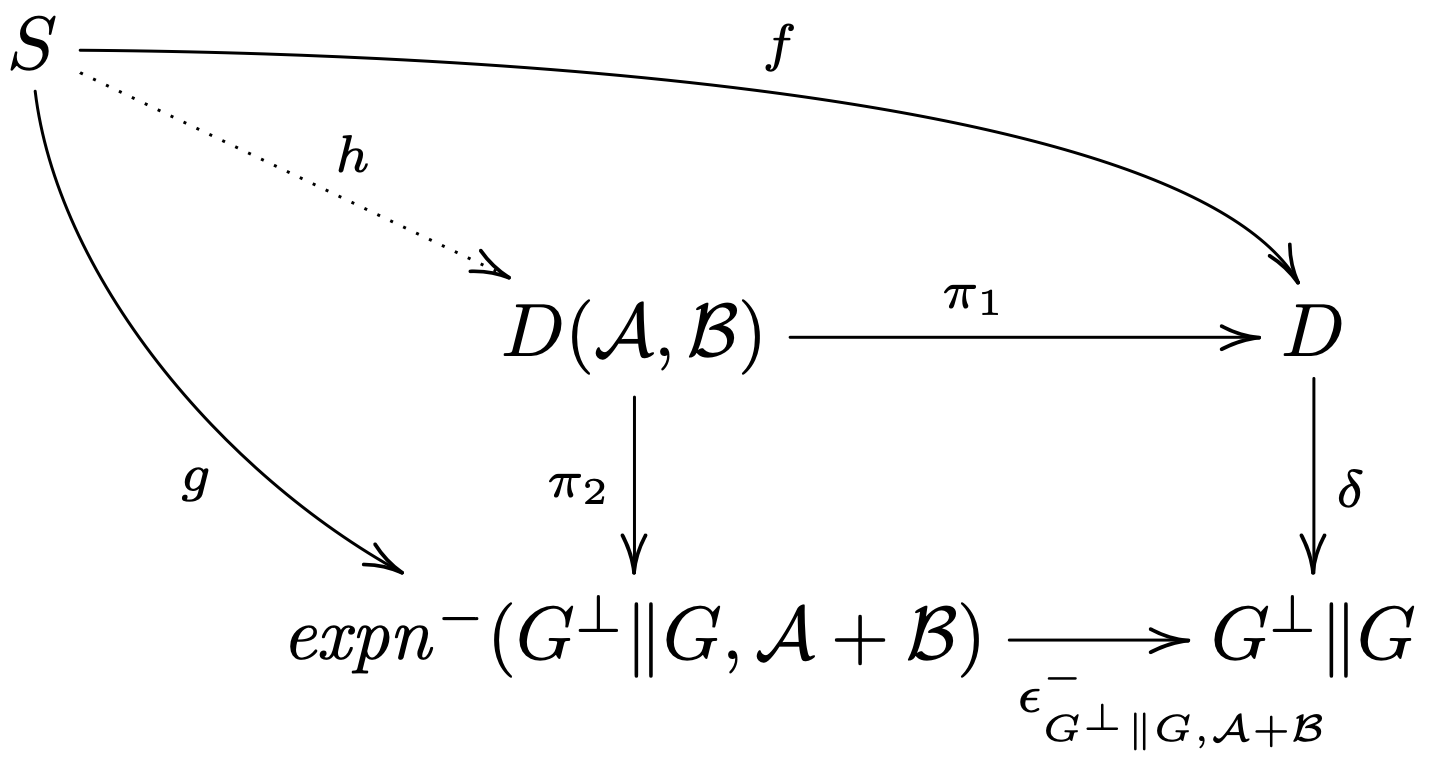}
\end{figure}
\begin{comment}
$$
\xymatrix{
S\ar@{..>}[rd]^h\ar@(d,l)[rdd]_g\ar@(r,u)[rrd]^f&&\\
&D(\A, \B)%\pb{-45}
\ar[r]^{\pi_1} \ar[d]_{\pi_2} &D\ar[d]^{\delta}\\
%\expn^-(G^\perp,\A)\vvbar \expn^-(G,\B)
&\expn^-(G^\perp\vvbar G,\A+\B)
\ar[r]_{\ \quad \qquad\qquad %\epsilon^-_{G^\perp,\A}\vvbar \epsilon^-_{G,\B}
\epsilon^-_{G^\perp\vvbar G,\A+\B}
}&G^\perp\vvbar G
}
$$
\end{comment}
\noindent
where $f$ and $g$ are maps such that $\delta  f =g\,\epsilon^-_{G^\perp\vvbar G,\A+\B}$.  
Equip $S$ with $\rho \eqdef g\, \rho_{G^\perp\vvbar G, \A+\B}$ to form an instantiation $(f, \rho)$ in $D$.  From universality of the Opponent-expansion of $D$, there is a unique map of instantiations $h:S\to \DAB$ from $(f, \rho)$ to $(\pi_1, \rho^-_{D, \A+\B})$.  This ensures $f=\pi_1 h$.
Hence $\delta  f = \epsilon^-_{G^\perp\vvbar G,\A+\B}\,\pi_2 h$.  Both $g$ and $\pi_2 h$ are maps of instantiations from $(\delta f, \rho)$ to $(\epsilon^-_{G^\perp\vvbar G,\A+\B}, \rho_{G^\perp\vvbar G, \A+\B})$. From the latter's universality, the two maps must be equal, $g=\pi_2 h$.  This provides a mediating map $h:S\to \DAB$, as shown in the diagram above.  We also require its uniqueness.  Suppose $h':S\to \DAB$ is another mediating map, \ie~so $f= \pi_1 h'$ and $g=\pi_2 h'$.  Then those two conditions together ensure that $h'$ is a map of instantiations from $(f,\rho)$ to $(\pi_1,  \rho^-_{D, \A+\B})$.  From the latter's universality $h'=h$.  %****TIDY UP AND CHECK OK***
\end{proof}

As %$\epsilon^-_{G^\perp,\A}\vvbar \epsilon^-_{G,\B}$ is open  its pullback 
$\pi_1 = \epsilon^-_{D,\A+ \B}$,  %along $\delta$
it is open. As $\delta$ is a strategy, its pullback $\pi_2$ along $\epsilon^-_{G^\perp,\A}\vvbar \epsilon^-_{G,\B}$  a strategy in  $\expn^-(G^\perp,\A)\vvbar \expn^-(G,\B)$.  
Write $\sig(\A,\B)$ for the composite map 
$$
\sig(\A,\B)=\delta\circ\pi_1 = (\epsilon^-_{G^\perp,\A}\vvbar \epsilon^-_{G,\B})\circ\pi_2 : D(\A,\B)\to G^\perp\vvbar G\,.
$$ 
Write $\rho(\A)$ for the sort-respecting function
$
\rho(\A): \mod\DAB_{V^-_1} \to \mod\A
$
sending $s\in\mod\DAB_{V^-_1}$ to $\rho_{G^\perp, \A}(e)$ when
$\pi_2(s) = (1, e)$; similarly,  $\rho(\B)$ is the sort-respecting function
$
\rho(\B): \mod\DAB_{V^-_2} \to \mod\B
$
sending $s\in\mod\DAB_{V^-_2}$ to $\rho_{G, \B}(e)$ when
$\pi_2(s) = (2, e)$.  Then  $\rho_{D, \A+\B} = \rho(\A)\cup\rho(\B)$ as it acts as $\rho(\A)$ on $V_1^-$-moves  and as $\rho(\B)$  on $V_2^-$-moves.

 \end{toappendix}

Through the partial expansion $D(\A,\B)$ we can show the sense in which every strategy from $\A$ to $\B$  in $\SD_\delta$ follows essentially the same behavioural pattern: as we shall see,
 such a strategy corresponds to a function assigning values in $\A+\B$ to $(V+V)$-moves of Player in the partial expansion $D(\A,\B)$. 
Let  $(\sig,\rho): \A\profto_\delta \B$ be a strategy in $\SD_\delta$. Then $\sig$ factors through 
the 
expansion of $G^\perp\vvbar G$ w.r.t.~$\A$ and $\B$, as 
$$
\sig: S\arr{\sig_0} \expn(G^\perp, \A)\vvbar \expn(G, \B) \to G^\perp\vvbar G.
$$
Moreover, the expansion %of $G^\perp\vvbar G$ 
w.r.t.~$\A$ and $\B$ factors into the Opponent-expansion followed by the expansion w.r.t.~Player moves:
$$
\expn(G^\perp, \A)\vvbar \expn(G, \B) \to G^\perp(\A)\vvbar  G(\B) \to G^\perp\vvbar G.
$$
Consequently, we obtain the commuting diagram
$$
\xymatrix{
S\ar[d]_{\sig_0}\ar[rr]\ar@{..>}[rd]^\theta&&D\ar[d]^\delta\\
\expn(G^\perp, \A)\vvbar \expn(G, \B)\ar[rd]&D(\A,\B)\pb{-45}\ar[d]_{\pi_2} \ar[ru]^{\pi_1}&G^\perp\vvbar G\\
&G^\perp(\A)\vvbar  G(\B)\ar[ru]_\epsilon&
}
$$
where the dotted arrow represents the unique mediating map  $\theta$ to the pullback.   

\begin{lemmarep}\label{lem:two-wayiso}
The map  $\theta$ in the diagram above  is an isomorphism.
\end{lemmarep}
\begin{proof} 
It suffices to show that $\theta$ is rigid and both injective and surjective on configurations (Lemma~3.3 of \cite{ESS}).
As $(\sig,\rho)$ is a strategy in $\SD_\delta$, the map $\pi_1\theta: S\to D$ is open so rigid.  Consequently $\theta$ is rigid.  
The composite map 
$$
S\arr{\sig_0} \expn(G^\perp, \A)\vvbar \expn(G, \B) \to  \expn^-(G^\perp, \A)\vvbar \expn^-(G, \B)
$$
is a (nondeterministic) strategy---this relies on $\expn^-(G^\perp, \A)\vvbar \expn^-(G, \B)$ being an Opponent-expansion to ensure that further expansion, necessarily by values for Player moves, doesn't violate receptivity.  As remarked earlier, $\pi_2$ is also a strategy.  The map $\theta$, as a mediating 2-cell between these two strategies, is itself a strategy, so receptive as well as rigid.  
As $\sig$ is deterministic so is $S$, making $\theta$ a deterministic strategy and ensuring its injectivity on configurations.  
To show that $\theta$ is surjective on configurations  consider a covering chain
$$
\emptyset \cov y_1 \cov \cdots\cov y_i\cov y_{i+1} \cov  \cdots  \cov y_n=y
$$
to an arbitrary $y$ in $\conf{\DAB}$.  We show by induction along the chain that there are
$$
\emptyset \cov x_1  \cov \cdots\cov x_i\cov x_{i+1} \cov \cdots \cov x_n = x
$$
in $\,\conf S$ such that each $y_i =\theta x_i $.  Establishing this inductively for %links where 
steps $y_i\cov^- y_{i+1}$ is a direct consequence of the receptivity of $\theta$.  
Consider %the case 
a step $y_i\cov^+ y_{i+1}$ where inductively we assume $y_i= \theta x_i$. 
Then $\pi_1\theta\, x_i = \pi_1\,y_i \cov^+ \pi_1\, y_{i+1}$. From the openness of $\pi_1\theta$ there is $x_{i+1}$ with $x_i\cov^+ x_{i+1}$
such that
$\pi_1\theta \, x_{i+1} = \pi_1\, y_{i+1}$. % ---the configuration $x_{i+1}$ is unique as $S$ is deterministic. 
Now,
$y_i\cov^+ \theta \, x_{i+1}$ and $y_i\cov y_{i+1}$ in $\conf{\DAB}$ and $\pi_1 \theta \, x_{i+1} = \pi_1 y_{i+1}$. 
By Lemma~\ref{lem:DApb}, $y_{i+1} = \theta \, x_{i+1}$, 
completing this inductive step.  \end{proof}

Via the isomorphism of Lemma~\ref{lem:two-wayiso}, we can reformulate the strategies of $\SD_\delta(\A,\B)$. 

\begin{theorem}\label{thm:two-sided}
Strategies $(\sig_0,\rho_0)\in\SD_\delta(\A, \B)$ %bijectively correspond 
are isomorphic 
to
strategies $$(\sig(\A,\B),%\rho
(\rho(\A) \cup k)\cup (\rho(\B) \cup h)
),%:\DA \to G^\perp\vvbar G
$$  where %$\rho =\rho(\A) + h$,
$$
k: \mod\DAB_{V_1^+} \to \mod\A
\hbox{ and }
 h: \mod\DAB_{V_2^+} \to \mod\B\,
 $$
 are sort-respecting functions
 from Player moves in $V_1$ and $V_2$ respectively, such that
 for all +-maximal $x\in\iconf\DAB$,
$$
x\models_{\sig(\A,\B), (\rho(\A) \cup k)\cup (\rho(\B) \cup h)%\rho
} W_1\to W_2.
$$
(Above, $W_1$ and $W_2$ are copies of the winning condition $W$ of $G$, renamed with  variables and constants of the left, respectively right, parts of the signature $(\Sigma+\Sigma, C+C, V+V)$.)
\end{theorem}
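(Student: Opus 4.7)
The plan is to lift the isomorphism $\theta$ of Lemma~\ref{lem:two-wayiso} from the level of event structures to the level of instantiations, and then to show that the data carried by the instantiation $\rho_0$ on $S$ decomposes cleanly, under $\theta$, into the four components displayed in the theorem, two of which ($\rho(\A)$ and $\rho(\B)$) are fixed by the partial expansion and two of which ($k$ and $h$) are free parameters.

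First, given $(\sig_0,\rho_0) \in \SD_\delta(\A,\B)$, Lemma~\ref{lem:two-wayiso} supplies an isomorphism of esps $\theta : S \iso \DAB$ with $\sig_0 = \sig(\A,\B)\circ\theta$. The instantiation $\rho_0$ is defined on the $V$-moves of $S$; using $\theta$ to transport it, and splitting the $V$-moves of $\DAB$ into the four classes $V_1^-, V_2^-, V_1^+, V_2^+$, I would check that on $V_1^-$ and $V_2^-$ the transported function is forced to coincide with $\rho(\A)$ and $\rho(\B)$ respectively: indeed, the Opponent parts of $\rho_0$ factor through the universal property of the partial expansion $\DAB$ via $\pi_1$, so the values on Opponent moves are already recorded by the pullback construction. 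The remaining data, on $V_1^+$ and $V_2^+$, then defines the sort-respecting functions $k$ and $h$ uniquely.

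Second, to go back, I would verify that any such pair $(k,h)$ does determine a legitimate strategy in $\SD_\delta(\A,\B)$. The map $\sig(\A,\B) : \DAB \to G^\perp\vvbar G$ is already an open factorisation through $\delta$ (because $\pi_1$ is open and $\delta\circ\pi_1 = \sig(\A,\B)$). Receptivity of the associated concurrent strategy holds because all Opponent instantiations have been absorbed by the partial expansion, while determinism holds because $k$ and $h$ are honest functions on Player moves and $\DAB$ itself is deterministic (the pullback of the deterministic $\delta$). Innocence is inherited from $\delta$. Thus every $(k,h)$ yields a strategy, and this assignment is inverse, up to iso, to the first construction.

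Third, for the winning condition, I would use Proposition~\ref{prop:truthInvariance}: $\theta$ is a map of instantiations from $(\sig_0,\rho_0)$ to $(\sig(\A,\B),\,\rho(\A)\cup k\cup\rho(\B)\cup h)$, so satisfaction of $W_1\to W_2$ is preserved along $\theta$. Because $\theta$ is an isomorphism of esps, +-maximal configurations of $S$ correspond bijectively to +-maximal configurations of $\DAB$, hence $(\sig_0,\rho_0)$ is winning in $\expn(G^\perp,\A)\vvbar\expn(G,\B)$ iff the transported data satisfies $W_1\to W_2$ at every +-maximal $x\in\iconf{\DAB}$. Combining all three steps gives the claimed bijective correspondence up to isomorphism of strategies.

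The main obstacle I would anticipate is the back direction: carefully verifying that the ``raw'' data $(k,h)$, when substituted into the partial expansion, genuinely produces a strategy in $\SD_\delta(\A,\B)$ and not merely an instantiation. This requires checking that the mediating map into the pullback is again open (which follows since $\pi_1$ is open), and that receptivity is not violated by adding Player-value data (which uses the fact that the further expansion to Player variables is innocuous precisely because receptivity is a condition on Opponent moves only, whose values have already been fully expanded inside $\DAB$).
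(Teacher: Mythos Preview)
Your proposal is correct and follows exactly the route the paper takes: the paper offers no separate proof of this theorem, instead presenting it as an immediate reformulation ``via the isomorphism of Lemma~\ref{lem:two-wayiso},'' and your plan is a faithful unpacking of precisely that---transporting $\rho_0$ along $\theta$, observing that the Opponent parts are forced by the partial-expansion data while the Player parts give the free $k,h$, and using Proposition~\ref{prop:truthInvariance} together with the bijection on $+$-maximal configurations to handle winning. Your anticipated ``obstacle'' in the back direction (openness via $\pi_1$, receptivity unaffected by Player-value data, determinism from $\DAB$) is exactly the content the paper leaves implicit.
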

The theorem makes precise how  strategies of $\SD_\delta$ are determined by the extra part of the instantiation to variable-moves of Player  (given by $k$ and $h$ above).

Of course, we would wish to understand composition in terms of the above reformulation of strategies.  This is most easily done in the case of one-sided games, when game comonads and composition of strategies via coKleisli composition appear almost automatically out of the partial expansion. 
 
\section{%Characterisation of strategies of o
One-sided games}\label{sec:charn}

As above, let $G$ be a game with signature $(\Sigma, C, V)$ and winning condition $W$. 
Assume $\delta:D\to G^\perp\vvbar G$ is deterministic and forms an idempotent comonad $\delta\in\GAMESIG(G,G)$. 
The game $G$ with signature $(\Sigma, V, C)$ is called {\em one-sided} if its every $V$-move is a Player move. 
Then a strategy $(\sig,\rho)$ in $\SD_\delta$ from $\A$ to $\B$ provides assignments of variable moves in $\B$ by Player on the right  in response to assignments of variable moves in $\A$ by Opponent on the left; in this sense the strategies of $\SD_\delta$ are {\em one-sided} strategies. 
In most examples above, the games are one-sided. 
However, there are many well-known two-sided games, when the $V$-moves of $G$ are of mixed polarity, \eg~for bisimulation, trace equivalence and in Ehrenfeucht–Fraïssé games.
Throughout this section we focus on %a characterisation of 
understanding $\SD_\delta$ where $\delta$ is a deterministic idempotent comonad and $G$ one-sided. 
%\begin{definition}{\rm Let $G$ be a  $(\Sigma, C, V)$-game.
%$G$ is \emph{one-sided}  iff, for all moves $g\in\mod G$,  $$\Var(g) \in V \implies \pol_G(g) = +\,.$$  
%}\end{definition}
%We characterise such one-sided strategies in $\SD_\delta$.   

The signature of $G^\perp\vvbar G$ is $(\Sigma + \Sigma, V+V, C+C)$. 
As earlier, the variables are either $V_1\eqdef \setof 1\times V$ of the left component, $G^\perp$, or $V_2 \eqdef  \setof 2\times V$ of the right component, $G$. 
Accordingly the moves of $D$ associated with variables are either Opponent moves in the set $\mod D_{V_1}$ or  Player moves in $\mod D_{V_2}$.  
Now
the partial expansion of $D$ is w.r.t.~just a single $\Sigma$-structure $\A$ and written $D(\A)$. A typical event of $D(\A)$ takes the form $(e,\ga)$ where $e\in \mod D$ and $\ga$ is a sort-respecting function from $[e]_{V_1}$ to $\mod\A$, reflecting that the expansion just instantiates $V_1$-moves. 
Specialising the earlier characterisation of the partial expansion as a pullback we observe:
%is characterised as the pullback:
\[
\vcenter{\xymatrix{
\DA\pb{-45}\ar[r]^{\pi_1} \ar[d]_{\pi_2} &D\ar[d]^{\delta}\\
\expn(G^\perp,\A)\vvbar G\ar[r]_{\qquad \red\vvbar G}&G^\perp\vvbar G
}}
\]
Write $\delta(\A)$ for the %composite 
map 
$
%\delta(\A)=
\delta\circ\pi_1 %= (\red\vvbar G)\circ\pi_2 : \DA\to G^\perp\vvbar G\,.
$. 
The function
 $\rho(\A)$ %$: \mod\DA_{V_1} \to \mod\A$
%for the function which 
sends $(e,\ga)\in\mod\DA_{V_1}$ to $\ga(e)\in\mod\A$. 
%$\rho_{G^\perp, \A}(e)$ when $\pi_2(s) = (1, e)$. % The pair $(\pi_1,\rho(\A))$ is an instantiation of those moves over the subset of variables $V_1$ by values in $\A$. The pair $(\pi_1, \rho(\A))$ is the universal {\em Opponent instantiation} of $D$ in $\A$ w.r.t.~$V_1$-moves of $D$ in the following sense: 
%\begin{proposition}\label{prop:partlexpnuniversal}
%Let $\sig: S\to D$ be a total map of event structures with polarity.  %Let $\rho: \mod S_{V_1}\to \mod\A$ be a sort-respecting function from
%$\mod S_{V_1}\eqdef \set{s\in \mod S}{\Var(\sig(\A)(s)) \in V_1 }. $
%Then there is a unique map of instantiations $\sig_0:S\to \DA$ from %$(\sig, \rho)$ to $(\pi_1, \rho(\A))$.  
%\end{proposition}
%\begin{proof}
%The above follows from the earlier Lemma~\ref{lem:universalinst} characterising expansion as giving the universal instantiation. 
%\endpf\end{proof}
%Similarly
%As before, 
%we can present the Opponent expansion concretely. %, based on the earlier Definition~\ref{def:expn} of expansion.
%Based on the earlier Definition \ref{def:expn} of expansion, Opponent expansions can be presented concretely.
 
 \setcounter{subsection}{1}
   \subsubsection{Obtaining game comonads from $\delta$.}\label{sec:comonads}
  % \subsubsection{Comonads from $\delta$}
 We rely on the notion of {\em companion events} w.r.t a given map $\sig: S\to \CC_G$.  Recall that  ${\CC_G}$ comprises $ {G^\perp} \vvbar G$ with additional causal dependencies $\bar g \imc g$ or $g \imc \bar g$, across the components of the parallel composition, according to whether $g$ has positive or negative polarity.  
 Say $\bar s, s\in S$ are {\em companions} iff $\bar s, s\in S$ are not in conflict and $\sig(\bar s) = \overline{\sig(s)}$ in $\CC_G$. 
 Note that since $\sig$ reflects causal dependency locally, we have $\bar s \leq_S s$ or $s \leq_S \bar s$ according to whether $s$ is positive or negative. 

The counit of $\delta$ %$c: \delta\Rightarrow \cc_G$ 
 % we have %the commuting diagram
  %$$
 % \xymatrix{
  %D\ar[dr]_\delta \ar[r]^c&\CC_G\ar[d]^{\cc_G}\\
  %&G^\perp\vvbar G\,.}
  %$$  
 is a map $c:D\to \CC_G$; so every $V_2$-move of Player over $G$  in $D$ causally depends on a companion over $G^\perp$ in $D$.  
 Composing the counit with the map $\pi_1: \DA \to D$, we obtain a map $c\circ \pi_1:\DA\to \CC_G$.
%Specifically, via $c$, 
Whence,
every $V_2$-move $e$ of Player over $G$ in $\DA$ causally depends on a companion  $\bar e$ over $G^\perp$ in $\DA$.  
The event $e$ is not yet instantiated to a value in a $\Sigma$-structure $\B$ whereas, in this case, its companion $\bar e$ is instantiated to a value $\rho(\A)(\bar e)\in \mod\A$ by the Opponent expansion. 
We use their companions to equip the $V_2$-moves of Player over $G$ in $\DA$ with a relational structure.  
 %Let $ e_1, \dots , e_k\in \mod\DA_{V_2}$ have sorts matching the arguments of 

% \begin{definition}
%A configuration $x$ of the partial expansion $D(\A)$ is \emph{balanced} iff every $V_1$-move %in $x$ has a companion $V_2$-move in $x$. 
% \end{definition}
 
 Endow $\mod\DA_{V_2}$ with the structure of a $\Sigma$-structure $\Rel_\delta(\A)$ as follows:
 for a relation symbol $R$ of $\Sigma$ and $ e_1, \dots , e_k\in \mod\DA_{V_2}$ such that $R(\Var(e_1), \dots, \Var(e_k))$ is well-sorted,  we say $R(e_1, \dots , e_k)$  holds in %\mod\DA_{V_2}
$ \Rel_\delta(\A)$ iff
$$
\eqalign{
\exists x\in\iconf{\DA}.\  &\hbox{$x$ is +-maximal } %\ \&\  \hbox{$x$ is balanced }
\ \&\cr
 e_1, \dots ,  e_k\in\last(x) \ \&\ 
 &R_\A(\rho(\A)(\bar e_1), \dots, \rho(\A)(\bar e_k))
\ \&\ x_{G^\perp}\models W
}
$$ 
--- $x_{G^\perp}$ is the projection of $x$ to a configuration of $\iconf{\expn(G^\perp,\A)}$.

Generally, we can lift a sort-respecting function $h: \mod{\DA}_{V_2} \to \mod\B$ to a function $h^\dagger: \mod{\DA}_{V_2} \to \mod{\DB}_{V_2}$. 
%Its definition is quite subtle. 
To do so we cannot immediately use the universality of the Opponent expansion %(Proposition~\ref{prop:partlexpnuniversal}) 
as this concerns the variables $V_1$.
Instead we have to rely on companion events in $D$ and the idempotence of $\delta$. 
Write $D_1$ for the set of events in $D$ over $G^\perp$, \ie~such that $\delta(e) = (1, g)$ for some $g\in G^\perp$.  
Similarly, write $D_2$ for those events in $D$ over $G$. 
Each Player event $e\in D_2$ has a unique companion $\bar e\in D_1$. 
The converse need not hold.   
However, from the idempotence of $\delta$ it follows that  
if $e_1\in D_1$, $e\in D_2$ and $e_1\leq_D e$, then there is a unique companion $\bar e_1\in D_2$ of $e_1$; moreover, $[\bar e_1]_{V_1} \subseteq [e]_{V_1}$ [appendix, Lemma~\ref{lem:companions}].
  
  \begin{comment}
  Below we write $ \DA_1$ and $\DA_2$ to denote those events of the Opponent expansion over $G^\perp$ and $G$ respectively.
 \begin{lemma}\label{lem:companions}\
 \begin{enumerate}
 \item[(i)]
 \begin{enumerate}
 \item[(a)]
  if $e_1\in D_1$, $e_2\in D_2$ and $e_1\leq_D e_2$, then there is a unique companion $\bar e_1\in D_2$ of $e_1$\,;
   \item[(b)]
 if $e_1\in \DA_1$, $e_2\in \DA_2$ and $e_1\leq_\DA e_2$, then there is a unique companion $\bar e_1\in \DA_2$ of $e_1$\,;
  \end{enumerate} 
  \item[(ii)]
   if $e_0, e_1\in D_1$, $e_2\in D_2$ and $e_0\leq_D \bar e_1$ and $e_1\leq_D e_2$, then $e_0\leq_D e_2$\,.
  \end{enumerate} 
   %NEED MORE LATER I THINK $x$ +-maximal in $D$  $e'\leq_D e$, $e'\in x_1$, $e\in x_2$ then companion $\bar e'\in x_2$. ****
 \end{lemma}
\end{comment}
 %\begin{definition}\label{def:dagger}{\rm 
 %Let $e\in \mod\DA_{V_2}$.  Write $x = [e]_{\DA}$.
 %For $e'\in x_{V_1}$, define $\bar\rho_B(e') = \rho_B(\bar e')$; we are assured all such $e'$ have companions $\bar e'$ by Lemma~\ref{lem:companions} and the openness of $\pi_1:\DA\to D$. We thus obtain 
 %a configuration $x[\bar \rho_B]$ of $\DB$ in which all $V_1$-moves are instantiated to elements of $\B$ by $\bar\rho_B$.  Define $\rho_B^\dagger(e)\in \mod\DB_{V_2}$ to be its top event.  *****
 Let  $h: \mod{\DA}_{V_2} \to \mod\B$ be a sort-respecting function to a $\Sigma$-structure $\B$. We use the concrete partial expansion w.r.t.~Opponent moves (see Definition~\ref{def:expn})  in defining the {\em  coextension} of  $h$ to a function $ h^\dagger: \mod\DA_{V_2} \to \mod\DB_{V_2}$. 
Let $(e,\gamma)$ be an event of $\mod\DA_{V_2}$; so $e\in D_2$ is a $V_2$-move with sort-respecting function $\gamma:[e]_{V_1} \to \mod\A$. %(%For extra clarity, 
%Recall the set $[e]_{V_1}$ comprises those events %, including itself, 
%on which $e$ causally depends in $D$ and that are labelled by variables in $V_1$.) 
We define $ h^\dagger%: \mod\DA_{V_2} \to \mod\DB_{V_2}
$  as
\[h^\dagger((e,\gamma)) \eqdef (e,\gamma')\,,\]  where $\gamma'(e_1) = h((\bar e_1, \ga\rstd [\bar e_1]_{V_1}))$
for all   $e_1\in [e]_{V_1}$.
  
%Note that all $e_1$ have  unique companions $\bar e_1$ by Lemma~\ref{lem:companions}(i)(a);
%from Lemma~\ref{lem:companions}(ii),  if $e_1\in [e]_{V_1}$ then
%$[\bar e_1]_{V_1}\subseteq [e]_{V_1}$, so we can restrict $\ga$ defined on  $[e]_{V_1}$ to $[\bar e_1]_{V_1}$.
%An equivalent way to define $h^\dagger$ is as
%$h^\dagger((e,\gamma)) = (e,\gamma')$, where $\gamma'(e_1) = h(\overline{(e_1, \gamma\rstd [e_1]_{V_1})})$ for all $ e_1\in [e]_{V_1}$. 
%Note that all $(e_1, \gamma\rstd [e_1]_{V_1})$ have unique companions $\overline{(e_1, \gamma\rstd [e_1]_{V_1})}$ by Lemma~\ref{lem:companions}(i)(b); note $\overline{(e_1, \gamma\rstd [e_1]_{V_1})} = (\bar e_1, \ga\rstd [\bar e_1]_{V_1})$.
%} \end{definition} 
%We now show that if 
\begin{definition}
     Say an assertion $\phi$ of the free logic  is {\em homomorphic} iff for all instantiations $(\sig, \rho)$ in $G$, with $\sig:S\to G$ and $\rho:\mod S_V \to \mod\A$ and homomorphisms $h:\A\to \B$, whenever % and configurations $x$ of $G$,
$x\models_{\sig, \rho}\phi $ then $x\models_{\sig, h\circ\rho}\phi$. Say $G$ has {\em homomorphic winning condition} when its winning condition $W$ is homomorphic.
    \end{definition}
\begin{definition}
    Say $\delta$ is {\em balanced} when, w.r.t.~any $\Sigma$-structure $\A$, any +-maximal configuration $x$ of $D(\A)$ such that $x_{G^\perp}\models W$ has image $\delta(\A) x = x_0\vvbar x_0\in\iconf{G^\perp\vvbar G}$, for some configuration $x_0$ of $G$. (Note $\cc_G$ is balanced because, with $G$ being race-free, the +-maximal configurations of $\CC_G$ are precisely those of form $x_0\vvbar x_0$ with $x_0$ a configuration of $G$.)
\end{definition}

Under the assumptions that $G$ has homomorphic winning condition and $\delta$ is balanced,
if $h$ is a homomorphism $h:\Rel_\delta(\A) \to \B$, then its coextension $h^\dagger:\Rel_\delta(\A) \to \Rel_\delta(\B)$ is a homomorphism. In fact, $\Rel_\delta(\_)$ with counit $\Rel_\delta(\A) \to \A$ acting as $e\mapsto \rho(\A)(\bar e)$
and coextension $(\_)^\dagger$ forms a comonad on $\mathfrak R(\Sigma)$, the category of $\Sigma$-structures:
%, generalising the earlier Theorem~\ref{thm:expncomonad}:
%Let $G$ be a $(\Sigma, C,V)$-game and let $\mathcal A$ be a $\Sigma$-structure.
%The $\Sigma$-structure $\Rel_\delta(\A)$ comprises the set $\mod{\DA}_{V_2}$, 
%$$
%\set{s\in \mod{\expn(G, {\mathcal A})}_V}{%\exists y\in\iconf{\expn(G, {\mathcal A})}.\ s\in y\ \&\ y\models_{\red,\Inst} W_G
%\exists y\models_{\red,\Inst} W_G.\ s\in y
%}\,,
%$$
%which for a $\Sigma$-relation
% $R$ and $ e_1, \dots, e_k\in \mod\DA_{V_2}$ such that $R(\Var(e_1), \dots, \Var(e_k))$. 
% Further, we have that $R(e_1, \dots , e_k)$  holds in %\mod\DA_{V_2}
%$ \Rel_\delta(\A)$ iff
%&\quad R(e_1, \dots , e_k) \hbox{ holds in } %\mod\DA_{V_2}
% \Rel_\delta(\A)
 % \hbox{ iff}\cr
%there exists $x\in\iconf{\DA}$ such that $x$ is +-maximal,  $e_1, \dots ,  e_k\in\last(x)$ and 
%$R_\A(\rho(\A)(\bar e_1), \dots, \rho(\A)(\bar e_k))$.
\begin{toappendix}

    \begin{lemma}\label{lem:companions}
%Using the notation above,
\begin{enumerate}
 \item[(i)]
 \begin{enumerate}
 \item[(a)]
  If $e_1\in D_1$, $e_2\in D_2$ and $e_1\leq_D e_2$, then there is a unique companion $\bar e_1\in D_2$ of $e_1$\,;
   \item[(b)]
 if $e_1\in \DA_1$, $e_2\in \DA_2$ and $e_1\leq_\DA e_2$, then there is a unique companion $\bar e_1\in \DA_2$ of $e_1$\,;
  \end{enumerate} 
  \item[(ii)]
   if $e_0, e_1\in D_1$, $e_2\in D_2$ and $e_0\leq_D \bar e_1$ and $e_1\leq_D e_2$, then $e_0\leq_D e_2$\,.
  \end{enumerate} 
\end{lemma}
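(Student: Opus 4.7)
The arguments hinge on two facets of the comonad structure of $\delta$: the counit $c:D\to\CC_G$, which factors $\delta$ through copycat, and the comultiplication isomorphism $d:D\iso D\scirc D$, which encodes idempotence. The plan is to prove (i)(a) in the strengthened form that $\bar e_1\leq_D e_2$, from which (i)(b) and (ii) will follow.

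For (i)(a), the idea is to pass to $D\scirc D$ via $d$. Being an invertible 2-cell between strategies, $d$ is rigid, so it sends $[e_2]$ to $[d(e_2)]$ in $D\scirc D$, which (in the notation of Appendix~\ref{app:concomp}) decomposes minimally as $y\scirc z$, with $z$ a configuration of the left copy of $D$ and $y$ a configuration of the right, matching on the middle copy of $G$. Since $\delta(e_1)\in G^\perp$ the image $d(e_1)$ lies in $z$, and since $\delta(e_2)\in G$ the image $d(e_2)$ lies in $y$. Because the chain $d(e_1)\leq d(e_2)$ crosses copies, it must traverse a synchronising pair over the middle. Invoking the comonad counit law $(D\scirc c)\circ d=\mathrm{id}_D$ then collapses the right copy to copycat behaviour, forcing the presence in $y$ of a Player event over $G$ with $\delta$-image $\overline{\delta(e_1)}$; transporting this back via $d^{-1}$ yields the required $\bar e_1\in D_2$. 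Since it lies in $y\subseteq[d(e_2)]$, the inequality $\bar e_1\leq_D e_2$ is automatic. Uniqueness follows from determinism of $\delta$, which is mono in the bicategory, making $c$ locally injective on $[e_2]$.

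The main obstacle is rigorously extracting this companion from $y$: one must verify that the counit law produces precisely a copycat-companion of the left-copy image of $e_1$ at the level of the middle game. This is the substantive use of idempotence---for a general non-idempotent deterministic $\delta$ the companion need not exist, whereas idempotence of the pattern of play guarantees it does.

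For (i)(b), the partial expansion $\DA$ is the pullback of $\delta$ along the Opponent-expansion $\epsilon$, with projection $\pi_1:\DA\to D$, so each event of $\DA$ is a pair $(e,\gamma)$ with $e\in D$ and $\gamma$ a sort-respecting instantiation of $[e]_{V_1}$. Applying (i)(a) to $\pi_1(e_1),\pi_1(e_2)\in D$ yields $e^*\in D_2$ with $e^*\leq_D\pi_1(e_2)$; pairing with $\gamma_{e_2}\rstd[e^*]_{V_1}$, where $\gamma_{e_2}$ is the instantiation component of $e_2$, gives by the universal property of the pullback the companion in $\DA_2$. For (ii), the strengthened form of (i)(a) provides $\bar e_1\leq_D e_2$, so transitivity with the hypothesis $e_0\leq_D\bar e_1$ delivers $e_0\leq_D e_2$.
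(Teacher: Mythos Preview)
Your overall plan---pass to $D\scirc D$ via the idempotence isomorphism and read off the companion from the synchronisation structure---matches the paper's approach. However, the strengthened form of (i)(a) you aim for, namely $\bar e_1\leq_D e_2$, is \emph{false} in general, so your derivation of (ii) by transitivity cannot work.

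Here is a concrete counterexample. Take $G$ to consist of two concurrent Player $V$-moves $g_1,g_2$, and let $D$ be $\CC_G$ together with the single extra immediate dependency $(1,g_1)\imc(2,g_2)$. One checks directly that $\delta:D\to G^\perp\vvbar G$ is a deterministic strategy (the added link is of the Opponent-to-Player kind permitted by innocence), that the identity-on-events gives a counit $c:D\to\CC_G$, and that $D\scirc D\cong D$ via the identity on events, so $\delta$ is an idempotent comonad. Now set $e_1=(1,g_1)\in D_1$ and $e_2=(2,g_2)\in D_2$; then $e_1\leq_D e_2$ and the companion is $\bar e_1=(2,g_1)$, but $(2,g_1)$ and $(2,g_2)$ are concurrent in $D$. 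So $\bar e_1\not\leq_D e_2$. Note that (ii) still holds here: the only $D_1$-event below $\bar e_1$ is $(1,g_1)$, and it is indeed below $e_2$---but not because $\bar e_1$ is.

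The paper therefore proves (ii) directly, not via your strengthening. Under the identification $d=\id$, one places $e_0\leq_D\bar e_1$ in the left copy and $e_1\leq_D e_2$ in the right copy of $D\sncirc D$; the pair $(\bar e_1,e_1)$ is a synchronised neutral event, yielding $e_0\leq_{D\sncirc D}e_2$, hence $e_0\leq_{D\scirc D}e_2$, hence $e_0\leq_D e_2$. For (i)(a) itself, the argument is that in $[e_2]_{D\sncirc D}$ the right-copy event $e_1\in D_1$ must synchronise with some left-copy event in $D_2$; that event is the companion. It arises as the left-copy sync partner, not as a visible event in $y$ as your outline suggests, and the counit law $(D\scirc c)\circ d=\id$ is not needed to locate it. Your treatment of (i)(b) via the explicit pullback description is fine and equivalent to the paper's appeal to openness of $\pi_1$.
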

\begin{proof}
Follows from the idempotency of $\delta$. (W.l.o.g.~here we assume the isomorphism $d:\delta\Rightarrow \delta\scirc\delta$ is given by the identity function on events.)
\begin{enumerate}[(i)]
    \item (a) Otherwise the causal dependency of $D\scirc D$ would lack $e_1\leq e_2$ which needs the synchronisation of companions $e_1$ and $\bar e_1$,  and not  coincide with that of $D$.  (b) The analogous fact for $\DA$ follows from the openness of $\pi_1:\DA\to D$. %In more detail, ****
The uniqueness of +ve companions of $-$ve moves, if they exist, follows from $D$ and $\DA$ being deterministic. 
 
    \item If $e_0, e_1\in D_1$, $e_2\in D_2$ and $e_0\leq_D \bar e_1$ and $e_1\leq_D e_2$, 
then $e_0\leq_{D\scirc D} e_2$ via the synchronisation of $e_1$ and $\bar e_1$ in the interaction, and hence $e_0\leq_D e_2$ by idempotency.  
\end{enumerate}
\end{proof}

\end{toappendix}
 \begin{theoremrep}[comonadic characterisation]\label{thm:partlexpncomonad}  
Assume $G$ has homomorphic winning condition and $\delta$ is balanced. The operation $\Rel_\delta(\_)$ extends to 
a unique comonad on $\mathfrak R(\Sigma)$, which
\begin{itemize}
\item 
maps every $\Sigma$-structure $\mathcal A$ to $\Rel_\delta(\A)$;
\item
has counit $\rho_\A: \Rel_\delta(\A)\to\A$ acting as $e\mapsto \rho(\A)(\bar e)$; 
\item
has coextension mapping a homomorphism $h: \Rel_\delta(\A)\to \B$ to a homomorphism $h^\dagger:\Rel_\delta(\A) \to \Rel_\delta(\B)$.\end{itemize}
\end{theoremrep}
\begin{proof}
 We require that a homomorphism $h: \Rel_\delta(\A)\to \B$ is taken to a homomorphism $h^\dagger:\Rel_\delta(\A) \to \Rel_\delta(\B)$  by the construction above.  Suppose the structural relation 
$R(e_1, \dots, e_k)$ in $\Rel_\delta(\A)$, \ie~ 
$$
e_1, \dots, e_k\in\last(x) \ \&\ R_\A(\rho(\A)(\bar e_1), \dots, \rho(\A)(\bar e_k))\ \&\ x_{G^\perp}\models W\,,
$$
for some +-maximal %, balanced 
configuration $x$ of $\DA$.  We require that $R(h^\dagger(e_1), \dots, h^\dagger(e_k))$ in $\Rel_\delta(\B)$, \ie
$$
h^\dagger(e_1), \dots, h^\dagger(e_k) \in\last(y) \ \&\ R_\B(\rho(\B)(\overline{h^\dagger(e_1)}), \dots, \rho(\B)(\overline{h^\dagger(e_k)})) \ \&\ y_{G^\perp}\models W.
$$
for some +-maximal %, balanced 
configuration $y$ of $\DB$.  (For brevity, we have written the elements of $\Rel_\delta(\A)$  as $e_1, \cdots, e_k$.  In the more detailed argument below, as events of the partial expansion, we shall describe them  more fully as taking the form $(e,\ga)$.)

Suppose $x$ is a +-maximal %, balanced 
configuration of $\DA$ with $x_{G^\perp}\models W$.   An element of $x$ has the form $(e,\ga)$ with $e\in \pi_1^\A x$ and $\ga:[e]_{V_1} \to \mod A$.  We produce a configuration $y$ of $\DB$, essentially by updating each event $(e,\ga)$ to $(e,\ga')$, now with $\ga':[e]_{V_1} \to \mod B$, in accord with $h$.  In detail, update $(e,\ga)\in x$ to $(e,\ga')$ where 
$$
\all e_1\in [e]_{V_1}.\ 
\ga'(e_1) = h(\overline{(e_1, \ga\rstd[e_1]_{V_1})})
$$
---the updated value $\ga'(e_1)$
%$h((\bar e_1, \ga\rstd [\bar e_1]_{V_1}))$ 
is the result of $h$ applied to the companion %$\overline{(e_1, \ga\rstd[e_1]_{V_1})}$ 
of $(e_1, \ga\rstd[e_1]_{V_1})$ in $x$; it is in knowing that this companion exists that
we use the assumption that $\delta$ is balanced.
When $(e,\ga)\in\mod x_{V_2}$, the companion is $(\bar e_1, \ga\rstd [\bar e_1]_{V_1})$, by Lemma~\ref{lem:companions}(ii), and
the update $(e,\ga')$ equals $h^\dagger((e,\ga))$.  
We define $y$ as the set of such updates of $(e,\ga)$ in $x$, \viz,
$$
y\eqdef \set{(e,\ga')}{
\exists \ga.\ (e, \ga)\in x \ \&\ 
\all e_1\in [e]_{V_1}.\ 
\ga'(e_1) = h(
\overline{(e_1, \ga\rstd[e_1]_{V_1})}
)}\,.
$$
Then, $y$ is a   +-maximal configuration of $\DB$, sharing a common image $\pi_1^\B y = \pi_1^\A x$ with $x$ in $D$.  The consistency and down-closure required of $y$ are inherited from $x$, as is $y$ being  +-maximal; via the open maps $\pi_1^\A$ and $\pi_1^\B$ which relate $x$ and $y$ through their common image in $D$.  

We now equip both $\mod x_{V_2}$ and $\mod y_{V_2}$ with $\Sigma$-structure.  
For any structural relation $R$, take $R((e_1, \ga_1), \cdots, (e_k, \ga_k))$ to hold in $\mod x_{V_2}$ iff
$$
(e_1, \ga_1), \cdots, (e_k, \ga_k) \in\last(x) \ \&\ R_\A(\ga_1(\bar e_1), \cdots, \ga_k(\bar e_k))\,. \eqno (A)
$$
Similarly, take $R((e_1, \ga'_1), \cdots, (e_k, \ga'_k))$ to hold in $\mod y_{V_2}$ iff
$$
(e_1, \ga'_1), \cdots, (e_k, \ga'_k) \in\last(y) \ \&\ R_\B(\ga'_1(\bar e_1), \cdots, \ga'_k(\bar e_k))\,.\eqno (B)
$$

%Suppose now that $x_{G^\perp}\models W$.  
We show that coextension restricts to a homomorphism between the $\Sigma$-structures,
$$
h^\dagger : \mod x_{V_2} \to \mod y_{V_2}\,.
$$
To see this, suppose $(A)$ above.  
We have that 
$h^\dagger((e_i, \ga_i))  = (e_i, \ga'_i)$, where $\ga'_i(e') = h((\bar e', \ga_i\rstd [\bar e']_{V_1}))$ for all $e'\in [e_i]_{V_1}$.
This  ensures 
$$ h^\dagger((e_i, \ga_i))\in\last(y) \hbox{ and } \ga'_i(\bar e_i) = %h((e_i, \ga_i\rstd [e_i]_{V_1})) = 
h((e_i,\ga_i))\,.$$
Because $x_{G^\perp}\models W$ and $h:\Rel_\delta(\A) \to \B$ is a homomorphism, we get
$$
R_\B(h((e_1,\ga_1), \cdots (e_k, \ga_k)), \quad \ie~R_\B(\ga'_1(\bar e_1), \cdots, \ga'_k(\bar e_k))\,.
$$
Thus  
$(B)$
obtains and we have $R(h^\dagger(e_1, \ga_1), \cdots, h^\dagger(e_k, \ga_k))$  in $\mod y_{V_2}$.  Hence $h^\dagger : \mod x_{V_2} \to \mod y_{V_2}$ is a homomorphism between the $\Sigma$-structures,  as claimed.  To complete the argument that $h^\dagger$ is a homomorphism from $\Rel_\delta(\A)$ to $\Rel_\delta(\B)$, we also need that $y_{G^\perp} \models  W$ ---the purpose of the next stage of the argument.

As $\pi_1^\B y = \pi_1^\A x$ in $D$, they have a common image $z\eqdef \epsilon_{G^\perp,\B} y_{G^\perp} = \epsilon_{G^\perp,\A} x_{G^\perp}$ in $G^\perp$. 
Endow  $z$ with the causal dependence it inherits from $G^\perp$ to furnish an inclusion map
$i: z \hookrightarrow G^\perp$.  Any  $V_1$-move $g\in z$ is the image of a unique $(e,\ga)\in x$ so associated with its companion $V_2$-move $(\bar e, \ga') \in x$;  
define $\rho_x:  \mod{z}_{V_1} \to \mod x_{V_2}$ to take a $V_1$-move $g \in z$ to the companion $V_2$-move $(\bar e, \ga') \in x$. Together $(i, \rho_x)$ form an instantiation of $G^\perp$ in the structure $\mod x_{V_2}$. Composing with the homomorphism 
$h^\dagger : \mod x_{V_2} \to \mod y_{V_2}$ we obtain an instantiation of $G^\perp$ in the structure $\mod y_{V_2}$. 
Below we shall sketch the proof of the claim that 
$$
 x_{G^\perp} \models \phi   %\hbox{  iff   }  
 \iff z \models_{i,\rho_x} \phi \,, \eqno (C)
 $$
  in the free logic; and analogously,
 $$
 y_{G^\perp} \models \phi   %\hbox{  iff   }  
 \iff z \models_{i,\rho_y} \phi \,, 
 $$
 for $y$.  With this,  %from the assumption that  $x_{G^\perp} \models W$, 
 we argue
 $$
 \eqalign{
  x_{G^\perp} \models W &\iff  z \models_{i,\rho_x} W 
 \cr
 &
 \implies z \models_{i,h^\dagger\circ\rho_x} W\,, \hbox{ as $W$ is homomorphic,}
 \cr
 &
 \implies  z \models_{i,\rho_y} W\,, \hbox{ where } \rho_y = h^\dagger\circ\rho_x\,,
 \cr
 &
 \iff  y_{G^\perp} \models  W\,, 
 \cr}
 $$
where for the final step we reuse the claim, having observed that $\rho_y:  \mod{z}_{V_1} \to \mod y_{V_2}$ takes a $V_1$-move $g \in z$, the image of a unique $(e,\ga) \in y$, to the companion $V_2$-move $(\bar e, \ga') \in y$.

Because 
$x_{G^\perp} \models W$, we obtain $y_{G^\perp} \models W$.
Combined with the fact that $h^\dagger: \mod x_{V_2} \to \mod y_{V_2}$ is a homomorphism this demonstrates that, more globally, $h^\dagger$ is a homomorphism from $\Rel_\delta(\A)$ to $\Rel_\delta(\B)$.

It remains to tidy away the claim $(C)$, 
which follows by  structural induction on $\phi$, in a similar manner to Proposition~\ref{prop:truthInvariance}.  As an example, a key basis case is when $\phi$ is $R(t_1, \cdots, t_k)$.  Letting $(e,\ga)\in \mod x_{V_2}$ and $g$ be the image under $\delta$ of $\bar e$ in $G^\perp$, it  can be shown  that 
$$
(e, \ga)\in \last(x) \iff g \in \last(z)\,.
$$
Using this, it follows from their definition that 
$$
 x_{G^\perp} \models R(t_1, \cdots, t_k)   \iff z \models_{i,\rho_x} R(t_1, \cdots, t_k)\,.
 $$

To establish the comonad 
we need to verify the laws
$$
\Inst^\dagger = \id_{\Rel_\delta(\A)},
\quad
\Inst \circ h^\dagger=h, 
\quad 
(k\circ h^\dagger)^\dagger  = k^\dagger\circ h^\dagger.
$$
They follow routinely. 
\end{proof}

%\glynn{THE COMONAD CONSTRN TAKES NO ACCOUNT OF WINNING CONDITIONS OF G}
 %\glynn{***I THINK THIS IS PROBABLY WHERE ALL THE STRATEGIES AS COKLEISLI MAPS IS TAKING PLACE, WITHOUT ANY EXPLICIT WINNING CONDITIONS, THEIR ROLE BEING TAKEN OVER BY THE ASSUMPTION OF HOMOMORPHISMS ****}
%\noindent(The characterisation above simplifies in the common case where $\delta$ is copycat $\cc_G$, as then $\mod{\Rel_\delta(\A)}$ coincides with  $\mod{\expn(G,\A)}_V$ 
% [appendix, {Corollary}~\ref{cor:SDcopycat}].)
% We can simplify the definition of t

 We can simplify the definition of the comonad in the common case where $\delta$ is copycat; it is then based on the expansion of $G$.

\begin{corollaryrep}~\label{cor:SDcopycat} Assume $G$ has homomorphic winning condition $W$.
When $\delta=\cc_G$, under $e\mapsto \bar e$ the $\Sigma$-structure
$\Rel_\delta(\A)$ is isomorphic to  $\mod{\expn(G,\A)}_V$ with $\Sigma$-relations
$$
\eqalign{
&R(e_1, \cdots, e_k)\hbox{\   in \ } \mod{\exp(G,\A)}_V \hbox{\ iff \ }
\exists x\in\iconf{\exp(G,\A)}.\  \cr
& 
e_1, \cdots, e_k\in\last(x) \ \&\ R_\A(\Inst(e_1), \cdots, \Inst(e_k))\ \&\ x \models  W.
}
$$
Under this isomorphism 
the counit acts as %$(g,\ga)\mapsto \ga(g)$ 
$\Inst$
and 
coextension takes
%a homomorphism 
$h: \mod{\expn(G,\A)}_V\to \mod\B$ to $h^\dagger: \mod{\expn(G,\A)}_V \to  \mod{\expn(G,\B)}_V$, where 
$h^\dagger(g,\ga) = (g,\ga')$ with $\ga'(g') = h(g', \ga\rstd [g']_G)$, %for 
for all $g'\in [g]_V$.  
\end{corollaryrep}
\begin{proof}  
For  $x\in\iconf{\exp(G,\A)}$, write $$R_x(e_1, \cdots, e_k) \ \hbox{ iff }\ 
e_1, \cdots, e_k\in\last(x) \ \&\ R_\A(\Inst(e_1), \cdots, \Inst(e_k))\ \&\ x \models  W.
$$

By definition, we have
$R(e_1, \cdots, e_k)$ in $\Rel_{\cc_G}(\A)$
iff
there exists +-maximal $y\in \iconf{\CC_G(\A)}$  with $e_1, \cdots, e_k\in \last(y)$ and 
$R_\A(\rho(\A)(\bar e_1), \cdots, \rho(\A)(\bar e_k))$ and $y_{G^\perp}\models W$. %As $G$ is race-free, the +-maximal configuration $y$ of $\CC_G(\A)$ has the form $y=  x \vvbar x_0 $, where 
Taking $ x= y_{G^\perp}\in \iconf{\expn(G,\A)}$ %consists of all the companion copies of 
%and $x_0=\epsilon_{\G^\perp, \A} x\in\iconf{G}$, with
we obtain $R_{x}(\bar e_1, \cdots, \bar e_k)$. % and $x\models W$.  
Then, $R(\bar e_1, \cdots, \bar e_k)$ in $\mod{\exp(G,\A)}_V$.  

Conversely, if $R(\bar e_1, \cdots, \bar e_k)$ in $\mod{\exp(G,\A)}_V$ then 
$R_{\bar x} (\bar e_1, \cdots, \bar e_k)$ for some configuration $\bar x$ of $\expn(G,\A)$. % with $\bar e_1, \cdots, \bar e_k\in\last(x)$.  
Letting $x\eqdef \red \bar x$ we construct a +-maximal configuration $y% = (\bar x\vvbar x)[\rho] 
\in \iconf{\CC_G(\A)}$ as follows.
The map $\red\vvbar \id_x : \bar x\vvbar x \to G^\perp\vvbar G$ together with 
$\rho: (\bar x\vvbar x)_{V_1} \to \mod\A$, taking $(1,\bar e)$ to $\Inst(\bar e)$, form an instantiation of (the left part of) $G^\perp\vvbar G$ in $\A$.  Via the universality of partial expansion~\ref{prop:partlexpnuniversal} we obtain a map
$\bar x\vvbar x \to \CC_G$.  Its image provides $y \eqdef (\bar x\vvbar x)[\rho] \in \iconf{\CC_G(\A)}$. Then,
 $y_{G^\perp}=\bar x$ so $y_{G^\perp}\models W$ and---writing $e_i$ for the companion of $\bar e_i$ in $y$ ---we have $e_1, \cdots, e_k\in \last(y)$ with
$R_\A(\rho(\A)(\bar e_1), \cdots, \rho(\A)(\bar e_k))$. Hence $R(e_1, \cdots, e_k)$ in $\Rel_{\cc_G}(\A)$.

The re-expression of the counit and extension follow directly from the bijection %isomorphism
$\mod{\Rel_\delta(\A)}\iso\mod{\expn(G,\A)}_V$ given by $e\mapsto \bar e$. 
\end{proof}
In particular, from %Theorem \ref{thm:partlexpncomonad} 
Corollary~\ref{cor:SDcopycat} we obtain that the pebbling
%\glynn{ *****pebbling comonads based on copycat are given directly by the comonad described in Theorem~\ref{thm:expncomonad}****}
 comonad $\mathbb{P}_k$ from~\cite{AbramskyDawarWang} arises as a special case:  $\mathbb{P}_k ({\mathcal A}) \iso  \Rel_{\cc_G}(\A)$,
where $k$ is the size of the set of variables in the $k$-pebble game $G$  of Example~\ref{kpebble}. 
Similarly, from Theorem \ref{thm:partlexpncomonad}  and its corollary, we can obtain other well-known game comonads such as the modal comonad \cite{relatingstructure}, pebble-relation comonad \cite{pebblerelationgame}, and the Ehrenfeucht–Fra\"iss\'e commonad \cite{relatingstructure}. 
This raises the question of when a coKleisli category is isomorphic to $\SD_\delta$, and what its relationship with $\SD_\delta$ is. More broadly, we seek a characterisation of one-sided strategies.  
In general, they are associated with a more refined preservation property than that of homomorphism, \viz~that winning conditions are preserved. 
%Expressing that winning conditions are preserved involves assertions about configurations.
%Therefore, a more involved definition than that of homomorphism is required.
This will be captured in Section \ref{sec:onesidedstrategies} by the concept of {\em $\delta$-homomorphism}. It specialises to usual homomorphisms w.r.t.~%when 
appropriate winning conditions. % are chosen.  
 \setcounter{subsection}{2}
\subsubsection{Eilenberg-Moore coalgebras.}
Arboreal categories axiomatise many of those categories of Eilenberg-Moore coalgebras that arise from game comonads~\cite{arboreal}.  In this section we explore the Eilenberg-Moore coalgebras of $\Rel_\delta(\_)$ for a balanced deterministic idempotent comonad  $\delta:D\to G^\perp\vvbar G$ on a one-sided game $G$ with homomorphic winning condition; and see that they correspond to certain event algebras.  Categories of event algebras 
specialise to arboreal categories when the $V$-moves form a tree-like event structure and to linear arboreal categories \cite{lineararboreal} when they form a path-like event structure.

Assume $G$ has homomorphic winning condition and $\delta$ is balanced.
As might be expected, the coalgebras of $\Rel_\delta(\_)$ can be represented by $\Sigma$-structures which also take an event-structure shape, a characterisation we now make precise.   %Let $D_{V_2}$  be the event structure obtained by projecting $D$ to its +ve events labelled by variables, which are precisely its $V_2$-moves.  
 First note that 
 $\Rel_\delta(\A)$ inherits the form of an event structure, $D(\A)\downarrow\mod{\Rel_\delta(\A)}$ from the partial expansion $D(\A)$ w.r.t.~a $\Sigma$-structure $\A$;   
  the event structure, also denoted by $\Rel_\delta(\A)$, is the projection of $D(\A)$ to its +ve assignments, all in $V_2$.   The construction is functorial: given a $\Sigma$-homomorphism $h:\A\to\B$, the function $\Rel_\delta(h)$ is both a homomorphism and a rigid map of event structures.  

  The coalgebras of $\Rel_\delta(\_)$ correspond to $\delta$-event algebras, \viz~%*****substructures of $\D(A)_{V_2}$ such that ****
 pairs $(\A, f)$ consisting of the $\Sigma$-structure $\A$ and a
rigid map of event structures, 
 $$f : (\mod\A, \leq_\A, \hash_\A) \to \Rel_\delta(\A),$$ 
whose domain is an event structure with events the elements of $\A$, such that whenever $f(a) =(e,\ga)$,
\[
\eqalign{
& \ga(\bar e) = a \ (\hbox{so $f$ is injective)} \hbox{ and}\cr
& e_1 \in [e]_{V_1} \implies  f(\ga(e_1))= (\bar e_1, \ga\rstd [\bar e_1]_{V_1}), \cr
}
\eqno (i)\]
and whenever $R_\A(a_1,\cdots, a_k)$, for $R$ in the signature,
\[
\eqalign{
&\exists x\in\iconf{D(\A)}.\  x\hbox{ is +-maximal}\ \&\ \cr
&
f(a_1),\cdots, f(a_k)\in \last_{D(\A)}(x) \ \&\  x_{G^\perp}\models W
%******\all i,j \in [1, k].\ 
%\neg\, a_i\hash_\A a_j \ \&\ \cr
%&\quad (a_i\leq_\A a_j \ \&\ \Var(f_0(a_i) =\Var(f_0(a_j)) \implies a_i = a_j)
.}
\eqno (ii)\]
 Suppose $$
f : (\mod\A, \leq_\A, \hash_\A) \to \Rel_\delta(\A)
\hbox{ and } g: (\mod\B, \leq_\B, \hash_\B) \to \Rel_\delta(\B)
$$
are $\delta$-event algebras.
A map  
from $f$ to $g$ is a map of event structures $h: (\mod\A, \leq_\A, \hash_\A) \to (\mod\B, \leq_\B, \hash_\B)$,
which is also a homomorphism $h:\A\to\B$ such that $\Rel_\delta(h) f= g h$. 
%$$ 
%\xymatrix{
%(\A, \leq_\A, \hash_\A) 
%\ar[d]_h\ar[dr]^{f_0} &\\
%(\B, \leq_\B, \hash_\B) \ar[r]^{g_0}& D^+_V\,.}
%$$
Such maps compose (and are necessarily rigid). 

\begin{toappendix}
    \subsubsection{Eilenberg-Moore coalgebras.}

In proving the characterisation of coalgebras of $\Rel_\delta$ it will be convenient to use the following  proposition~\cite{icalp82,evstrs}:

\begin{proposition}\label{prop:mapchar}
Let $A$ and $B$ be event structures.  A partial function $f:\mod A \parrow \mod B$ is a map of event structures iff
\begin{itemize}
\item
$f(a) = b \ \&\ b'\leq b
\implies 
\exists  a'\leq a.\ f(a') = b'$, for all $a\in\mod A, b' b'\in \mod B$; and
\item
$f(a) = f(a') \hbox{ or } f(a) \hash f(a') \implies a = a'   \hbox{ or } a\hash a'$, for all $a'a'\in\mod A$. 
\end{itemize}
\end{proposition}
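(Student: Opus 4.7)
The plan is to prove the two implications separately. The forward direction translates the global definition of a map (configurations go to configurations, plus local injectivity) into the two local bullets; the reverse direction reassembles the bullets into the configuration-level properties.

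For the forward direction, suppose $f:\mod A \parrow \mod B$ is a map of event structures. The first bullet is immediate: for $a\in\mod A$, the down-closure $[a]$ is a configuration, so $f[a]\in\iconf B$ is down-closed; given $b'\leq b=f(a)$ we have $b'\in f[a]$, so $b'=f(a')$ for some $a'\in[a]$, i.e.~$a'\leq a$. For the second bullet, assume $f(a)=f(a')$ or $f(a)\hash f(a')$, both defined. If $a$ and $a'$ happen to lie in a common configuration $x$, then $fx$ is a configuration, hence consistent (excluding the conflict case) and locally injective (forcing $a=a'$ in the identification case). Otherwise $\setof{a,a'}$ is inconsistent, so there exist $e_1\leq a$ and $e_1'\leq a'$ with $e_1\hash e_1'$; two appeals to the inheritance axiom of $\hash$ (with its symmetry) lift this to $a\hash a'$.

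For the reverse direction, assume the two bullets. I first observe that local injectivity of $f$ on any configuration $x$ is the special case of the second bullet in which the two events are consistent: from $f(e)=f(e')$ the bullet yields $e=e'$ or $e\hash e'$, and the latter is forbidden inside a configuration. It remains to check that $fx\in\iconf B$ whenever $x\in\iconf A$. Down-closedness of $fx$ is a direct application of the first bullet, since the promised $a'\leq a\in x$ lies in $x$ by down-closedness of $x$. Consistency of $fx$ follows from the second bullet: a binary conflict $f(a)\hash f(a')$ with $a,a'\in x$ would force $a=a'$ (impossible because $\hash$ is irreflexive) or $a\hash a'$ (impossible inside the consistent $x$).

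The only delicate step is the passage from ``$a$ and $a'$ inconsistent'' to ``$a\hash a'$'' in the forward direction; this rests on the inheritance axiom $e\hash e'\leq e''\implies e\hash e''$ together with the symmetry of $\hash$, applied first to lift conflict up below $a$ and then below $a'$. Everything else is a routine unwinding of the definitions of map and configuration.
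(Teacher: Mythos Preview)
Your proof is correct and is the standard argument for this characterisation. The paper itself does not prove the proposition but merely cites it from \cite{icalp82,evstrs}, so there is no in-paper proof to compare against.
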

\end{toappendix}

\begin{theoremrep}\label{thm:coalgs}
$\delta$-Event algebras and their maps form a category isomorphic to the category of Eilenberg-Moore coalgebras of $\Rel_\delta(\_)$.
\end{theoremrep}

\begin{proof}
Abbreviate $\Rel_\delta$ to $\R$. 
Let $f:\A\to \R(\A)$ be an Eilenberg-Moore coalgebra of the comonad $\R(\_)$. By definition, we have the commuting diagrams
$$
\xymatrix{
\A\ar[r]^f\ar[d]_f& \R(\A)\ar[d]^{\nu_\A}\\
\R(\A) \ar[r]_{\R(f)}& \R(\R(\A))
}
\qquad\hbox{ and }\qquad 
\xymatrix{
\A\ar[dr]_{\id_\A}\ar[r]^f& \R(\A)\ar[d]^{\rho_\A}\\
&\A\,.
}
$$
Comultiplication $\nu_\A$ is obtained as $(\id_{\R(\A)})^\dagger$ from which we derive
$$
\nu_\A(e,\ga) = (e, \ga')
 $$
 where, for all $e_1\in [e]_{V_1}$,
 $$
 \ga'(e_1) = (\bar e_1, \ga\rstd [\bar e_1])\,.
 $$

Let $a\in\mod A$ and 
suppose $f(a) =(e,\ga)$. From the second commuting diagram we get
$$
\ga(\bar e) =a\,. \eqno (1)
$$
It follows directly that $f$ is injective.  
From the first commuting diagram we obtain
$$
\R(f)\circ f (a) = \nu_\A\circ f(a)\,.
$$
But 
$$
\R(f)\circ f (a) = \R(f) (e,\ga) = (e, f\circ \ga)
$$
while 
$$
 \nu_\A\circ f(a) = \nu_\A(e,\ga) = (e, \ga')
 $$
 where, for all $e_1\in [e]_{V_1}$,
 $$
 \ga'(e_1) = (\bar e_1, \ga\rstd [\bar e_1])\,. 
 $$
% from the action of the comultiplication $\nu_\A$.  
 Thus, from the commuting first diagram we obtain
 $$
\all e_1 \in [e]_{V_1}\ f(\ga(e_1)) = (\bar e_1, \ga\rstd [\bar e_1])\,. \eqno (2)
$$

Construct an event structure $(\mod\A, \leq_\A, \hash_\A)$ on the elements of $\A$.
 For $a, a'\in\mod\A$, take 
 $$
a'\leq_\A a \hbox{ iff } f(a') \leq f(a)\  \hbox{ and } \ %\cr 
a'\hash_\A a \hbox{ iff }  f(a') \hash f(a)\,, \hbox{ in } D(\A)%\,, \ie\cr
\,.
$$
Now, from a special case of (2), if $f(a) = (e, \ga)$ and $(e',\ga') \leq (e,\ga)$ in $D(\A)_{V_2}$ then $\bar e'\in [e]_{V_1}$, so
 taking $a'=\ga(\bar e')$ we obtain $a'\leq_\A a$ and $f(a') = (e', \ga')$. As $f$ is also injective and  reflects conflict it is a map of event structures, by Proposition~\ref{prop:mapchar}.
The function $f$ is an injective and a rigid map of event structures, \ie~a rigid embedding 
$$f: (\mod\A, \leq_\A, \hash_\A) \to \R(\A)\,,$$ 
 which additionally satisfies (1) and (2).
  
%Let $D_{V_2}$  be the event structure obtained by projecting $D$ to its +ve events labelled by variables, which are precisely its $V_2$-moves.  
%Define $f_0$ to be the composition 
%$$
%\xymatrix{
%f_0: \A\ar[r]^f & \R(\A) \ar[r]^\epsilon & D_{V_2}}\,,
%$$
%where $\epsilon(e,\ga) = e$.  
%Then, $$f_0 : (\mod\A, \leq_\A, \hash_\A) \to D_{V_2}$$ is a
%total map of event structures such that 
%$$ 
%e_1 \in [f_0(a)]_{V_1} \implies \exists a_1 <_\A a.\ f_0(a_1) = \bar e_1\,, \eqno(3)
%$$
%---this follows from (2), where $a_1$ is necessarily unique by the local injectivity of $f_0$. Notice that using (3) we can recover the original $f$ from $f_0$: for $a\in\mod\A$, we have 
%$f(a) = (f_0(a), \ga)$ where $\ga(e_1) = a_1$, that unique $a_1\leq a$ for which $f_0(a_1) = \bar e_1$.

Moreover, suppose  $R_\A(a_1,\cdots, a_k)$.   Then
\[
\exists x\in\iconf{D(\A)}.\  x\hbox{ is +-maximal}\ \&\ 
f(a_1),\cdots, f(a_k)\in \last_{D(\A)}(x) \ \&\  x_{G^\perp}\models W\,
%******\all i,j \in [1, k].\ 
%\neg\, a_i\hash_\A a_j \ \&\ \cr
%&\quad (a_i\leq_\A a_j \ \&\ \Var(f_0(a_i) =\Var(f_0(a_j)) \implies a_i = a_j)
\eqno(3)
\]
---this follows from $f$ being a homomorphism and the interpretation of $R$ in $\R(\A)$.

Conversely, a rigid map which satisfies (1) and (2) checks out to be a coalgebra.

The category of Eilenberg-Moore coalgebras of $\R(\_)$ has maps homomorphisms $h$ making
$$
\xymatrix{
\A\ar[r]^f \ar[d]_h&\R(\A)\ar[d]^{\R(h)}\\
\B\ar[r]_g&\R(\B)}
$$
commute, a fact which is left undisturbed when lifted to the required maps of event structures.
%, is isomorphic to the category of event algebras, with maps homomorphisms $h:\A\to\B$ making the diagram
%$$ 
%\xymatrix{
%(\mod\A, \leq_\A, \hash_\A) 
%\ar[d]_h\ar[dr]^{f_0} &\\
%(\mod\B, \leq_\B, \hash_\B) \ar[r]_{g_0}& D_{V_2}}
%$$
%in event structures commute.  Note $\R(h) = (h\circ\rho_\A)^\dagger$, so $\R(h)(e,\ga) = (e, h\circ\ga)$. %(e, \ga')$ where $\ga'(e_1)= h(\ga(e_1)$ for $e_1\in [e]_{V_1}$. 
%Assuming the upper diagram commutes we obtain a (necessarily rigid) map of event structures $h:(\mod\A, \leq_\A, \hash_\A)\to (\mod\B, \leq_\B, \hash_\B)$ ---this follows from the way the event structures are defined; the lower diagram then obviously commutes.  Conversely,  assuming the lower diagram commutes, we construct the corresponding coalgebras $f:\A\to \R(\A)$ and $g:\B\to \R(\B)$ in the manner described above, which ensures the commutation of the upper diagram. 
\end{proof}

%\noindent
Coalgebras have a simpler characterisation when $\delta$ is copycat:
\begin{corollary}\label{co:coalgebras}
Let $G$ be a signature game with homomorphic winning condition. Let $G_V$ be its projection to $V$-moves. A
$\cc_G$-event algebra corresponds to a
pair $(\A, f_0)$ consisting of a $\Sigma$-structure $\A$ and a
rigid map of event structures 
 $$f_0 : (\mod\A, \leq_\A, \hash_\A) \to G_V,$$ 
whose domain is an event structure with events %the 
elements of $\A$, such that whenever $R_\A(a_1, \cdots, a_k)$,   for a $\Sigma$-relation $R$,  
  \[
 \eqalign{
\exists x\in \mathcal C&(\expn(G,\A)).\  x\models W \ \& \ \cr &f(a_1), \cdots, f(a_k) \in \last_{\expn(G,\A)}(x).\cr}
\eqno (ii)'
\]
Above, $f:\mod\A\to \mod{\expn(G,\A)}$ is defined by  $f(a) = (f_0(a), \ga)$, where $\ga(e_1)$ is that unique $a_1\leq_\A a$ for which $f_0(a_1) = e_1$. 
\end{corollary}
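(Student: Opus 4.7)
The plan is to obtain Corollary~\ref{co:coalgebras} as the specialisation of Theorem~\ref{thm:coalgs} to the copycat comonad $\delta=\cc_G$, by transporting the characterisation through the explicit isomorphism $\mod{\Rel_{\cc_G}(\A)}\iso\mod{\expn(G,\A)}_V$ established in Corollary~\ref{cor:SDcopycat}. Recall that this isomorphism sends an event $e\in \mod{\CC_G(\A)}_{V_2}$ to its unique $V_1$-companion $\bar e$, and identifies the $\Sigma$-structure and event-structure of $\Rel_{\cc_G}(\A)$ with those naturally induced on $\mod{\expn(G,\A)}_V$ by $\expn(G,\A)$.

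The first step is to translate the data of a $\cc_G$-event algebra. Starting from a rigid map $f:(\mod\A,\leq_\A,\hash_\A)\to \Rel_{\cc_G}(\A)$ satisfying condition~(i), I would post-compose (through the isomorphism above) with the canonical event-structure projection $\expn(G,\A)\downarrow V\to G_V$ induced by $\red$ to define $f_0:\mod\A\to G_V$, which remains rigid since both factors are. Conversely, from a rigid $f_0:\mod\A\to G_V$, I would recover $f$ by setting $f(a)=(f_0(a),\gamma_a)$ with $\gamma_a(e_1)\eqdef a_1$, the unique $a_1\leq_\A a$ satisfying $f_0(a_1)=e_1$; existence and uniqueness of $a_1$ follow from $f_0$ being a rigid map of event structures, which by the lifting characterisation of rigidity forces $f_0$ to restrict to a bijection $[a]_{\leq_\A}\iso [f_0(a)]_V$. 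The two clauses of condition~(i) are then immediate from this reconstruction, giving a bijection between the underlying data of $\cc_G$-event algebras and the pairs $(\A,f_0)$ of the corollary.

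The second step is to reformulate condition~(ii). Given a +-maximal $x\in\iconf{\CC_G(\A)}$ with $x_{G^\perp}\models W$, the $G^\perp$-projection $\bar x$ lies in $\iconf{\expn(G,\A)}$ and satisfies $\bar x\models W$. Under the companion bijection $e\mapsto \bar e$, the $V_2$-events in $\last_{\CC_G(\A)}(x)$ correspond exactly to the $V$-events in $\last_{\expn(G,\A)}(\bar x)$, so the existential in (ii) maps directly onto that in (ii)'. In the converse direction, any $x_0\in\iconf{\expn(G,\A)}$ with $x_0\models W$ lifts to a +-maximal configuration of $\CC_G(\A)$ whose $G^\perp$-projection is $x_0$, via the universal instantiation of Lemma~\ref{lem:universalinst} (exactly as in the argument already carried out in the proof of Corollary~\ref{cor:SDcopycat}). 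This establishes the equivalence of (ii) and (ii)'.

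The main obstacle I anticipate is this last combinatorial bookkeeping across the companion correspondence: verifying that +-maximality on the $\CC_G(\A)$ side precisely matches the absence of any further obligation on the $\expn(G,\A)$ side, and that $\last$ transports correctly when restricted to $V_2$-events. Once this is settled, morphisms of coalgebras translate to morphisms of pairs $(\A,f_0)$ automatically, because functoriality of $\Rel_{\cc_G}$ is compatible with the isomorphism of Corollary~\ref{cor:SDcopycat}, so the characterisation of objects extends without further work to the full category.
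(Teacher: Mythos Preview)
Your proposal is correct and follows the approach the paper intends: the corollary is stated without proof, positioned to follow immediately from Theorem~\ref{thm:coalgs} specialised to $\delta=\cc_G$ via the isomorphism of Corollary~\ref{cor:SDcopycat}, which is exactly your route. The two reductions you identify---that condition~(i) becomes automatic once $f$ is reconstructed from $f_0$ via the bijection $[a]_{\leq_\A}\iso[f_0(a)]_V$ forced by rigidity and local injectivity, and that condition~(ii) translates to~(ii)' via the companion correspondence already exploited in the proof of Corollary~\ref{cor:SDcopycat}---are the right ones, and your anticipated bookkeeping about $\last$ and $+$-maximality is handled by that same proof.
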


A benefit of game comonads is their coalgebraic characterisation of combinatorial parameters such as treewidth~\cite{AbramskyDawarWang} and pathwidth~\cite{pebblerelationgame}.
We can use Corollary \ref{co:coalgebras} to extract the same results.
As an example, consider the width of a tree-decomposition from $\cc_G$, where $G$ is the $k$-pebble game.
Condition $(ii)'$ %in the definition of $\delta$-event algebra 
implies that $\mathcal A$ has a $k$-pebble forest cover.
   From the fact that tree-decomposition of width ${<k}$ is equivalent to $k$-pebble forest cover \cite{relatingstructure}, we obtain a coalgebraic characterisation of treewidth similarly to \cite{AbramskyDawarWang}.
   Theorem \ref{thm:coalgs} and its corollary %\ref{co:coalgebras} 
   open up the means to analogous investigations based on properties of event structures---their width, concurrency width, degree of confusion and causal depth~\cite{thesis}.    
%{\it E.g.}
For instance,
there are variations of Example~\ref{ex:multigph-revisited}, \eg~where in $G$ we have $k$ concurrent assignments and compare two undirected graphs. % (or multigraphs).
In this case, the coalgebras %w.r.t.~the game comprising $k$ concurrent moves, %mentioned at the end of Example~\ref{ex:multigph-revisited}, 
are colourings of %are those 
graphs with chromatic number $\leq k$.

To our knowledge, all the game comonads mentioned in the literature, %which are 
based on instantiations of moves as single elements of relational structures, fit within the scheme above. As do new game comonads,  such as that associated with the game for trace-inclusion, Example~\ref{traceinclusion};
as well as those exploiting the explicit concurrency of event structures. %its coalgebras form a linear arboreal category. 
Note, not all %the Eilenberg-Moore 
categories of  coalgebras we obtain are {\em arboreal categories}:  that of Example~\ref{ex:multigph-revisited} will not be;  its coalgebras can take a non-treelike %event-structure 
shape.
\setcounter{subsection}{3}
\subsubsection{
Characterisation of one-sided strategies.
}\label{sec:onesidedstrategies}
%***TOO MUCH OVERLAP WITH EARLIER?***

Even in the one-sided case, not all strategies in $\SD_\delta$ are coKleisli maps of game comonads.
Here we characterise all $\SD_\delta$ in the one-sided case.

Consider now a strategy $(\sig,\rho): \A\profto_\delta \B$ in $\SD_\delta$ where $\A$ and $\B$ are $\Sigma$-structures. In this one-sided case the situation of Section~\ref{sec:partlexpn} simplifies. Again,
 $\sig$ factors through 
the expansion of $G^\perp\vvbar G$ with respect to $\A$ and $\B$ as 
\[\sig: S\arr{\sig_0} \expn(G^\perp, \A)\vvbar \expn(G, \B) \to G^\perp\vvbar G.\]
%Moreover, t
The expansion %of $G^\perp\vvbar G$ 
w.r.t.~$\A$ and $\B$ factors into the expansion  w.r.t.~$\A$ followed by the expansion w.r.t.~$\B$ as
$$\expn(G^\perp, \A)\vvbar \expn(G, \B) \to \expn(G^\perp,\A)\vvbar G \to G^\perp\vvbar G.$$
%Consequently, 
Now we obtain the commuting diagram
\[
\vcenter{\xymatrix{
S\ar[d]_{\sig_0}\ar[rr]\ar@{..>}[rd]^\theta&&D\ar[d]^\delta\\
\expn(G^\perp, \A)\vvbar \expn(G, \B)\ar[rd]&\DA\pb{-45}\ar[d]_{\pi_2} \ar[ru]^{\pi_1}&G^\perp\vvbar G\\
&\expn(G^\perp,\A)\vvbar G\ar[ru]_\epsilon&
}}
\]
where the dotted arrow represents the unique mediating map  $\theta$ to the pullback. As a direct corollary of Lemma~\ref{cor:one-wayiso}
we get:

\begin{corollary}\label{cor:one-wayiso}
The map  $\theta$ %in the diagram above  
is an isomorphism. 
\end{corollary}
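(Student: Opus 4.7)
The plan is to recognise the diagram in the one-sided case as an instance of the pullback diagram in Lemma~\ref{lem:two-wayiso} and inherit the conclusion from there. The labelling ``direct corollary'' is justified because, under the one-sidedness hypothesis on $G$, all the objects and maps in the corollary's diagram match those of the lemma up to canonical identification.

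First I would carry out the identification of objects. Since every $V$-move of $G$ is a Player move, the $V_1$-moves in $G^\perp$ are all Opponent and the $V_2$-moves in $G$ are all Player. Hence the Opponent variable-moves of $G^\perp\vvbar G$ are exactly $V_1$, all instantiated in $\A$; there are no Opponent $V$-moves to instantiate in $\B$. Consequently the partial expansion $D(\A,\B)$ of Section~\ref{sec:partlexpn} coincides with the partial expansion $D(\A)$ of the present section, and $G^\perp(\A)\vvbar G(\B)$ simplifies to $\expn(G^\perp,\A)\vvbar G$, precisely the bottom-middle object in the corollary's diagram. The maps $\pi_1$, $\pi_2$ and $\epsilon$ specialise accordingly.

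Next I would check that the left-hand factorisation also specialises correctly. The map $\expn(G,\B)\to G$ hides only $V$-moves of $G$, all of which are Player in the one-sided case; so the two-sided factorisation
\[\expn(G^\perp,\A)\vvbar\expn(G,\B)\to G^\perp(\A)\vvbar G(\B)\to G^\perp\vvbar G\]
becomes
\[\expn(G^\perp,\A)\vvbar\expn(G,\B)\to\expn(G^\perp,\A)\vvbar G\to G^\perp\vvbar G,\]
matching the diagram of the corollary. Under these identifications the universal property producing $\theta$ in the corollary is exactly the one producing $\theta$ in Lemma~\ref{lem:two-wayiso}, so the two mediating maps agree.

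Having matched the diagrams, applying Lemma~\ref{lem:two-wayiso} yields that $\theta$ is an isomorphism, which is the corollary. I do not anticipate any real obstacle: the substantive content (rigidity, injectivity and surjectivity of $\theta$ on configurations) was established in the two-sided lemma; all that remains is a bookkeeping check that the one-sided setup is a literal specialisation of the two-sided one.
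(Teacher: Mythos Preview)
Your proposal is correct and matches the paper's own approach: the paper presents the result as a direct corollary of Lemma~\ref{lem:two-wayiso} (the reference in the text contains a typo), obtained by observing that in the one-sided case the partial expansion $D(\A,\B)$ reduces to $D(\A)$ and $G(\B)$ to $G$, so the two diagrams coincide. Your bookkeeping identification of objects and maps is exactly the specialisation the paper has in mind.
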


Directly from $\theta$ being an isomorphism%of Lemma \ref{cor:one-wayiso}
, we can reformulate the one-sided strategies of $\SD_\delta$.

\begin{theorem}
Strategies $(\sig_0,\rho_0)\in\SD_\delta(\A, \B)$ bijectively correspond 
%are isomorphic 
to
strategies $(\delta(\A),\rho(\A) \cup h%\rho
)%:\DA \to G^\perp\vvbar G
$,  where %$\rho =\rho(\A) + h$,
%with
$h: \mod{\DA}_{V_2} \to \mod{\B}$ is a sort-respecting function, $\rho(\A) \cup h$ acts as $\rho(\A)$ on $V_1$-moves of $\DA$ and as $h$ on $V_2$-moves, and
%$\rho_A = : \mod{\DA}_{V_A} \to \mod{\A}$
for all +-maximal $x\in\iconf\DA$,
$x\models_{\delta(\A), \rho(\A) \cup h%\rho
} W_1\to W_2.$  

(As earlier, $W_1$ and $W_2$ are copies of the winning condition $W$ of $G$ with renamings by constants and variables of respectively the left and right of $(\Sigma+\Sigma, C+C, V+V)$.)
\end{theorem}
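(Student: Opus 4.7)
The plan is to derive the bijection directly from the isomorphism $\theta$ of Corollary~\ref{cor:one-wayiso}, essentially mirroring the structure of the two-sided result (Theorem~\ref{thm:two-sided}) but exploiting that in the one-sided setting only one of the two relational structures ($\A$) needs to be instantiated on the Opponent side. First I would unfold what data a strategy $(\sig_0,\rho_0)\in\SD_\delta(\A,\B)$ carries: a map $\sig:S\to G^\perp\vvbar G$ factoring openly through $\delta$, together with an instantiation $\rho_0$ on $S$ that splits, according to parallel component, into a part over $V_1$ valued in $\A$ and a part over $V_2$ valued in $\B$. The displayed commuting diagram then factors $\sig_0$ through the $\A$-expansion of $G^\perp$ followed by $G$, yielding the mediating map $\theta:S\to\DA$ to the pullback.

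Next I would use Corollary~\ref{cor:one-wayiso} to identify $S$ with $\DA$ via $\theta$, and transport $\rho_0$ along this isomorphism. Because the pullback $\DA$ already incorporates the Opponent-expansion w.r.t.~$\A$, the $V_1$-component of the transported $\rho_0$ is \emph{forced} by universality (Lemma~\ref{lem:universalinst}) to equal $\rho(\A)$; no choice is involved there. The remaining freedom is precisely a sort-respecting function $h:\mod{\DA}_{V_2}\to\mod\B$, assigning $\B$-values to Player moves. Conversely, given any such $h$, the data $(\delta(\A),\rho(\A)\cup h)$ determines an instantiation of $G^\perp\vvbar G$ in $\A+\B$ whose associated map into $\expn(G^\perp,\A)\vvbar\expn(G,\B)$ composed with $\pi_1:\DA\to D$ is open (as $\pi_1$ is open, being a partial expansion), hence factors openly through $\delta$; so it defines a deterministic strategy in $\SD_\delta(\A,\B)$.

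For the winning-condition clause, I would invoke Proposition~\ref{prop:truthInvariance}: since $\theta$ is a map (in fact iso) of instantiations from $(\sig,\rho_0)$ to $(\sig(\A,\B),\rho(\A)\cup h)$ (with $\sig(\A,\B)$ specialised to $\delta(\A)$ in the one-sided case), for every configuration $x$ of $S$ we have $x\models_{\sig,\rho_0}\phi$ iff $\theta x\models_{\delta(\A),\rho(\A)\cup h}\phi$; and $\theta$ carries $+$-maximal configurations to $+$-maximal configurations (as it is an iso of event structures with polarity). Applying this to $\phi\equiv W_1\to W_2$ shows that $(\sig_0,\rho_0)$ is winning in $(G^\perp\vvbar G,\A+\B)$ precisely when the displayed condition on $h$ holds.

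The only mildly delicate step is verifying that the transported $\rho_0$ really agrees with $\rho(\A)$ on $V_1$-moves, rather than just agreeing up to some renaming; this follows because $\theta$ is a map of instantiations into the pullback $\DA$, and the pullback projection $\pi_2$ composed with the canonical map to $\expn(G^\perp,\A)$ commutes with the instantiation data by construction, pinning the $V_1$-values uniquely to $\rho(\A)$. Once that pinning is in place, the bijection reads off, and the winning-condition equivalence is immediate from Proposition~\ref{prop:truthInvariance}.
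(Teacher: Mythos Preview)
Your proposal is correct and follows essentially the same approach as the paper: the paper derives this theorem immediately from $\theta$ being an isomorphism (Corollary~\ref{cor:one-wayiso}), and your unpacking of that---transporting the instantiation data along $\theta$, noting that the $V_1$-part is pinned to $\rho(\A)$ while the $V_2$-part is the free datum $h$, and invoking Proposition~\ref{prop:truthInvariance} for the winning condition---is exactly the intended elaboration. One phrasing is slightly garbled: in the converse direction you write ``whose associated map into $\expn(G^\perp,\A)\vvbar\expn(G,\B)$ composed with $\pi_1$'', but what you need is simply that $\delta(\A)=\delta\circ\pi_1$ with $\pi_1$ open, giving the open factorisation through $\delta$; once you also note that $\DA$ is deterministic (Lemma~\ref{lem:DApb}) and that the induced $\sig_0$ is a concurrent strategy (inherited from $\pi_2$ being a strategy), the converse is complete.
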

The theorem makes precise how one-sided strategies of $\SD_\delta$ are determined by the extra part of the instantiation to $V_2$-moves, given by $h$ above.
%As a consequence we can %use the notation of Proposition~\ref{prop:envnotnforexpnconfigs} to 
%describe configurations of (partial) expansions.  THIS DOESN'T SEEM TO BELONG HERE

%Let $x\in\iconf G$ and $\rho: x_V \to \mod{\A}$. Taking $x[\rho] \eqdef %\set{(g,\rho\rstd[g]_G)}{g\in x}$ defines $x[\rho]\in\iconf{\expn(G,\A)}$.
%
%Conversely, every %configuration 
%$y\in\iconf{\expn(G,\A)}$ has the form $y=x[\rho] $, where 
%$x = %\red y =
%\set{g}{\exists \ga.\, (g,\ga)\in y}$ and %for $g\in x$, we take 
%$\rho(g) = \ga(g)$ for the unique $\ga$ with $(g,\ga)\in y$.  
Strategies $\SD_\delta(\A, \B)$ bijectively correspond to sort-respecting  functions $h: \mod{\DA}_{V_2} \to \mod\B$  such that for each $+$-maximal configuration $x\in\iconf\DA$ we have:
\begin{equation}\label{hchar}
 x_{G^\perp}\models W \text{ implies } (x[h])_G \models W.
\end{equation}
Above,  $x[h]$ is the configuration of $\expn(\DA, \B)$ in which 
 the $V_2$-moves of $x$ are instantiated to the elements of $\B$ prescribed by $h$. We also use the fact that $x$  projects to  a configuration  $x_{G^\perp}$ of %\in\iconf
 ${\expn(G^\perp,\A)}$ and $x[h]$ projects to a configuration $(x[h])_G$ of  %\in\iconf
 ${\expn(G,\B)}$.
%
%*** $x_G$ means projn of a config**** LATER WE USE IT TO MEAN IMAGE IN EXPN - EQUIV BUT NOT QUITE THE SAME****

We call a function $h$ satisfying (\ref{hchar}) a {\em $\delta$-homomorphism} from $\A$ to $\B$. %There are several equivalent ways to describe 
%$\delta$-homomorphisms.
%Such a presentation facilitates the composition ****
%*****lemma: $x[\rho_A][\rho_B] = x[\rho_B \rho_A^\dagger]$ **CAREFUL: TYPE of $\rho_A$***
We turn this characterisation into a direct presentation of %the category 
$\SD_\delta$ %w.r.t.~a deterministic idempotent comonad on a one-sided game $G$ 
as relational structures with %a category of 
$\delta$-homomorphisms. 
Recall from Section~\ref{sec:SDconstrn} that the candidates for identity strategies, \viz~$(\iota_\A, \rho_\A):\A\profto_\delta \A$, only become so when they are winning.  %Accordingly, s
We say that {\em $\SD_\delta$ has identities} when the strategies $(\iota_\A, \rho_\A)$ are winning for each %$\Sigma$-
relational structures $\A$. 

\begin{toappendix}
    \subsection{One-sided characterisation}
\end{toappendix}

\begin{theoremrep}[one-sided characterisation] Let $\A$, $\B$ and $\C$ be $\Sigma$-structures.  %Then,
\begin{enumerate}
\item\label{sdcon1} Assuming  $\SD_\delta$ has identities, 
the function $j_\A: \mod\DA_{V_2}\to\A$ such that  $j_\A(e) = \rho(\A)(\bar e)$ is a $\delta$-homomorphism.
\item\label{sdcon2} Suppose $h$ is a $\delta$-homomorphism from $\A$ to $\B$, and  $k$ is a $\delta$-homomorphism from $\B$ to $\C$. Then, 
the composition $k\circ h^\dagger:  \mod{\DA}_{V_2} \to  \mod\C$ is a $\delta$-homomorphism from $\A$ to $\C$.
\end{enumerate}
Further, $\Sigma$-structures and $\delta$-homomorphisms form a category with identities given by (\ref{sdcon1}) and composition given by (\ref{sdcon2}).
Provided  $\SD_\delta$ has identities,  the category $\SD_\delta$ is isomorphic to the category of $\Sigma$-structures with $\delta$-homomorphisms. 
\end{theoremrep}
\begin{proof}
(i) Assume  $\SD_\delta$ has identities, \ie~the strategy $(i_\A%:I_\A\to G^\perp\vvbar G
, \rho_\A%:\mod{I_\A}_{V_1+V_2}\to \mod\A
)$ of Section~\ref{sec:SDconstrn} is winning so an identity strategy in $\SD_\delta$. From Lemma~\ref{cor:one-wayiso}, the identity strategy $(\iota_\A, \rho_\A)$ is isomorphic to the strategy $(\iota_\A(\A), %:\DA \to G^\perp\vvbar G$ paired with the instantiation $
\rho(\A)+j_\A)$.  
  Because $(\iota_\A, \rho_\A)$ is winning, so is  $(\iota_\A(\A), %:\DA \to G^\perp\vvbar G$ paired with the instantiation $
\rho(\A)+j_\A)$.  Thus if $x$ is a +-maximal configuration of $\DA$ such that $x_{G^\perp} \models W_G$, then %$(x[j_\A])$ +-maximal config of $I_\A$****
$(x[j_\A])_{G} \models W_G$, as required for $j_\A$ to be a $\delta$-homomorphism.

\noindent
(ii)
From (i), when $\SD_\delta$ has identities $(\iota_\A, \rho_\A)$ they correspond with the $\delta$-homomorphisms $j_\A$. 
Let $\A$, $\B$ and $\C$ be $\Sigma$-structures. Let $h$ be a $\delta$-homomorphism from $\A$ to $\B$, and  $k$ a $\delta$-homomorphism from $\B$ to $\C$; they correspond to strategies $\A\profto_\delta \B$ and $\B\profto_\delta \C$ . 
%Their composition from $\A$ to $\C$ is given by $k\circ h^\dagger:  \mod{\DA}_{V_2} \to  \mod\C$.   To obtain a category w
We need to check the composition $k\circ h^\dagger:  \mod{\DA}_{V_2} \to  \mod\C$ is identical to the $\delta$-homomorphism, say $l$,  corresponding to the composition $\A\profto_\delta \C$ of the two strategies.  

The $\delta$-homomorphism $h$ from $\A$ to $\B$ corresponds to a concurrent strategy
$$
\sig_0: \DA \to \expn(G^\perp, \A) \vvbar \expn(G, \B)\,
$$
while $k$ from $\B$ to $\C$ 
corresponds to a concurrent strategy
$$
\sig_0: \DB \to \expn(G^\perp, \B) \vvbar \expn(G, \C).
$$
It is convenient to identify the configurations of $\DA$ with their images in $\expn(\DA, \B)$ given by the instantiation $h$ : 
a configuration then has the 
form $(y[\rho_A])[h]$ for some $y\in\iconf D$ and sort-respecting $\rho_A:\mod D_{V_1} \to \mod\A$.  Similarly, a configuration of the concurrent strategy for $k$ has the 
form $(z[\rho_B])[k]$ for some $z\in\iconf D$ and sort-respecting $\rho_B:\mod D_{V_1} \to \mod\B$.  

Consider an arbitrary $(e_0,\ga_0)\in \mod\DA_{V_2}$.  There exist $x\in\iconf D$ and $\rho_A: \mod D_{V_1} \to \mod\A$ such that 
 $(e_0,\ga_0)\in x[\rho_A]$; then $e_0\in x$ and $\ga_0= \rho_A(e') $ for all  $V_1$-moves $e'\leq_D e_0$.  
 
  From the idempotency of $\delta$ there are $y, z\in\iconf D$ for which $d x = z\scirc y$.  
 Consider a specific configuration of the interaction of the two strategies, \viz
 $$((z[\rho_B])[k])\sncirc (y[\rho_A])[h]),$$
 where $(y[\rho_A])[h]\in \expn(\DA, \B)$ and $(z[\rho_B])[k]\in  \expn(\DB, \C)$;
 in order for the interaction to be defined, for all $V_1$-moves $e_1\in z$, we must have
$$\rho_B(e_1) = h((\bar e_1, \ga_1))$$
 with 
 $
 \ga_1(e') = \rho_A(e')   \hbox{ for all $V_1$-moves } e'\leq_D \bar e_1.
 $
  
 Consider a $V_2$-move $(e,\ga)\in z[\rho_B]$.  Then,
  $$
 \ga(e_1) = \rho_B(e_1)  \hbox{ for all $V_1$-moves } e_1\leq_D e, \ \ie
 $$
$$
 \ga(e_1) = h(\bar e_1, \ga_1) \hbox{ for all $V_1$-moves } e_1\leq_D e,\hbox{ \quad } 
 $$
 where $\ga_1$ is defined above.  In other words, recalling the definition of $h^\dagger$, 
 $$(e,\ga) = h^\dagger((e,\ga')),$$
 where $
 \ga'(e') = \rho_A(e')   \hbox{ for all } e'\leq_D e \hbox{ with $e'$ a $V_1$-move.} 
 $
 Consequently, the $V_2$-move $(e,\ga)\in z[\rho_B]$ is instantiated to $k(h^\dagger((e,\ga')))$  in $(z[\rho_B])[k]$.  
Hence $k\circ h^\dagger((e, \ga')) = l((e, \ga'))$.  

In particular, there is a %unique 
$V_2$-move $(e,\ga)\in z[\rho_B]$ for which   
 $(e, \ga) = h^\dagger((e,\ga_0)))$, ensuring $k\circ h^\dagger((e, \ga_0)) = l((e, \ga_0))$ for an arbitrary $(e_0,\ga_0)\in \mod\DA_{V_2}$. Thus $l$, the $\delta$-homomorphism corresponding to the composition of strategies, and $k\circ h^\dagger$ are the same functions from  $\mod\DA_{V_2}$ to $\mod\C$.
 \end{proof}

The coKleisli categories of the game comonads of Section~\ref{sec:comonads}, coincide with %their specific 
special Spoiler-Duplicator categories $\SD_\delta$: 

\begin{theoremrep}
Assume $G$ has homomorphic winning condition and $\delta$ is balanced. Then $\SD_\delta$ has identities. W.r.t.~the comonad $\Rel_\delta(\_)$, any coKleisli map %$h:\Rel_\delta(\A) \to \B$ 
is a $\delta$-homomorphism% $h: \mod{D(\A)}_{V_2} \to \mod \B$
; the coKleisli maps are precisely those $\delta$-homomorphisms which are also $\Sigma$-homomorphisms. Consequently, 
$\SD_\delta$ is the coKleisli category of the comonad $\Rel_\delta(\_)$ iff every $\delta$-homomorphism is a $\Sigma$-homomorphism.
\end{theoremrep}
\begin{proof}
Recall the comonad $\delta:D\to G^\perp\vvbar G$ %with counit $c$, 
and that $G$ has the homomorphic winning condition $W$.  

Let $\B$ be a $\Sigma$-structure.  Via Corollary~\ref{cor:one-wayiso}, its putative identity in $\SD_\delta$ is given by the instantiation with map $\delta(\B):\D(\B)\to G^\perp\vvbar G$ and the function
$\rho_\B: \mod{\D(\B)}_{V_2} \to \mod\B$, taking $(e,\ga)$ to $\ga(\bar e)$. 
Let $x$ be a +-maximal configuration of $\D(\B)$ with $x_{G^\perp}\models W$.  Then, using that $\delta$ is balanced, $(x[\rho_\B])_G \models W$. 
%Because both $\delta$ and $\cc_{G,\mathcal B}$ factor through $\cc_G$  the pullback diagram in the construction of the putative identity at $\B$, in Section~\ref{sec:SDconstrn},  yields the pullback
%\[
%\vcenter{
%\xymatrix{
%I_B\pb{-45}\ar[r]^{\pi_2}\ar[d]_{\pi_1}& D\ar[d]^c\\
%\CC_{G,B} \ar[r]^{\epsilon}& \CC_G\,.  }}
%\]
%As the expansion map $\epsilon$ is open, so is $\pi_2$.  Let $x$ be a +-maximal configuration of $I_B$ with $x_{G^\perp}\models W$.  
%Because $\pi_2$ is open, its image $\pi_2 x$ is +-maximal in $D$.  Using that $\delta$ is balanced, as $x_{G^\perp}\models W$, we see that $\pi_1 x = x_{G^\perp}\vvbar x_{G^\perp}$.
%for some configuration $y$ of $\expn(G, \B)$.  
It follows that %$(i_\B, \rho_\B)$ 
the putative identity is winning, and is indeed the identity at $\B$ in $\SD_\delta$.  %****IMPROVE ABOVE PARA IN LIGHT OF REVISED BALANCED****

We show that coKleisli maps are $\delta$-homomorphisms, from which the remaining claims follow. 
Let $h:\Rel_\delta(\A) \to \B$ be a coKleisli map, so   a function $h:\mod{D(\A)}_{V_2} \to \mod B$ preserving the relational structure. Let $x$ be a +-maximal configuration of $D(\A)$ such that $x_{G^\perp} \models W$.  
The map 
$\pi_2: D(\A) \to \expn(G^\perp,\A)\vvbar G$ sends $x$ to $x_{G^\perp}\vvbar x_G$ where $x_{G^\perp}\in\iconf{\expn(G^\perp,\A)}$ and $x_G\in\iconf G$. As $\delta$ is balanced, $x_{G^\perp}$ has $x_G$ as image in $G^\perp$. 

%CHECK BELOW***
There is a natural instantiation of $G$ in $\Rel_\delta(\A)$ determined by $x_G$: the map of the instantiation is the inclusion $j:x_G\hookrightarrow G$, regarding $x_G$ as an event structure got as the restriction of $G$; the function of the instantiation is $\rho:  \mod{x_G}_V\to \mod{\Rel_\delta(\A)}$ given by the composite $ \mod{x_G}_V\iso \mod x_{V_2} \hookrightarrow \mod{\Rel_\delta(\A)}$.
Structural induction shows that for any assertion $\phi$ of the free logic for $G$,
$$
x_{G^\perp}\models \phi \  \hbox{ iff }\  x_{G}\models_{j,\rho} \phi
$$
---note that on the left assertions are interpreted w.r.t.~$\A$, and on the right w.r.t.~$\Rel_\delta(\A)$.
Now,
$$
\eqalign{
x_{G^\perp}\models W &\  \iff \  x_{G}\models_{j,\rho} W\,, \hbox{ by the above,}\cr
&\implies x_{G}\models_{j,h\circ \rho} W\,,\hbox{ as $W$ is homomorphic,}\cr
&\iff (x[h])_{G}\models W\,,}
$$
as required for $h$ to be a $\delta$-homomorphism.
\end{proof}

The role of coKleisli maps in~\cite{AbramskyDawarWang} is replaced by the more general notion of $\delta$-homomorphisms. Categories $\SD_\delta$ are strictly more general than coKleisli categories of game comonads: $G$ need not have homomorphic winning condition, \eg~if its winning condition involves negation; even when $G$ has homomorphic winning condition, a $\delta$-homomorphism need not be a $\Sigma$-homomorphism.
In replacing the role of a map in a coKleisli category of a game comonad, it is now the presence of a $\delta$-homomorphism which is equivalent to truth preservation in a logic characterised by its Spoiler-Duplicator game; for instance, $k$-variable logic in the case of the $k$-pebble game \cite{AbramskyDawarWang} and its restricted conjunction fragment in the case of the all-in-one game \cite{pebblerelationgame} ---\cf~Section~\ref{sec:finrem}.

\section{Two-sided games}
%\subsection{Characterisations}% of two-sided strategies
Theorem~\ref{thm:two-sided} expresses how a two-sided strategy in a Spoiler-Duplicator game reduces to a pair of functions. It entails% the following
:

\begin{corollary}\label{cor:twosided}
Strategies $\SD_\delta(\A, \B)$ bijectively correspond to pairs  of sort-respecting functions
%$k: \mod{\DAB}_{V_1^+} \to \mod\A$  and $h: \mod\DAB_{V_2^+} \to \mod\B$ 
$$
k: \mod\DAB_{V_1^+} \to \mod\A
\hbox{ and }
 h: \mod\DAB_{V_2^+} \to \mod\B\,
 $$
 such that for all +-maximal $x\in\iconf\DAB$,
 $$
(x[k])_{G^\perp}\models W \implies (x[h])_G \models W.  
$$
%where we use *****
%Corollary~\ref{cor:truthinprojns} to simplify, and consider truth separately over the two components $G^\perp$ and $G$.
\end{corollary}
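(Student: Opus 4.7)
The plan is to read Corollary~\ref{cor:twosided} directly off Theorem~\ref{thm:two-sided}. That theorem already gives the bijection between strategies $(\sig_0,\rho_0)\in\SD_\delta(\A,\B)$ and pairs $(k,h)$ of sort-respecting functions, subject to a single satisfaction clause
\[
x\models_{\sig(\A,\B),\,(\rho(\A)\cup k)\cup(\rho(\B)\cup h)} W_1\to W_2
\]
for every +-maximal $x\in\iconf\DAB$. What remains is to rewrite this clause as the implication $(x[k])_{G^\perp}\models W \implies (x[h])_G\models W$ appearing in the corollary.

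First, since $W_1\to W_2$ is a classical implication of free logic, its satisfaction at $x$ is equivalent to: if $x$ satisfies $W_1$ then $x$ satisfies $W_2$. Next, I would observe that $W_1$ is obtained from $W$ by renaming constants and variables into the left-hand copy $G^\perp$, so only the left-component moves of $x$ and their latest-assignment values are relevant to its truth. The restriction of the instantiation $(\rho(\A)\cup k)\cup(\rho(\B)\cup h)$ to those moves is exactly $\rho(\A)\cup k$, which is the value assignment that determines the universality image $(x[k])_{G^\perp}\in\iconf{\expn(G^\perp,\A)}$. Invoking Proposition~\ref{prop:last} (preservation of $\last$ across maps of instantiations) together with Lemma~\ref{lem:universalinst} (which factors this restricted instantiation through $(\red,\Inst)$), Proposition~\ref{prop:truthInvariance} yields the equivalence
\[
x\models_{\sig(\A,\B),\,(\rho(\A)\cup k)\cup(\rho(\B)\cup h)} W_1 \iff (x[k])_{G^\perp}\models W.
\]
The same argument applied symmetrically on the right-hand copy delivers
\[
x\models_{\sig(\A,\B),\,(\rho(\A)\cup k)\cup(\rho(\B)\cup h)} W_2 \iff (x[h])_G\models W.
\]

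Combining these two equivalences with the decomposition of implication in the first step gives exactly the clause in the statement of Corollary~\ref{cor:twosided}. The main obstacle is the translation step: one must check that restricting the instantiation of $\DAB$ to a single side matches, through the universal property of the partial expansion, the configuration $(x[k])_{G^\perp}$ (resp.\ $(x[h])_G$) whose semantics is written in the statement. This is nothing more than unpacking the earlier convention $x[\rho]$ introduced after Lemma~\ref{lem:universalinst}, but it is the sole nontrivial piece of bookkeeping; everything else is a direct consequence of Theorem~\ref{thm:two-sided} and the truth-invariance lemma.
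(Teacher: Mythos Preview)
Your proposal is correct and matches the paper's approach: the paper presents this corollary as an immediate consequence of Theorem~\ref{thm:two-sided} (introduced by ``It entails:''), and the unpacking you describe---decomposing $W_1\to W_2$ and using Proposition~\ref{prop:truthInvariance} via Lemma~\ref{lem:universalinst} to rewrite each side as $(x[k])_{G^\perp}\models W$ and $(x[h])_G\models W$---is exactly the bookkeeping the paper leaves implicit. Your account is more explicit than the paper's, but the route is the same.
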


Does the representation in terms of functions enable us to 
express the composition of two-sided Spoiler-Duplicator strategies more simply, as happened in the one-sided case? In general, there is a description of composition in the intricate manner of Geometry of Interaction~\cite{lics23}.  There is however a simpler description when the comonad $\delta$ is copycat.
%\subsection{Two-sided %copycat 
%strategies as spans}
When $\delta= \cc_G$, Spoiler-Duplicator strategies %, now in $\SD_{\cc_G}$, 
correspond to {\em $G$-spans} of event structures, and their composition is %represented by 
simplified to the standard composition of spans via pullbacks.

Throughout this section, we adopt the notation $G^+$ for the event structure obtained as 
the projection of an %$(\Sigma, C, V)$-game 
esp $G$ to its Player moves.
The conversion of strategies to spans 
hinges on the map
$$\Sq: \CC_G^+ \to G,$$
given by taking $\Sq (1,g) = \bar g$ and 
$\Sq (2,g) = g$. It is an  isomorphism of event structures which folds 
 the Player moves on the left of $\CC_G$ to their corresponding Opponent moves on the right. Precomposing $F$ with 
 the projection $\CC_G(\A,\B)^+\to \CC_G^+$ associated with the partial expansion, we define the map of event structures 
$$\sq_{\A,\B}: \CC_G(\A,\B)^+ \to G.$$

%Let $(\sig,\rho)\in\SD_{\!\!\cc_G}(\A,\B)$ be a strategy between $\Sigma$-algebras $\A$ and $\B$.  
%In Corollary~\ref{cor:twosided}, a strategy $(\sig, \rho)$ corresponds to a pair of sort-respecting functions $k%:\mod{\CC_G(\A,\B)_{V_1^+}}\to \mod\A$ and $h:\mod{D(\A,\B)_{V_2^+}}\to \mod\B$. 
We construct the $G$-span of event structures corresponding to a strategy 
in $\SD_{\cc_G}(\A, \B)$ given by %a pair of 
functions 
$h$ and $k$.   
Its vertex, $\CC_G(\A,\B)^+$  is the  projection of the partial expansion $\CC_G(\A,\B)$ to its Player moves.  Its two maps are got by extending $k$ and $h$ of Corollary~\ref{cor:twosided} to maps of event structures.  First 
define functions
$$
K: \mod{\CC_G(\A,\B)^+}_{V+V} \to \mod\A \hbox{ and } 
H: \mod{\CC_G(\A,\B)^+}_{V+V} \to \mod\B,
$$
both from the variable-labelled Player moves of $\CC_G(\A,\B)$%the partial expansion
,
by
$$
K(s) = 
\begin{cases} k(s) & \hbox{ if } s\in \mod{\CC_G(\A,\B)}_{V_1^+},\\
 \rho(\A) (\bar s)  & \hbox{ if }  s\in \mod{\CC_G(\A,\B)}_{V_2^+}
\end{cases}
$$
and
$$
H(s) = 
\begin{cases} h(s) & \hbox{ if } s\in \mod{\CC_G(\A,\B)}_{V_2^+},\\
 \rho(\B) (\bar s)  & \hbox{ if }  s\in \mod{\CC_G(\A,\B)}_{V_1^+}.
\end{cases}
$$
Every $s\in \mod{\CC_G(\A,\B)^+}_{V+V}$ is now associated with a pair of values 
$(K(s), H(s))$ in the product of $\Sigma$-structures $\A\times \B$.  

From the map $\sq_{\A,\B}$ and the functions $K$ and $H$, we obtain an instantiation
$(\sq_{\A,\B}, K)$ of $G$ in $\A$, and an instantiation $(\sq_{\A,\B}, H)$  of  $G$ in $\B$. By  the universality of the expansions of $G$ w.r.t.~$\A$ and $\B$ ---Lemma~\ref{lem:universalinst},
we get maps of event structures 
$$
\xymatrix@R=10pt@C=5pt{
&\ar[dl]_{l_\A} \CC_G(\A,\B)^+\ar[dr]^{r_\B}&\\
\expn(G, \A)&& \expn(G,\B)
}
$$
---the $G$-span corresponding to the original strategy in $\SD_{\cc_G}(\A,\B)$. %---it is easy to check that each step of the construction is uniquely reversible.  
From the original strategy %$(\sig, \rho)$ 
being winning we derive: % that
 if $l_\A x \models W$ then $r_\B x \models W$, for any $x\in \iconf{\CC_G(\A,\B)^+}$.  
\begin{toappendix}
We explain in more detail how $G$-spans correspond to two-sided strategies in $\SD_{\cc_G}$. By Corollary~\ref{cor:twosided} such a strategy corresponds to a pair of functions $h$ and $k$ from which we construct a span
$$
\xymatrix@R=10pt@C=5pt{
&\ar[dl]_{l_\A} \CC_G(\A,\B)^+\ar[dr]^{r_\B}&\\
\expn(G, \A)&& \expn(G,\B)\,.
}
$$
The span is ``winning'' in the sense that  $l_\A x \models W$ implies $r_\B x \models W$, for any $x\in \iconf{\CC_G(\A,\B)^+}$.  
By the universality of expansions involved in their definitions,
the maps $l_\A$ and $r_\B$ are unique such that
$$
\epsilon_{G,\A}\circ l_\A=\sq_{\A,\B}
\quad \hbox{ and }\quad 
\rho_{G,\A}\circ l_\A(s) =\rho(\A)(\bar s)\,, \hbox{ for } s\in\mod{\CC_G(\A,\B)}_{V_2^+} \eqno(a)
$$
and
$$
\epsilon_{G,\B}\circ r_\B=\sq_{\A,\B}
\quad \hbox{ and }\quad 
\rho_{G,\B}\circ r_\B(s) =\rho(\B)(\bar s)\,, \hbox{ for } s\in\mod{\CC_G(\A,\B)}_{V_1^+}\,. \eqno(b)
$$
The original functions $h$ and $k$ are recovered as
$$
k(s) =\rho_{G,\A}\circ l_\A (s)\,, \hbox{ for } s\in \mod{\CC_G(\A,\B)}_{V_1^+}\,,
$$
and
$$
h(s) =\rho_{G,\B}\circ r_\B (s)\,, \hbox{ for } s\in \mod{\CC_G(\A,\B)}_{V_2^+}\,.
$$
Thus winning $G$-spans $l_\A$, $r_\B$, satisfying ($a$) and ($b$), 
bijectively correspond to %pairs of functions $h$ and $k$ determining 
two-sided strategies in $\SD_{\cc_G}$.  
\end{toappendix}

Two $G$-spans compose via pullback to form a $G$-span%  in the manner shown
:
$$
\xymatrix@R=10pt@C=0pt{&&\ar@{..>}[dl]\CC_G(\A,\C)^+\ar@{..>}[dr]\pb{270}&&\\
&\ar[dl]_{l_\A} \CC_G(\A,\B)^+\ar[dr]^{r_\B}& &\ar[dl]_{l_\B} \CC_G(\B,\C)^+\ar[dr]^{r_\C}&\\
\expn(G, \A)&& \expn(G,\B)&& \expn(G,\C)
}
$$ 

\begin{theoremrep}
%The category 
$\SD_{\cc_G}$ is isomorphic to the category of $G$-spans.
\end{theoremrep}
\begin{proof} 
(Outline) 
A strategy in $\SD_{\CC_G}(\A,\B)$ takes the form of an instantiation $(\sig(\A,\B), \rho_{\A,\B})$ where recall $\sig(\A,\B):\CC_G(\A,\B)\to G^\perp\vvbar G$ and $\rho_{\A,\B}:\mod{\CC_G(\A,\B)^+}_{V+V}\to \mod{\A+
\B}$.  Similarly, a strategy in $\SD_{\CC_G}(\B,\C)$ is an instantiation $(\sig(\B,\C), \rho_{\B,\C})$.   
From Lemma~\ref{lem:two-wayiso}, in their composition,  
$$ \CC_G(\A, \C) \iso  \CC_G(\B,\C)\scirc \CC_G(\A, \B)\,.$$

 %By considering Player moves in the composition of strategies, i
 Furthermore,  
$$(\CC_G(\B,\C)\scirc \CC_G(\A, \B))^+\iso 
\CC_G(\A, \B)^+ \wedge \CC_G(\B,\C)^+\,,
$$
the associated pullback in the composition of spans.  To see this, consider a configuration $w$ of the lhs. Its downclosure $[w]$ in the interaction $\CC_G(\B,\C)\sncirc \CC_G(\A, \B)$,  
of the form $y\sncirc x$, is such that for both $x$ and $y$ their images in  $G^\perp\vvbar G$,
$$
\sig(\A,\B)\, x = z\vvbar z \ \hbox{ and }\ \sig(\B,\C)\, y  = z\vvbar z\,,
$$
for some $z\in\iconf G$ ---this is because $y\sncirc x$ is the downclosure of +-moves;
with the instantiations $\rho_{\A,\B}$ and $\rho_{\B,\C}$ agreeing over $\B$.
From this, % it follows that 
the arbitrary configuration $w$ of the lhs bijectively corresponds to a configuration $x^+\wedge y^+$ of the pullback on the rhs.

This shows the agreement between the composition of spans and that of the strategies with which they correspond.   
\end{proof}

%***ADD FOLLOWING - IT ANSWERS A QU OF REVIEW 2 ? OR ONLY IN THE ARCHIVED VERSION???***
We were looking for a characterisation, so correspondence, with (deterministic) delta-strategies, and could only get that via spans when delta is copycat. Bisimulations describe a form of nondeterministic strategy---they do not fix Player to a single answering move---and this is why there is only an equivalence, and not a correspondence, between bisimulations and deterministic strategies in~\cite{relatingstructure}; and why bisimulations do not figure more centrally in our account.

\section{Final remarks}\label{sec:finrem}
We have introduced a general method for generating comonads from one-sided Spoiler-Duplicator games, provided a characterisation of strategies for one-sided games in general, delineated when they reduce to coKleisli maps, and captured an important subclass of two-sided strategies as spans. 

As a %n important 
next step, we will focus on generalising the framework to allow moves assigning  subsets, rather than individual elements of the relational structure---extending the free logic to allow second order variables and quantifiers in winning conditions.
%In doing so we will extend our framework 
This %was is needed to capture
will bring \emph{bijective games} within the framework % whose winning strategies characterise isomorphisms in a coKleisli category of a game comonad 
\cite{AbramskyDawarWang,relatingstructure}. 
%solving two important including bijective games. 
Their interest %in bijective games  are especially interesting; this
is illustrated through the bijective $k$-pebble game which corresponds to indistinguishability by the $k$-dimensional Weisfeiler-Lehman algorithm~\cite{immerman}.
Extending moves to assign subsets is  
a crucial avenue for future work, making other games accessible to structural, categorical methods. %and the categorification of 
%to other games %which, for several reasons, 
%have yet to be captured by %existing 
%categorical methods. % for Spoiler-Duplicator games.
Notable %examples 
challenges are games characterising (fragments of) Monadic Second Order Logic and $\mu$-calculus. We are optimistic because of the
%This should be possible through the %flexible 
adaptable nature of the games here. % when extended to subsets. 
% and the ability to describe them concurrently, thus avoiding obstacles that arise from games that are not sequential in nature.

%*** Maybe a short paragraph about combinatorial parameters and logic via delta assertions? it is already mentioned in the text so I think that it will take some of the spotlight of the more important directions. Also what do you think about the last paragraph? ***
A comonad $\delta$ in signature games specifies Spoiler-Duplicator strategies $\SD_\delta$.
 Imagine a logic extending the free logic. Within it a {\em $\delta$-assertion} is an assertion  preserved by all strategies in $\SD_\delta$; it is preserved by a strategy  in the same manner as   the winning condition.
 A task is that of  finding a structural characterisation of the $\delta$-assertions  corresponding to   $\delta$, and {\it vice versa}; how tuning the rich array of combinatorial parameters encapsulated in $\delta$ correlates with changes in the structural properties of $\delta$-assertions. We can potentially study how such correlations vary with $\delta$, using % by taking advantage of 
 the fact that such comonads themselves form a category.

Finally, since strategies are operational refinements of presheaves~\cite{fossacs13} they could potentially inform Lovász-type theorems, as resource-aware refinements of the Yoneda Lemma. 
%Our framework 
Another advantage of concurrent games is that they already
accommodate extensions, %which %may lead to 
%suggest interesting new research directions %, not yet 
%to  explore, %in this context, 
 for instance %such as 
%extensions 
to games with imperfect information~\cite{Dexter}, and %other 
enrichments with probability or quantum effects~\cite{Probstrats,POPL19,concquantumstrats}, appropriate  to the study of probabilistic and quantum advantage.

 \section{Addendum: Free logic}\label{add:freelogic}
 A  winning condition for a $(\Sigma, C, V)$-game is an assertion in the free logic over $(\Sigma, C, V)$. %  with the names for elements in $|\mathcal{A}|$ adjoined. By using a free logic we can avoid indexing winning conditions by configurations of the game and allow empty universes (such as the empty configuration).   
 %Here we expand on 
 We explain the syntax and semantics of the free logic. %\footnote{Here the logic is first-order. It extends easily to second-order and signature games with second-order variables, so moves might choose subsets of a relational structure.} 
% The advantages of a free logic are that we need not restrict to nonempty universes and so allow an empty universe such as the empty configuration of a $(\Sigma, C, V)$-game.
% Moreover, we can avoid explicitly mentioning the universe at which we are considering the truth or falsity of an  assertion.  
 %An excellent 
 A good reference for free logic is Dana Scott's article~\cite{Scott-freelogic}.   
% Dana Scott, Identity and Existence. LNM 753, 1979

The free logic uses an explicit existence predicate $\EX(t)$ to specify when a term $t$ denotes something existent. The existence of elements is highly relevant when exploring relational structures with bounded resources as in pebbling games where the number of pebbles bounds the existent part of the relational structure, so what assertions it satisfies, at any stage. 

Throughout this section, assume a  \emph{$(\Sigma, C, V)$-game over a $\Sigma$-structure}  $(G, \mathcal A)$, comprising a $(\Sigma, C, V)$-game $G$ and a  $\Sigma$-structure $\mathcal A$. 
In providing the semantics to the free logic it will be convenient  to name  the elements of the $\Sigma$-structure $\A$ as terms.  
{\em Terms} of the free logic are either variables, constants or  elements of the relational structure: 
\begin{align*}
    t::=\alpha\in V\ \mid\  c\in C\ \mid\ a\in |\mathcal{A}| %\ \mid\ \al\pre d
\end{align*}
%where $\al\in V$ and $d\in V\cup C$. W.r.t.~a configuration  $\al\pre d$ will denote the latest value $\al$ is assigned before the last occurrence of $d$ in the configuration; it will be used in \S **stoppered allinone**DO WE STILL NEED $\al\pre d$ ???***.  %There are likely to be other 

\noindent
{\em Assertions} of the free logic are given by:
\begin{align*}
    \phi::=&\ R(t_1,\ldots,t_k)\ \mid\ t_1=t_2\ \mid\ \EX(t) \ \mid\ t_1\prel t_2\mid\ t_1\pre t_2\\&
    \ \mid\ \phi_1\wedge \phi_2\ \mid\ \phi_1\vee \phi_2\ \mid\ \neg\phi\ \mid\ \forall\alpha.\phi\ \mid\ \exists\alpha.\phi%\\ &
    \ \mid\ \bigwedge_{i\in I}\phi_i\ \mid\ \bigvee_{i\in I}\phi_i
\end{align*}
with $t_i$ and $t$ ranging over terms of correct sorts w.r.t.~the arity of relation symbols %$R$  in $\Sigma$  (\ie~$R\in\Sigma_{\vec s}$ and $sort(t_i)=s_i$ where $\vec s =s_1\cdots s_k$) 
and the indexing sets $I$ are countable.  
The predicate $\EX(t)$ asserts the existence of the value denoted by term $t$.
An assertion $t_1\prel t_2$ is an {\em order constraint} between terms. 
%The order constraint will 
It expresses that $t_1$ is a latest constant or assignment move on which $t_2$ depends %, according to the causal dependency of 
w.r.t.~$G$.  The {\em equality constraint}  $t_1\pre t_2$ ensures in addition that the two terms denote equal values in $\A$. 
Equality constraints are used in the ``all-in-one game'', %such as 
Example~\ref{traceinclusion}. 
%***THE CONDITION IS TO ENSURE PROP 19 BELOW***
We adopt the usual abbreviations. \eg~an implication $\phi \to \phi'$  abbreviates  $\neg 
\phi \vee \phi'$. % In turn, $\phi \leftrightarrow \phi'$ abbreviates $(\phi \to \phi')\wedge (\phi' \to \phi)$. 
%We write  $\it true$ and $\it false$   for the empty conjunction and disjunction, respectively.
%******  ***R doesnt take c's as arguments****

%***MOVE defn of size TO APPENDIX?***

The semantics of assertions  is given w.r.t.~an instantiation $(\sig,\rho)$ of $G$  in $\A$, with $\sig:S\to G$ and $\rho:\mod S_V\to \mod\A$.
%Semantics of free logic\\ **** 
Given an instantiation $(\sig,\rho)$, we define the semantics of terms by:
\begin{align*}
    &\llbracket\alpha\rrbracket_{(\sig,\rho)}=\{(x,s, a)\in\iconf{S}\times S\times \mod\A\mid  s\in\last_S(x) \ \& \\
    & \qquad \qquad \qquad \  \Var(\sig(s)) =\al \ \&\ \rho(s) = a\}\\
    &
   \llbracket c\rrbracket_{(\sig,\rho)} = 
\{(x,s,c)\in\iconf{S}\times S\times C \mid  s\in x\  \& \ \Var(\sigma(s)) = c \ \&  \\
&\qquad \qquad  \qquad
\all s’ \in x.\  \sigma(s) \leq_G \sigma(s’) \ \&\  \Var(\sigma (s’))\in C \ \&\ \\& \qquad \qquad \qquad\pol_S(s) =\pol_S(s’) 
\Rightarrow  s’= s \}\\
    %\llbracket c\rrbracket(\sig,\rho) =\{(x,c)\in\iconf{S}\times C \mid  \exists s\in\last_S(x).\  \Var(\sig(s)) = c \}\\
    &\llbracket a\rrbracket_{(\sig,\rho)} =\{(x,s,a)\in\iconf{S}\times S\times\mod\A \mid s\in \last_S(x)\ \&\ \rho(s) =a \}
     %&\llbracket\alpha\pre d \rrbracket(\sig,\rho)=\{(x,a)\in\iconf{S}\times \mod\A\mid \exists s_2\in\last_S(x).\  \Var(\sig(s_2)) = d\ \&\  \\
    %&\qquad \qquad \qquad \qquad \qquad \qquad \ \qquad \   \exists s_1\in\last_S([s_2]_S).\  \Var(\sig(s_1)) = \al \ \&\  \rho(s_1) = a\}
\end{align*}
\noindent
A term is denoted by a set of triples $(x, s, v)$ consisting of a configuration $x$, an event $s\in\last_S(x)$, at which its value $v$ becomes existent.  Notice, in particular, that a value in the algebra, $a\in\mod A$, is only regarded as existing at a configuration $x$ if there is a last  move in $x$, \viz~some $s\in\last_S(x)$, which is instantiated to $a$: accordingly, the denotation of ${\EX(a)}$, $a\in\mod\A$, below will be 
$$\den{\EX(a)}_{(\sig,\rho)}= \set{x\in\iconf S}{a\in\rho\, \last_S(x)},$$ \ie~the set of configurations at which $a$ exists.   
We have chosen to say that a constant $c$ exists in a configuration if an event bearing it has occurred there and there are no other {\em constant events of the same polarity}  which causally depend on it in that configuration; for a constant to exist in a configuration, it has to be a latest (causally maximal) constant event there %associated with a causally maximal constant event 
w.r.t.~all other constant events of the same polarity.  (This interpretation is critical in Examples~\ref{simulation},~\ref{traceinclusion}.)

Define the size of assertions as an ordinal:
    
\begin{itemize}
\item $size(R(t_1,\ldots,t_k))=%k+1,\  
        size(t_1=t_2)=%3, \   
size(\EX(t))  $
\

\hspace{7.34em} $ = size(t_1\prel t_2)= size(t_1\pre t_2)=1$; 

  \item $size(\neg\phi)=size(\exists\alpha.\phi)=size(\phi)+1$;
    \item $size(\phi_1\wedge\phi_2)=sup\{size(\phi_1),size(\phi_2)\}+1$;
    \item 
    $size(\bigwedge_{i\in I}\phi_i)=sup\{size(\phi_i)\mid i\in I\}+1$.
\end{itemize}
%Remark that for any assertion $\phi$ we have $size(\phi)=size(\phi[a/\alpha])$.

 \begin{definition}[semantics]
 Given an instantiation $(\sig,\rho)$, with $\sig:S\to G$, the semantics of an assertion
specifies the set of configurations of   $S$ which satisfy it.  
The key clauses of the semantics are defined below by induction on the %(ordinal) 
size of assertions:

   \begin{itemize}
       \item $\llbracket R(t_1,\ldots,t_k)\rrbracket_{(\sig,\rho)}=$
       \begin{align*}
        \{x\in\iconf{S}\mid &\ \exists (a_1,\cdots,a_k)\in R_\mathcal{A}, s_1, \cdots, s_k\in S.\\&
        (x,s_1,a_1)\in\llbracket t_1\rrbracket_{(\sig,\rho)}\ \& \dots\ \&\ (x,s_k,a_k)\in \llbracket t_k\rrbracket_{(\sig,\rho)}\}
        \end{align*}

   \item $\llbracket t_1=t_2\rrbracket_{(\sig,\rho)}=$
   \begin{align*}
    \{x\in\iconf{S}\mid & 
    \ \exists a\in\mod A, s_1, s_2\in S.\\& \ (x,s_1,a)\in\llbracket t_1\rrbracket_{(\sig,\rho)} \ \&\ (x,s_2,a)\in \llbracket t_2\rrbracket_{(\sig,\rho)}\}
    \end{align*}
   %\ \&\ a_1=_\mathcal{A} a_2
    
   \item$ \llbracket\EX(t)\rrbracket_{(\sig,\rho)}=$
   \begin{align*}
       \set{x\in\iconf S}{\exists v\in \mod \A\cup C, s\in S.\ (x, s, v) \in\llbracket t\rrbracket_{(\sig,\rho)}}
   \end{align*}
     
   \item $\llbracket t_1\prel t_2\rrbracket_{(\sig,\rho)}=$
    \begin{align*}
         \{x\in\iconf S \mid &\ 
          \exists 
   s_2, s_1\in S, v_1, v_2\in \mod\A\cup C.\
   \sig(s_1)<_G \sig(s_2)\ \&\  \\ 
   &
   (x, s_2, v_2) \in \den{t_2}_{(\sig,\rho)} \ \&\ 
   ([s_2]_S, s_1, v_1)  \in \den{t_1}_{(\sig,\rho)}    
   \}
    \end{align*}

      \item $\llbracket t_1\pre t_2\rrbracket_{(\sig,\rho)}=$
    \begin{align*}
         \{x\in\iconf S \mid &\ 
          \exists 
   s_2, s_1\in S, a\in \mod\A.\
   \sig(s_1)<_G \sig(s_2)\ \&\  \\ 
   &
   (x, s_2, a) \in \den{t_2}_{(\sig,\rho)} \ \&\ 
   ([s_2]_S, s_1, a)  \in \den{t_1}_{(\sig,\rho)}    
   \}
    \end{align*}
 
   \item $\llbracket\phi_1\wedge\phi_2\rrbracket_{(\sig,\rho)}=\llbracket\phi_1\rrbracket_{(\sig,\rho)}\cap\llbracket\phi_2\rrbracket_{(\sig,\rho)}$
  %\quad $\llbracket\phi_1\vee\phi_2\rrbracket_{(\sig,\rho)} =\llbracket\phi_1\rrbracket_{(\sig,\rho)}\cup\llbracket\phi_2\rrbracket_{(\sig,\rho)}$
  \item   $\llbracket\neg\phi\rrbracket_{(\sig,\rho)}=\iconf{S}\setminus\llbracket\phi\rrbracket_{(\sig,\rho)}$

  % \llbracket\forall\alpha.\phi\rrbracket_{(\sig,\rho)}&=\{x\in\mathcal{C}(S)\mid\forall a\in|\mathcal{A}|.\  sort(a)=sort(\alpha)\ \&\  x\in \den{\EX(a)}_{(\sig,\rho)} \Rightarrow x\in\llbracket\phi[a/\alpha]\rrbracket_{(\sig,\rho)}\}\\
   \item $\llbracket\exists\alpha.\phi\rrbracket_{(\sig,\rho)}=$
   \begin{align*}
      \{x\in\iconf{S}\mid&\ \exists a\in|\mathcal{A}|. sort(a)=sort(\alpha)\ \&\ \\&  x\in \den{\EX(a)}_{(\sig,\rho)} 
    \ \&\  x\in\llbracket\phi[a/\alpha]\rrbracket_{(\sig,\rho)}\} 
   \end{align*}
    
  \item $  \llbracket\bigwedge_{i\in I}\phi_i\rrbracket_{(\sig,\rho)}=\bigcap_{i\in I}\llbracket\phi_i\rrbracket_{(\sig,\rho)}$
  %\qquad
  %$  \llbracket\bigvee_{i\in I}\phi_i\rrbracket_{(\sig,\rho)}=\bigcup_{i\in I}\llbracket\phi_i\rrbracket_{(\sig,\rho)}$
  
  \end{itemize}
 %\llbracket\bigvee_{i\in I}\phi_i\rrbracket_{(\sig,\rho)}&=\bigcup_{i\in I}\llbracket\phi_i\rrbracket_{(\sig,\rho)}
%Note that $\llbracket\EX(t)\rrbracket\rho =  \llbracket t=t\rrbracket\rho$ ---a term exists precisely at those configurations at which it equals itself. The semantics above is w.r.t.~an instantiation $(\sig,\rho)$, and  w

 \end{definition}
 We write $x\models_{\sig,\rho}
\phi$ iff $x\in\llbracket\phi\rrbracket_{(\sig,\rho)}$.  %Notice that 
Note an assertion $R(t_1,\ldots, t_k)$ is only true at a configuration at which all the terms $t_1,\ldots, t_k$ denote existent elements. % of the $\Sigma$-structure. 

%\begin{comment}
\begin{acks}
Thanks to  Aurore Alcolei, 
Pierre Clairambault, 
%Adam \'O Conghaile, 
Adam Ó Conghaile,
Mai Gehrke, 
Sacha Huriot-Tattegrain and
Martin Hyland, as well as to the anonymous referees.  As part of his Cambridge internship from ENS Saclay, Sacha verified the key facts about expansions~\cite{sacha}.  Glynn Winskel was partially supported by the Huawei Research Centre, Edinburgh.
%\end{comment}
\end{acks}
 
\bibliography{biblio}
\end{document}